\setlist[description]{leftmargin=4.5em, style=nextline}
\newcommand*{\halfblacktriangleright}{}%
\DeclareRobustCommand*{\halfblacktriangleright}{%
  \mathrel{%
    \mathpalette\@halfblacktriangleright{}%
  }%
}
\newcommand*{\@halfblacktriangleright}[2]{%
  \rlap{%
    \clipbox{0 0 {.7\width} 0}{$#1\blacktriangleright\m@th$}%
  }%
  \rhd
}
\newcommand*{\halfwhitetriangleright}{}%
\DeclareRobustCommand*{\halfwhitetriangleright}{%
  \mathbin{%
    \mathpalette\@halfwhitetriangleright{}%
  }%
}
\newcommand*{\@halfwhitetriangleright}[2]{%
  \rlap{%
    \clipbox{0 0 {.5\width} 0}{$#1\rhd\m@th$}%
  }%
  \blacktriangleright
}
  \DeclareSymbolFont{stix@largesymbols}{LS2}{stixex}{m}{n}
  \DeclareMathDelimiter{\lBrace}{\mathopen} {stix@largesymbols}{"E8}%
                                            {stix@largesymbols}{"0E}
  \DeclareMathDelimiter{\rBrace}{\mathclose}{stix@largesymbols}{"E9}%
                                            {stix@largesymbols}{"0F}
\newcommand{\LS}{\ensuremath{\mathcal{L}}}
\newcommand{\True}{\ensuremath{\mathsf{true}}}
\newcommand{\False}{\ensuremath{\mathsf{false}}}
\newcommand{\reducesto}{\ensuremath{\leadsto}}
\newcommand{\reducestocl}{\ensuremath{\rightsquigarrow}}
\newcommand{\breducesto}{\ensuremath{\reducesto_\beta}}
\newcommand{\ereducesto}{\ensuremath{\reducesto_\eta}}
\newcommand{\breducestocl}{\ensuremath{\reducestocl_\beta}}
\newcommand{\ureducesto}{\ensuremath{\reducesto_\Upcast}}
\newcommand{\ureducestocl}{\ensuremath{\reducestocl_\Upcast}}
\newcommand{\bureducesto}{\ensuremath{\reducesto_{\beta\Upcast}}}
\newcommand{\bureducestocl}{\ensuremath{\reducestocl_{\beta\Upcast}}}
\newcommand{\unewreducesto}{\ensuremath{\reducesto_\Upcastnew}}
\newcommand{\unewreducestocl}{\ensuremath{\reducestocl_\Upcastnew}}
\newcommand{\uunewreducesto}{\ensuremath{\reducesto_{\Upcast\Upcastnew}}}
\newcommand{\uunewreducestocl}{\ensuremath{\reducestocl_{\Upcast\Upcastnew}}}
\newcommand{\rd}[1]{\ensuremath{\reducesto_{#1}}}
\newcommand{\rdcl}[1]{\ensuremath{\reducestocl_{#1}}}
\newcommand{\vreducesto}{\rd{\nu}}
\newcommand{\vreducestocl}{\rdcl{\nu}}
\newcommand{\treducesto}{\rd{\tau}}
\newcommand{\treducestocl}{\rdcl{\tau}}
\newcommand{\slab}[1]{}
\newcommand{\rulelabel}[2]{\ensuremath{\mathsf{#1\textrm{-}#2}}}
\newcommand{\klab}[1]{\rulelabel{K}{#1}}
\newcommand{\tylab}[1]{\rulelabel{T}{#1}}
\newcommand{\ssublab}[1]{\rulelabel{S}{#1}}
\newcommand{\fsublab}[1]{\rulelabel{FS}{#1}}
\newcommand{\betalab}[1]{\rulelabel{\beta}{#1}}
\newcommand{\etalab}[1]{\rulelabel{\eta}{#1}}
\newcommand{\upcastlab}[1]{\rulelabel{\Upcast\!\!}{#1}}
\newcommand{\upcastnewlab}[1]{\rulelabel{\Upcastnew\!\!}{#1}}
\newcommand{\tbetalab}[1]{\rulelabel{\tau}{#1}}
\newcommand{\tlab}[1]{\rulelabel{\tau}{#1}}
\newcommand{\vlab}[1]{\rulelabel{\nu}{#1}}
\newcommand{\CatName}[1]{\mathsf{#1}}
\newcommand{\Terms}{\CatName{Term}}
\newcommand{\Types}{\CatName{Type}}
\newcommand{\TyScheme}{\CatName{TypeScheme}}
\newcommand{\Envs}{\CatName{Env}}
\newcommand{\TyEnvs}{\CatName{TyEnv}}
\newcommand{\Labels}{\CatName{Label}}
\newcommand{\Kinds}{\CatName{Kind}}
\newcommand{\Rows}{\CatName{Row}}
\newcommand{\Presences}{\CatName{Presence}}
\newcommand{\RowVars}{\CatName{RowVar}}
\newcommand{\Deriv}{\CatName{Derivation}}
\newcommand{\Belong}[2]{#2 \ni #1}
\newcommand{\ba}{\begin{array}}
\newcommand{\ea}{\end{array}}
\newenvironment{syntax}{\[\ba{@{}l@{}r@{~}c@{~}l@{}}}{\ea\]\ignorespacesafterend}
\newenvironment{reductions}{\[\ba{@{}l@{\qquad}@{}r@{~~}c@{~~}l@{}}}{\ea\]\ignorespacesafterend}
\newenvironment{equations}{\[\ba{@{}r@{~}c@{~}l@{}}}{\ea\]\ignorespacesafterend}
\newcommand{\transl}[1]{\ensuremath{\llbracket #1 \rrbracket}}
\newcommand{\transpre}[2]{\ensuremath{\llparenthesis #1, #2 \rrparenthesis}}
\newcommand{\transla}[1]{\ensuremath{\llbracket #1 \rrbracket}}
\newcommand{\translb}[1]{\ensuremath{\llbracket #1 \rrbracket^\ast}}
\newcommand{\transrowa}[2]{\ensuremath{\llparenthesis #1, #2 \rrparenthesis}}
\newcommand{\transrowb}[2]{\ensuremath{\llparenthesis #1, #2 \rrparenthesis^\ast}}
\newcommand{\transrowsub}[1]{\ensuremath{\llparenthesis #1 \rrparenthesis}}
\newcommand{\keyw}[1]{\mathbf{#1}}
\newcommand{\typ}[2]{#1 \vdash #2}
\newcommand{\dec}[1]{\ensuremath{\mathsf{#1}}}
\newcommand{\ftv}[1]{\mathsf{ftv}(#1)}
\newcommand{\var}[1]{\mathtt{#1}}
\newcommand{\Int}{\mathsf{Int}}
\newcommand{\String}{\mathsf{String}}
\newcommand{\Case}{\keyw{case}}
\newcommand{\Let}{\keyw{let}}
\newcommand{\In}{\keyw{in}}
\newcommand{\Variant}[1]{\ensuremath{[ #1 ]}}
\newcommand{\Record}[1]{\ensuremath{\langle #1 \rangle}}
\newcommand{\Variantonly}{\text{\scalebox{1}[.8]{$[]$}}}
\newcommand{\Recordonly}{\text{\scalebox{1}[.8]{$\langle\rangle$}}}
\newcommand{\Upcast}{\ensuremath{\vartriangleright}}
\newcommand{\Upcastnew}{\ensuremath{\blacktriangleright}}
\newcommand{\subtype}{\ensuremath{\leqslant}}
\newcommand{\subtyperow}{\ensuremath{\preccurlyeq}}
\newcommand{\equivsub}{\ensuremath{\sqsubseteq}}
\newcommand{\Present}{\ensuremath{\bullet}}
\newcommand{\Absent}{\ensuremath{\circ}}
\newcommand{\Presence}{\ensuremath{\mathsf{Pre}}}
\newcommand{\Type}{\ensuremath{\mathsf{Type}}}
\newcommand{\Row}{\ensuremath{\mathsf{Row}}}
\newcommand{\labn}[1]{\dec{#1}}
\newcommand{\Name}{\labn{Name}}
\newcommand{\Age}{\labn{Age}}
\newcommand{\Year}{\labn{Year}}
\newcommand{\Child}{\labn{Child}}
\newcommand{\Raw}{\labn{Raw}}
\newcommand{\makestring}[1]{\text{\textquotedbl #1\textquotedbl}}
\newcommand{\xmark}{\text{\ding{55}}}
\newenvironment{examples}{
  \fleqn
  \begin{equations}
  \ba{rll}
}{
  \ea
  \end{equations}
  \endfleqn
}
\newcommand{\localtermarr}[2]{
  \begin{tikzcd}[cramped, ampersand replacement=\&, column sep=small]
    #1
    \arrow[draw={rgb,255:red,92;green,92;blue,214}, r, two heads]
    \& #2
  \end{tikzcd}
}
\newcommand{\localtypearr}[2]{
  \begin{tikzcd}[cramped, ampersand replacement=\&, column sep=small]
    #1
    \arrow[r]
    \& #2
  \end{tikzcd}
}
\newcommand{\globaltypearr}[2]{
  \begin{tikzcd}[cramped, ampersand replacement=\&, column sep=small]
    #1
    \arrow[draw={rgb,255:red,214;green,92;blue,214}, r, maps to]
    \& #2
  \end{tikzcd}
}
\newcommand{\nonexistarr}[2]{
  \begin{tikzcd}[cramped, ampersand replacement=\&, column sep=scriptsize]
    #1
    \arrow[draw={rgb,255:red,214;green,92;blue,92}, r, squiggly]
    \& #2
  \end{tikzcd}
}
    \newcommand{\colorboxed}[3][white]{\fcolorbox{#2}{#1}{\m@th$\displaystyle#3$}}
\newcommand{\smod}{\mathrm}
\newcommand{\stlc}[2]{\ensuremath{\lambda_{#1}^{#2}}\xspace}
\newcommand{\STLC}{\stlc{}{}}
\newcommand{\STLCVar}{\stlc{\Variantonly}{}}
\newcommand{\STLCRec}{\stlc{\Recordonly}{}}
\newcommand{\STLCVarSub}{\stlc{\Variantonly}{\subtype}}
\newcommand{\STLCVarSubFull}{\stlc{\Variantonly}{\subtype \smod{full}}}
\newcommand{\STLCVarSubFullCond}[1]{\stlc{\Variantonly #1}{\subtype \smod{full}}}
\newcommand{\STLCVarSubCo}{\stlc{\Variantonly}{\subtype \smod{co}}}
\newcommand{\STLCVarRow}{\stlc{\Variantonly}{\rho}}
\newcommand{\STLCVarPre}{\stlc{\Variantonly}{\theta}}
\newcommand{\STLCVarRowPre}{\stlc{\Variantonly}{\rho\theta}}
\newcommand{\STLCRecSub}{\stlc{\Recordonly}{\subtype}}
\newcommand{\STLCRecSubFull}{\stlc{\Recordonly}{\subtype \smod{full}}}
\newcommand{\STLCRecSubFullA}{\stlc{\Recordonly}{\subtype \smod{afull}}}
\newcommand{\STLCRecSubFullCond}[1]{\stlc{\Recordonly #1}{\subtype \smod{full}}}
\newcommand{\STLCRecSubFullACond}[1]{\stlc{\Recordonly #1}{\subtype \smod{afull}}}
\newcommand{\STLCRecSubCo}{\stlc{\Recordonly}{\subtype \smod{co}}}
\newcommand{\STLCRecPre}{\stlc{\Recordonly}{\theta}}
\newcommand{\STLCRecRow}{\stlc{\Recordonly}{\rho}}
\newcommand{\STLCRecRowPre}{\stlc{\Recordonly}{\rho\theta}}
\newcommand{\STLCVarRecSubFull}{\stlc{\Variantonly\Recordonly}{\subtype \smod{full}}}
\newcommand{\STLCVarRecSubCo}{\stlc{\Variantonly\Recordonly}{\subtype \smod{co}}}
\newcommand{\STLCVarRec}{\stlc{\Variantonly\Recordonly}{}}
\newcommand{\STLCRecPrenex}{\stlc{\Recordonly}{\rho 1}}
\newcommand{\STLCRecPrePrenex}{\stlc{\Recordonly}{\theta 1}}
\newcommand{\STLCVarPrenex}{\stlc{\Variantonly}{\rho 1}}
\newcommand{\STLCVarPrePrenex}{\stlc{\Variantonly}{\theta 1}}
\newcommand{\ol}[1]{\overline{#1}}  %
\newcommand{\hk}[1]{@\,#1} %
\newcommand{\que}{?}
\newcommand{\dom}[1]{\mathsf{dom}(#1)}
\newcommand{\recrank}[2]{\mho^{#1}(#2)}
\newcommand{\varrank}[2]{\Omega^{#1}(#2)}
\newcommand{\nocontratwice}[1]{\mho^2(#1)}
\newcommand{\nocontraonce}[1]{\mho^1(#1)}
\newcommand{\erase}[1]{\mathsf{erase}(#1)}
\colorlet{varcolor}{lightgray!50}
\colorlet{reccolor}{violet}
\newcommand{\hlvar}[1]{\colorbox{varcolor}{$#1$}_{\Variant{}}}
\newcommand{\hlrec}[1]{\colorbox{varcolor}{$#1$}_{\Record{}}}
\newcommand{\hlvarrec}[1]{\colorbox{varcolor}{$#1$}_{\Variant{}\Record{}}}
\newcommand{\hlvsub}[1]{\colorbox{varcolor}{$#1$}_{\Variant{}}}
\newcommand{\hlrsub}[1]{\colorbox{varcolor}{$#1$}_{\Record{}}}
\newcommand{\hlvrsub}[1]{\colorbox{varcolor}{$#1$}_{\Variant{}\Record{}}}
\newcommand{\hlmod}[1]{\colorbox{varcolor}{$#1$}}
\newcommand{\refa}[1]{{\color{blue}   (1){#1}}}
\newcommand{\refb}[1]{{\color{red}    (2){#1}}}
\newcommand{\refc}[1]{{\color{orange} (3){#1}}}
\newcommand{\refd}[1]{{\color{purple} (4){#1}}}
  \providecommand\BibTeX{{%
      \normalfont B\kern-0.5em{\scshape i\kern-0.25em b}\kern-0.8em\TeX}}}
\begin{document}

\title{Structural Subtyping as Parametric Polymorphism}

\author{Wenhao Tang}
\orcid{0009-0000-6589-3821}
\affiliation{%
  \institution{The University of Edinburgh}
  \country{UK}}
\email{wenhao.tang@ed.ac.uk}

\author{Daniel Hillerstr{\"o}m}
\orcid{0000-0003-4730-9315}
\affiliation{%
  \institution{Huawei Zurich Research Center}
  \country{Switzerland}}
\email{daniel.hillerstrom@ed.ac.uk}

\author{James McKinna}
\orcid{0000-0001-6745-2560}
\affiliation{%
  \institution{Heriot-Watt University}
  \country{UK}}
\email{j.mckinna@hw.ac.uk}

\author{Michel Steuwer}
\orcid{0000-0001-5048-0741}
\affiliation{%
  \institution{Technische Universität Berlin}
  \country{Germany}}
\affiliation{%
  \institution{The University of Edinburgh}
  \country{UK}}
\email{michel.steuwer@tu-berlin.de}

\author{Ornela Dardha}
\orcid{0000-0001-9927-7875}
\affiliation{%
  \institution{University of Glasgow}
  \country{UK}}
\email{ornela.dardha@glasgow.ac.uk}

\author{Rongxiao Fu}
\orcid{0009-0005-6966-4037}
\affiliation{%
  \institution{The University of Edinburgh}
  \country{UK}}
\email{s1742701@sms.ed.ac.uk}

\author{Sam Lindley}
\orcid{0000-0002-1360-4714}
\affiliation{%
  \institution{The University of Edinburgh}
  \country{UK}}
\email{sam.lindley@ed.ac.uk}

\renewcommand{\shortauthors}{W. Tang, D. Hillerstr{\"o}m, J. McKinna, M. Steuwer, O. Dardha, R. Fu, and S. Lindley}

\begin{abstract}
Structural subtyping and parametric polymorphism provide similar
flexibility and reusability to programmers. For example, both features enable
the programmer to provide a wider record as an argument to a function
that expects a narrower one.
However, the means by which they do so differs substantially, and the
precise details of the relationship between them exists, at best, as
folklore in literature.

In this paper, we systematically study the relative expressive power
of structural subtyping and parametric polymorphism.
We focus our investigation on establishing the extent to which
parametric polymorphism, in the form of row and presence polymorphism,
can encode structural subtyping for variant and record types.
We base our study on various Church-style $\lambda$-calculi extended
with records and variants, different forms of structural subtyping,
and row and presence polymorphism.

We characterise expressiveness by exhibiting compositional
translations between calculi. For each translation we prove a type
preservation and operational correspondence result.
We also prove a number of non-existence results.
By imposing restrictions on both source and target types, we reveal
further subtleties in the expressiveness landscape, the restrictions
enabling otherwise impossible translations to be defined.
More specifically, we prove that full subtyping cannot be encoded via
polymorphism, but we show that several restricted forms of subtyping
can be encoded via particular forms of polymorphism.

\end{abstract}

\begin{CCSXML}
<ccs2012>
   <concept>
       <concept_id>10003752.10003790.10011740</concept_id>
       <concept_desc>Theory of computation~Type theory</concept_desc>
       <concept_significance>500</concept_significance>
       </concept>
   <concept>
       <concept_id>10011007.10011006.10011008.10011024.10011025</concept_id>
       <concept_desc>Software and its engineering~Polymorphism</concept_desc>
       <concept_significance>500</concept_significance>
       </concept>
   <concept>
       <concept_id>10011007.10011006.10011008.10011009.10011012</concept_id>
       <concept_desc>Software and its engineering~Functional languages</concept_desc>
       <concept_significance>500</concept_significance>
       </concept>
 </ccs2012>
\end{CCSXML}

\ccsdesc[500]{Theory of computation~Type theory}
\ccsdesc[500]{Software and its engineering~Polymorphism}
\ccsdesc[500]{Software and its engineering~Functional languages}

\keywords{row types, subtyping, polymorphism, expressiveness}

\maketitle

\section{Introduction}
\label{sec:introduction}

Subtyping and parametric polymorphism offer two distinct means for
writing modular and reusable code. Subtyping
allows one value to be substituted for another provided that the type
of the former is a subtype of that of the
latter~\citep{Reynolds80,Cardelli88}. Parametric polymorphism allows
functions to be defined generically over arbitrary
types~\citep{Girard72,Reynolds74}.

There are two main approaches to \emph{syntactic} subtyping: nominal
subtyping~\citep{BirtwistleDMN79} and structural
subtyping~\citep{CardelliW85,Cardelli88,Cardelli84}. The former
defines a subtyping relation as a collection of explicit constraints
between named types. The latter defines a subtyping relation
inductively over the structure of types.
This paper is concerned with the latter.
For programming languages with variant types (constructor-labelled
sums) and record types (field-labelled products) it is natural
to define a notion of structural subtyping.
We may always treat a variant with a collection of constructors as a
variant with an \emph{extended} collection of constructors (i.e.,
variant subtyping is covariant).
Dually, we may treat a record with a collection of fields as a record
with a \emph{restricted} collection of those fields (i.e., record
subtyping is contravariant).

We can implement similar functionality to record and variant subtyping
using \emph{row polymorphism}~\citep{Wand87, Remy94}.
A \emph{row} is a mapping from labels to types and is thus a common
ingredient for defining both variants and records.
Row polymorphism is a form of parametric polymorphism that allows us
to abstract over the extension of a row.
Intuitively, by abstracting over the possible extension of a variant
or record we can simulate the act of substitution realised by
structural subtyping.
Such intuitions are folklore, but pinning them down turns out to be
surprisingly subtle.
In this paper we make them precise by way of translations between a
series of different core calculi enjoying type preservation and
operational correspondence results as well as non-existence results.
We show that though folklore intuitions are to some extent correct,
exactly how they manifest in practice is remarkably dependent on what
assumptions we make, and much more nuanced than we anticipated.
We believe that our results are not just of theoretical interest.
It is important to carefully analyse and characterise the relative
expressive power of different but related features to understand the
extent to which they overlap.

To be clear, there is plenty of other work that hinges on inducing a
subtyping relation based on generalisation (i.e., polymorphism) --- and
indeed this is the basis for principal types in Hindley-Milner type
inference --- but this paper is about something quite different,
namely encoding prior notions of structural subtyping using
polymorphism. In short, principal types concern polymorphism as
subtyping, whereas this paper concerns subtyping as polymorphism.

In order to distil the features we are interested in down to their
essence and eliminate the interference on the expressive power of
other language features (such as higher-order store), we take plain
\citeauthor{Church40}-style call-by-name simply-typed
$\lambda$-calculus (\STLC) as our starting point and consider the
relative expressive power of minimal extensions in turn.
We begin by insisting on writing explicit upcasts, type abstractions,
and type applications in order to expose structural subtyping and
parametric polymorphism at the term level.
Later we also consider ML-style calculi, enabling new expressiveness
results by exploiting the type inference for rank-1 polymorphism.
For the dynamic semantics, we focus on the reduction theory generated
from the $\beta$-rules, adding further $\beta$-rules for each term
constructor and upcast rules for witnessing subtyping.

First we extend the simply-typed $\lambda$-calculus with variants
(\STLCVar), which we then further augment with \emph{simple subtyping}
(\STLCVarSub) that only considers the subtyping relation shallowly on
variant and record constructors (width subtyping), and (higher-rank)
row polymorphism (\STLCVarRow), respectively.
Dually, we extend the simply-typed $\lambda$-calculus with records
(\STLCRec), which we then further augment with simple subtyping
(\STLCRecSub) and (higher-rank) \emph{presence polymorphism}
(\STLCRecPre), respectively.
Presence polymorphism~\citep{Remy94} is a kind of dual to row
polymorphism that allows us to abstract over which fields are present
or absent from a record independently of their potential types,
supporting a restriction of a collection of record fields, similarly
to record subtyping.
We then consider richer extensions with strictly covariant subtyping
(\STLCVarSubCo, \STLCRecSubCo), which propagates the subtyping relation
through strictly covariant positions, and full subtyping
(\STLCVarSubFull, \STLCRecSubFull), which propagates the subtyping
relation through any positions.
We also consider target languages with both row and presence
polymorphism (\STLCVarRowPre, \STLCRecRowPre).
Our initial investigations make essential use of higher-rank
polymorphism.
Subsequently, we consider ML-like calculi with rank-1 row or presence
polymorphism (\STLCRecPrenex, \STLCRecPrePrenex, \STLCVarPrenex,
\STLCVarPrePrenex), which admit Hindley-Milner type inference
\citep{DamasMilner} without requirements of type annotations or
explicit type abstractions and applications.
The focus on rank-1 polymorphism demands a similar restriction to the
calculi with subtyping (\STLCRecSubFullCond{1},
\STLCRecSubFullCond{2}, \STLCVarSubFullCond{1},
\STLCVarSubFullCond{2}), which constrains the positions where records
and variants can appear in types.

In this paper, we will consider only correspondences expressed as
\emph{compositional translations} inductively defined on language
constructs following \citet{Felleisen91}.
In order to give a refined characterisation of expressiveness and
usability of the type systems of different calculi, we make use of two
orthogonal notions of \emph{local} and \emph{type-only} translations.
\begin{itemize}
  \item A \emph{local} translation restricts which features are
    translated in a non-trivial way. It provides non-trivial
    translations only of constructs of interest (e.g., record types,
    record construction and destruction, when considering record
    subtyping), and is homomorphic on other constructs; a
    \emph{global} translation may allow any construct to have a
    non-trivial translation.
  \item A \emph{type-only} translation restricts which features a
    translation can use in the target language. Every term must
    translate to itself modulo constructs that serve only to
    manipulate types (e.g., type abstraction and application); a
    \emph{term-involved} translation has no such restriction.
\end{itemize}
Local translations capture the intuition that a feature can be
expressed locally as a macro rather than having to be implemented by
globally changing an entire program~\cite{Felleisen91}.
Type-only translations capture the intuition that a feature can be
expressed solely by adding or removing type manipulation operations
(such as upcasts, type abstraction, and type application) in terms,
thereby enabling a more precise comparison between the expressiveness
of different type system features.

This paper gives a \emph{precise account of the relationship between
subtyping and polymorphism for records and variants}.
We present relative expressiveness results by way of a series of
translations between calculi, type preservation proofs,
operational correspondence proofs, and non-existence proofs.
\begin{figure}

\newcommand{\sect}{\S}

\[\begin{tikzcd}
	& \STLCVarPre && {\STLCVarSubFullCond{2}} & \STLCVarPrePrenex & {} &&& {} \\
	\STLCVar & \STLCVarSub & \STLCVarSubCo & \STLCVarSubFull & \STLCVarRowPre & {} &&& {} \\
	& \STLCVarRow && {\STLCVarSubFullCond{1}} & \STLCVarPrenex & {} &&& {} \\[-1.35em]
	\STLC \\[-1.35em]
	& \STLCRecPre && {\STLCRecSubFullCond{1}} & \STLCRecPrePrenex & {} &&& {} \\
	\STLCRec & \STLCRecSub & \STLCRecSubCo & \STLCRecSubFull & \STLCRecRowPre & {} &&& {} \\
	& \STLCRecRow && {\STLCRecSubFullCond{2}} & \STLCRecPrenex & {} &&& {}
	\arrow[dotted, no head, from=2-1, to=2-2]
	\arrow[curve={height=6pt}, dotted, no head, from=2-1, to=3-2]
	\arrow["{\sect\ref{sec:encode-stlcvarsub-stlcvarrow}}"', from=2-2, to=3-2]
	\arrow["{\sect\ref{sec:encode-stlcvarsub-stlcvar}}", shift left=2, draw={rgb,255:red,92;green,92;blue,214}, two heads, from=2-2, to=2-1]
	\arrow[dotted, no head, from=6-1, to=6-2]
	\arrow[curve={height=-6pt}, dotted, no head, from=6-1, to=5-2]
	\arrow["{\sect\ref{sec:encode-stlcrecsub-stlcrec}}"', shift right=2, draw={rgb,255:red,92;green,92;blue,214}, two heads, from=6-2, to=6-1]
	\arrow["{\sect\ref{sec:encode-stlcrecsub-stlcrecpre}}", from=6-2, to=5-2]
	\arrow[curve={height=6pt}, dotted, no head, from=6-1, to=7-2]
	\arrow["{\sect\ref{sec:swapping-row-and-pre}}"', draw={rgb,255:red,214;green,92;blue,92}, squiggly, from=6-2, to=7-2]
	\arrow["{\sect\ref{sec:swapping-row-and-pre}}", draw={rgb,255:red,214;green,92;blue,92}, squiggly, from=2-2, to=1-2]
	\arrow[curve={height=-6pt}, dotted, no head, from=2-1, to=1-2]
	\arrow[shift right=1, curve={height=6pt}, dotted, no head, from=7-2, to=6-5]
	\arrow["{\sect\ref{sec:non-full-subtyping}}", draw={rgb,255:red,214;green,92;blue,92}, squiggly, from=6-4, to=6-5]
	\arrow[shift left=1, curve={height=-6pt}, dotted, no head, from=1-2, to=2-5]
	\arrow[shift right=1, curve={height=6pt}, dotted, no head, from=3-2, to=2-5]
	\arrow["{\sect\ref{sec:encode-co-subtyping}}"', draw={rgb,255:red,214;green,92;blue,214}, maps to, from=6-3, to=5-2]
	\arrow[dotted, no head, from=6-2, to=6-3]
	\arrow[dotted, no head, from=6-3, to=6-4]
	\arrow[dotted, no head, from=2-2, to=2-3]
	\arrow[dotted, no head, from=2-3, to=2-4]
	\arrow[shift left=1, curve={height=-6pt}, dotted, no head, from=5-2, to=6-5]
	\arrow["{\sect\ref{sec:encode-full-subtyping}}"', shift left=1, draw={rgb,255:red,92;green,92;blue,214}, curve={height=18pt}, two heads, from=2-4, to=2-1]
	\arrow["{\sect\ref{sec:encode-full-subtyping}}", shift right=1, draw={rgb,255:red,92;green,92;blue,214}, curve={height=-18pt}, two heads, from=6-4, to=6-1]
	\arrow["{\sect \ref{sec:non-stlcvarsubco-stlcvarrowpre}}"'{pos=0.4}, shift left=1, draw={rgb,255:red,214;green,92;blue,92}, curve={height=12pt}, squiggly, from=2-3, to=2-5]
	\arrow[dotted, no head, from=4-1, to=6-1]
	\arrow[dotted, no head, from=4-1, to=2-1]
	\arrow["{\text{extension}}", dotted, no head, from=1-6, to=1-9]
	\arrow["{\text{local type-only}}", from=3-6, to=3-9]
	\arrow["{\text{local term-involved}}", color={rgb,255:red,92;green,92;blue,214}, two heads, from=5-6, to=5-9]
	\arrow["{\text{global type-only}}", color={rgb,255:red,214;green,92;blue,214}, maps to, from=6-6, to=6-9]
	\arrow["{\text{non-existence of type-only}}", color={rgb,255:red,214;green,92;blue,92}, squiggly, from=7-6, to=7-9]
	\arrow["{\sect\ref{sec:prenex-polymorphism}}"', from=7-4, to=7-5]
	\arrow[draw={rgb,255:red,89;green,89;blue,89}, dashed, no head, from=6-4, to=7-4]
	\arrow["{\sect\ref{sec:prenex-polymorphism}}", from=5-4, to=5-5]
	\arrow[draw={rgb,255:red,89;green,89;blue,89}, dashed, no head, from=6-4, to=5-4]
	\arrow["{\sect\ref{sec:prenex-polymorphism}}", from=1-4, to=1-5]
	\arrow[draw={rgb,255:red,89;green,89;blue,89}, dashed, no head, from=2-4, to=1-4]
	\arrow["{\sect\ref{sec:prenex-polymorphism}}"', from=3-4, to=3-5]
	\arrow[draw={rgb,255:red,89;green,89;blue,89}, dashed, no head, from=2-4, to=3-4]
	\arrow[shift left=2, draw={rgb,255:red,89;green,89;blue,89}, curve={height=-12pt}, dashed, no head, from=5-2, to=5-5]
	\arrow[shift right=2, draw={rgb,255:red,89;green,89;blue,89}, curve={height=12pt}, dashed, no head, from=7-2, to=7-5]
	\arrow[shift left=2, draw={rgb,255:red,89;green,89;blue,89}, curve={height=-12pt}, dashed, no head, from=1-2, to=1-5]
	\arrow[shift right=2, draw={rgb,255:red,89;green,89;blue,89}, curve={height=12pt}, dashed, no head, from=3-2, to=3-5]
	\arrow["{\text{\textcolor{black}{restriction}}}", color={rgb,255:red,89;green,89;blue,89}, dashed, no head, from=2-6, to=2-9]
\end{tikzcd}\]

{\footnotesize
  Extensions and restrictions go from calculi with
  shorter names to those with longer names\\(e.g. $\STLCVar$ extends
  $\STLC$ and $\STLCVarPrePrenex$ restricts $\STLCVarPre$).
}

  \caption{Overview of translations and non-existence results covered
    in the paper.}
  \label{fig:results-summary}
\end{figure}
The main contributions of the paper (summarised in
\Cref{fig:results-summary}) are as follows.

\begin{itemize}
  \item We present a collection of examples in order to convey the
    intuition behind all translations and non-existence results in
    \Cref{fig:results-summary} (\Cref{sec:examples}).
  \item We define a family of Church-style calculi extending
    $\lambda$-calculus with variants and records, simple subtyping,
    and (higher-rank) row or presence polymorphism
    (Section~\ref{sec:calculi}).
  \item
    We prove that simple subtyping can be elaborated away for variants
    and records by way of local term-involved translations
    (Sections~\ref{sec:encode-stlcvarsub-stlcvar}~and~\ref{sec:encode-stlcrecsub-stlcrec}).
  \item
    We prove that simple subtyping can be expressed as row
    polymorphism for variants and presence polymorphism for records by
    way of local type-only translations
    (Sections~\ref{sec:encode-stlcvarsub-stlcvarrow}~and~\ref{sec:encode-stlcrecsub-stlcrecpre}).
  \item
    We prove that there exists no type-only translation of simple
    subtyping into presence polymorphism for variants or row
    polymorphism for records (\Cref{sec:swapping-row-and-pre}).
  \item
    We expand our study to calculi with covariant and full subtyping
    and with both row- and presence-polymorphism, covering further
    translations and non-existence proofs
    (Section~\ref{sec:full-subtyping}). In so doing we reveal a
    fundamental asymmetry between variants and records.
  \item
    We prove that if we suitably restrict types and switch to
    ML-style target calculi with implicit rank-1 polymorphism, then we
    can exploit type inference to encode full subtyping for records
    and variants using either row polymorphism or presence
    polymorphism (\Cref{sec:prenex-polymorphism}).

  \item
    For each translation we prove type preservation and operational
    correspondence results.
\end{itemize}
Sections~\ref{sec:record-extension}~and~\ref{sec:combine-sub-and-poly}
discuss extensions. %
Section~\ref{sec:related-work} discusses related work.
Section~\ref{sec:conclusion} concludes.

\section{Examples}
\label{sec:examples}

To illustrate the relative expressive power of subtyping and
polymorphism for variants and records with a range of extensions, we
give a collection of examples. These cover the intuition behind the
translations and non-existence results summarised in
\Cref{fig:results-summary} and formalised later in the paper.

\subsection{Simple Variant Subtyping as Row Polymorphism}
\label{sec:example-simple-variant-subtyping}
We begin with variant types.
Consider the following function.
\begin{examples}
  &\var{getAge} = \lambda x^{\Variant{\Age:\Int; \Year:\Int}} .~
    \Case\ x\ \{\Age\ y \mapsto y; \Year\ y \mapsto 2023-y\} \\
\end{examples}
The variant type $\Variant{\Age:\Int; \Year:\Int}$ denotes the type of
variants with two constructors $\Age$ and $\Year$ each containing an
$\Int$.
We cannot directly apply $\var{getAge}$ to the following variant
\begin{examples}
  &\var{year} = (\Year\ 1984)^{\Variant{\Year:\Int}} \\
\end{examples}
as $\var{year}$ and $x$ have different types.
With simple variant subtyping (\STLCVarSub) which considers subtyping
shallowly on variants, we can upcast $\var{year} :
\Variant{\Year:\Int}$ to the supertype $\Variant{\Age:\Int;
\Year:\Int}$ which has more labels.
This makes intuitive sense, as it is always safe to treat a variant
with fewer constructors ($\Year$ in this case) as one with more
constructors ($\Age$ and $\Year$ in this case).

\begin{examples}
  &\var{getAge}\ (\var{year} \Upcast \Variant{\Age:\Int; \Year:\Int}) \\
\end{examples}
One advantage of subtyping is reusability: by upcasting we can apply
the same $\var{getAge}$ function to any value whose type is a subtype
of $\Variant{\Age:\Int; \Year:\Int}$.

\begin{examples}
  &\var{age} = (\Age\ 9)^{\Variant{\Age:\Int}} \\
  &\var{getAge}\ (\var{age} \Upcast \Variant{\Age:\Int; \Year:\Int}) \\
\end{examples}

In a language without subtyping (\STLCVar), we can simulate applying
$\var{getAge}$ to $\var{year}$ by first deconstructing the variant
using $\Case$ and then reconstructing it at the appropriate type --- a
kind of generalised $\eta$-expansion on variants.
\begin{examples}
  &\var{getAge}\ (\Case\ \var{year}\
                  \{\Year\ y \mapsto (\Year\ y)^{\Variant{\Age:\Int; \Year:\Int}}\})
\end{examples}
This is the essence of the translation
\localtermarr{\STLCVarSub}{\STLCVar} in
\cref{sec:encode-stlcvarsub-stlcvar}.
The translation is \emph{local} in the sense that it only requires us
to transform the parts of the program that relate to variants (as
opposed to the entire program).
However, it still comes at a cost. The deconstruction and
reconstruction of variants adds extra computation that was not present
in the original program.

Can we achieve the same expressive power of subtyping without
non-trivial term de- and re-construction?
Yes we can! Row polymorphism (\STLCVarRow) allows us to rewrite
$\var{year}$ with a type compatible (via row-variable substitution)
with any variant type containing $\Year:\Int$ and additional cases.
\footnote{We omit the kinds of row variables for
simplicity. They can be easily reconstructed from the contexts.}
\begin{examples}
  &\var{year'} = \Lambda \rho.~(\Year\ 1984)^{\Variant{\Year:\Int;\rho}} \\
\end{examples}
As before, the translation to $\var{year'}$ also adds new term syntax.
However, the only additional syntax required by this translation
involves type abstraction and type application; in other words the
program is unchanged up to type erasure.
Thus we categorise it as a \emph{type-only} translation as opposed to
the previous one which we say is \emph{term-involved}.
We can instantiate $\rho$ with $(\Age:\Int)$ when applying
$\var{getAge}$ to it.
The parameter type of $\var{getAge}$ must also be translated to a
row-polymorphic type, which requires higher-rank
polymorphism. Moreover, we re-abstract over $\var{year'}$ after
instantiation to make it polymorphic again.
\begin{examples}
  &\var{getAge'} = \lambda x^{\forall \rho. \Variant{\Age:\Int; \Year:\Int; \rho}} .~
    \Case\ (x\ \cdot)\ \{\Age\ y \mapsto y; \Year\ y \mapsto 2023-y\} \\
  &\var{getAge'}\ (\Lambda \rho.~\var{year'}\ (\Age:\Int; \rho))
\end{examples}
The type application $x\ \cdot$ instantiates $\rho$ with the empty
closed row type $\cdot$.
The above function application is well-typed because we ignore the
order of labels when comparing rows ($\Age:\Int;\Year:\Int;\rho \equiv
\Year:\Int;\Age:\Int;\rho$) as usual.
This is the essence of the local type-only translation
\localtypearr{\STLCVarSub}{\STLCVarRow} in
\cref{sec:encode-stlcvarsub-stlcvarrow}.

We are relying on higher-rank polymorphism here in order to simulate
upcasting on demand.
For instance, an upcast on the parameter of a function of type
$(\forall \rho.\Variant{\Age:\Int; \Year:\Int; \rho}) \to B$ is
simulated by instantiating $\rho$ appropriately.
We will show in \Cref{sec:example-prenex} that restricting the target
language to rank-1 polymorphism requires certain constraints on the
source language.

\subsection{Simple Record Subtyping as Presence Polymorphism}
\label{sec:example-simple-record-subtyping}
Now, we consider record types, through the following function.
\begin{examples}
  &\var{getName} = \lambda x^{\Record{\Name:\String}}.~(x.\Name)
\end{examples}
The record type $\Record{\Name:\String}$ denotes the type of records
with a single field $\Name$ containing a string. We cannot directly
apply $\var{getName}$ to the following record
\begin{examples}
  &\var{alice} = \Record{\Name = \makestring{Alice}; \Age = 9}
\end{examples}
as the types of $\var{alice}$ and $x$ do not match. With simple record
subtyping (\STLCRecSub), we can upcast $\var{alice} :
\Record{\Name:\String;\Age:\Int}$ to the supertype
$\Record{\Name:\String}$.
It is intuitive to treat a record with more fields ($\Name$ and
$\Age$) as a record with fewer fields (only $\Name$ in this case).
\begin{examples}
  &\var{getName}\ (\var{alice}\Upcast \Record{\Name:\String})
\end{examples}
Similarly to variant subtyping, we can reuse $\var{getName}$ on
records of different subtypes.

\begin{examples}
  &\var{bob} = \Record{\Name = \makestring{Bob}; \Year = 1984} \\
  &\var{getName}\ (\var{bob}\Upcast \Record{\Name:\String})
\end{examples}

In a language without subtyping (\STLCRec), we can first deconstruct
the record by projection and then reconstruct it with only the
required fields, similarly to the generalised $\eta$-expansion of
records.
\begin{examples}
  &\var{getName}\ \Record{\Name = \var{alice}.\Name}
\end{examples}
This is the essence of the local term-involved translation
\localtermarr{\STLCRecSub}{\STLCRec} in
\cref{sec:encode-stlcrecsub-stlcrec}.
Using presence polymorphism (\STLCRecPre), we can simulate
$\var{alice}$ using a type-only translation.
\begin{examples}
  &\var{alice'} = \Lambda \theta_1 \theta_2 .~
                 \Record{\Name = \makestring{Alice}; \Age = 9}^{
                    \Record{\Name^{\theta_1}:\String; \Age^{\theta_2}:\Int}} \\
\end{examples}
The presence variables $\theta_1$ and $\theta_2$ can be substituted
with a marker indicating that the label is either present $\Present$
or absent $\Absent$. We can instantiate $\theta_2$ with absent
$\Absent$ when applying $\var{getName}$ to it, ignoring the $\Age$
label.
This resolves the type mismatch as the equivalence relation on row
types considers only present labels ($\Name^\theta:\String ~\equiv~
\Name^\theta:\String; \Age^\Absent:\Int$).
For a general translation, we must make the parameter type of
$\var{getName}$ presence-polymorphic, and re-abstract over
$\var{alice'}$.
\begin{examples}
  &\var{getName'} = \lambda x^{\forall \theta.\Record{\Name^\theta:\String}}.~((x\ \Present).\Name) \\
  &\var{getName'}\ (\Lambda \theta .~ \var{alice'}\ \theta\ \Absent)
\end{examples}
This is the essence of the local type-only translation
\localtypearr{\STLCRecSub}{\STLCRecPre} in
\cref{sec:encode-stlcrecsub-stlcrecpre}.
The duality between variants and records is reflected by the need for
dual kinds of polymorphism, namely row and presence
polymorphism, which can extend or shrink rows, respectively.

\subsection{Exploiting Contravariance}
\label{sec:example-swapping}

We have now seen how to encode simple variant subtyping as row
polymorphism and simple record subtyping as presence
polymorphism. These encodings embody the intuition that row
polymorphism supports extending rows and presence polymorphism
supports shrinking rows.
However, presence polymorphism is typically treated as an optional
extra for row typing. For instance, \citet{Remy94} uses row
polymorphism for both record and variant types, and introduces
presence polymorphism only to support record extension and default
cases (which fall outside the scope of our current investigation).

This naturally raises the question of whether we can encode simple
record subtyping using row polymorphism alone.
More generally, given the duality between records and variants, can we
swap the forms of polymorphism used by the above translations?

Though row polymorphism enables extending rows and what upcasting does
on record types is to remove labels, we can simulate the same
behaviour by extending record types that appear in contravariant
positions in a type.
The duality between row and presence polymorphism can be reconciled by
way of the duality between covariant and contravariant positions.

Let us revisit our $\var{getName}\ \var{alice}$ example, which we
previously encoded using polymorphism.
With row polymorphism (\STLCRecRow), we can give the function a row
polymorphic type where the row variable appears in the record type of
the function parameter.
\begin{examples}
  &\var{getName}_\xmark = \Lambda \rho. \lambda x^{\Record{\Name:\String;\rho}}.~(x.\Name) \\
\end{examples}
Now in order to apply $\var{getName}_\xmark$ to $\var{alice}$, we
simply instantiate $\rho$ with $(\Age:\Int)$.
\begin{examples}
  &\var{getName}_\xmark\ (\Age:\Int)\ \var{alice}
\end{examples}

Though the above example suggests a translation which only introduces
type abstractions and type applications, the idea does not extend to a
general composable translation.
Intuitively, the main problem is that in general we cannot know which
type should be used for instantiation ($\Age:\Int$ in this case) in
a compositional type-only translation, which is only allowed to use
the type of $\var{getName}$ and $\var{alice}\Upcast
\Record{\Name:\String}$. These tell us nothing about $\Age:\Int$.

In fact, a much stronger result holds. In
\Cref{sec:swapping-row-and-pre}, we prove that there exists no
type-only encoding of simple record subtyping into row polymorphism
(\nonexistarr{\STLCRecSub}{\STLCRecRow}), and dually for variant types
with presence polymorphism (\nonexistarr{\STLCVarSub}{\STLCVarPre}).

\subsection{Full Subtyping as Rank-1 Polymorphism}
\label{sec:example-prenex}

The kind of translation sought in \Cref{sec:example-swapping} cannot
be type-only, as it would require us to know the type used for
instantiation.
A natural question is whether type inference can provide the type.

In order to support decidable, sound, and complete type inference, we
consider a target calculus with rank-1 polymorphism (\STLCRecPrenex)
and Hindley-Milner type inference.
Now the $\var{getName}\ \var{alice}$ example type checks without an
explicit upcast or type application.
\footnote{Actually, the principal type of $\var{getName}$ should be $\forall
\alpha\,\rho. \Record{\Name:\alpha;\rho}\to\alpha$. We ignore value type
variables for simplicity.}
\begin{examples}
  &\var{getName} = \lambda x.~ (x.\Name)
    &: \forall \rho. \Record{\Name:\String;\rho}\to\String \\
  &\var{alice}\phantom{xx} = \Record{\Name = \makestring{Alice}; \Age = 9}
    &: \Record{\Name:\String; \Age:\Int} \\
  &\var{getName}\ \var{alice}
    &: \String
\end{examples}
Type inference automatically infers a polymorphic type for
$\var{getName}$, and instantiates the variable $\rho$ with
$\Age:\Int$.
This observation hints to us that we might encode terms with explicit
record upcasts in \STLCRecPrenex by simply erasing all upcasts (and
type annotations, given that we have type inference).
The global nature of erasure implies that it also works for full
subtyping (\STLCRecSubFull) which lifts the width subtyping of rows to
any type by propagating the subtyping relation to the components of
type constructors. For instance, the following function upcast using
full subtyping is also translated into $\var{getName}\ \var{alice}$,
simply by erasing the upcast.
\begin{examples}
  &(\var{getName}\Upcast (\Record{\Name:\String;\Age:\Int}\to\String))\ \var{alice}\\
\end{examples}

Thus far, the erasure translation appears to work well even for full
subtyping. Does it have any limitations? Yes, we must restrict the
target language to rank-1 polymorphism, which can only generalise
let-bound terms.
The type check would fail if we were to bind $\var{getName}$ via
$\lambda$-abstraction and then use it at different record types. For
instance, consider the following function which concatenates two names
using the $+\!\!+$ operator and is applied to $\var{getName}$.
\begin{examples}
  &(\lambda f^{\Record{\Name:\String}\to\String} .~
    f\ (\var{alice}\Upcast\Record{\Name:\String})
    +\!\!+\
    f\ (\var{bob}\Upcast\Record{\Name:\String}))\ \var{getName}
\end{examples}
The erasure of it is
\begin{examples}
  &(\lambda f.~
    f\ \var{alice}
    +\!\!+\
    f\ \var{bob})\ \var{getName}
\end{examples}
which is not well-typed as $f$ can only have a monomorphic function
type, whose parameter type cannot unify with both
$\Record{\Name:\String;\Age:\Int}$ and
$\Record{\Name:\String;\Year:\Int}$.

In order to avoid such problems, we will define an erasure translation
on a restricted subcalculus of \STLCRecSubFull.
The key idea is to give row-polymorphic types for record manipulation
functions such as $\var{getName}$.
However, the above function
takes a record manipulation function of type
$\Record{\Name:\String}\to\String$ as a parameter, which cannot be
polymorphic as we only have rank-1 polymorphism.
Inspired by the notion of rank-$n$ polymorphism, we say that a type
has \emph{rank-$n$ records}, if
no path from the root of the type (seen as an abstract syntax tree)
to a record type passes to the left of $n$ or more arrows.
We define the translation only on the subcalculus
\STLCRecSubFullCond{2} of \STLCRecSubFull in which all types have
rank-2 records.

Such an erasure translation underlies the local type-only translation
\localtypearr{\STLCRecSubFullCond{2}}{\STLCRecPrenex}.

We obtain a similar result for presence polymorphism.
With presence polymorphism, we can make all records
presence-polymorphic (similar to the translation in
\Cref{sec:example-simple-record-subtyping}), instead of making all
record manipulation functions row-polymorphic.
For instance, we can infer the following types for the
$\var{getName}\ \var{alice}$ example.
\begin{examples}
  &\var{getName} = \lambda x.~ (x.\Name)
    &: \Record{\Name:\String}\to\String \\
  &\var{alice}\phantom{xx} = \Record{\Name = \makestring{Alice}; \Age = 9}
    &: \forall \theta_1\theta_2. \Record{\Name^{\theta_1}:\String; \Age^{\theta_2}:\Int} \\
  &\var{getName}\ \var{alice}
    &: \String
\end{examples}
Consequently, records should appear only in positions that can be
generalised with rank-1 polymorphism, which can be ensured by
restricting \STLCRecSubFull to the subcalculus \STLCRecSubFullCond{1} in
which all types have rank-1 records.
We give a local type-only translation:
\localtypearr{\STLCRecSubFullCond{1}}{\STLCRecPrePrenex}.

For variants, we can also define the notion of \emph{rank-$n$
  variants} similarly.
Dually to records, we can either make all variants be row-polymorphic
(similar to the translation in
\Cref{sec:example-simple-variant-subtyping}) and require types to have
rank-1 variants (\STLCVarSubFullCond{1}), or make all variant
manipulation functions be presence-polymorphic and require types to
have rank-2 variants (\STLCVarSubFullCond{2}).
For instance, we can make the $\var{getAge}$ function
presence-polymorphic.
\begin{examples}
  &\var{getAge} = \lambda x.~
    \Case\ x\ \{\Age\ y \mapsto y; \Year\ y \mapsto 2023-y\}
    : \forall \theta_1\theta_2. {\Variant{\Age^{\theta_1}\!:\Int; \Year^{\theta_2}\!:\Int}}
      \to \Int \\
  &\var{year}\phantom{xx} = \Year\ 1984 : \Variant{\Age:\Int} \\
  &\var{getAge}\ \var{year}
\end{examples}
We give two type-only encodings of full variant subtyping:
\localtypearr{\STLCVarSubFullCond{1}}{\STLCVarPrenex}
and \localtypearr{\STLCVarSubFullCond{2}}{\STLCVarPrePrenex}.
\Cref{sec:prenex-polymorphism} discusses in detail the four erasure
translations from full subtyping to rank-1 polymorphism with type
inference.

\subsection{Strictly Covariant Record Subtyping as Presence Polymorphism}
\label{sec:example-co-record-subtyping}

The encodings of full subtyping discussed in \Cref{sec:example-prenex}
impose restrictions on types in the source language and rely heavily
on type-inference.
We now consider to what extent we can support a richer form of
subtyping than simple subtyping, if we turn our attention to target
calculi with higher-rank polymorphism and no type inference.

One complication of extending simple subtyping to full subtyping is
that if we permit propagation through contravariant positions, then
the subtyping order is reversed.
To avoid this scenario, we first consider \emph{strictly covariant
  subtyping} relation derived by only propagating simple subtyping
through strictly covariant positions (i.e. never to the left of any
arrow).
For example, the upcast $\var{getName}\Upcast
(\Record{\Name:\String;\Age:\Int}\to\String)$ in
\Cref{sec:example-prenex} is ruled out.
We write \STLCRecSubCo for our calculus with strictly covariant record
subtyping.

Consider the function $\var{getChildName}$ returning the name of the
child of a person.
\begin{examples}
  &\var{getChildName} = \lambda x^{\Record{\Child:\Record{\Name:\String}}}.~
                        \var{getName}\ (x.\Child)
\end{examples}
We can apply $\var{getChildName}$ to $\var{carol}$ who has a daughter
$\var{alice}$ with the strictly covariant subtyping relation
$\Record{\Name:\String;\Child:\Record{\Name\!:\!\String;\Age\!:\!\Int}}
\subtype \Record{\Child:\Record{\Name\!:\!\String}}$.
\begin{examples}
  &\var{carol} = \Record{\Name=\makestring{Carol}; \Child=\var{alice}} \\
  &\var{getChildName}\ (\var{carol}\Upcast \Record{\Child:\Record{\Name:\String}})
\end{examples}

If we work in a language without subtyping (\STLCRec), we can still
use $\eta$-expansions instead, by nested deconstruction and
reconstruction.
\begin{examples}
  &\var{getChildName}\ \Record{\Child=\Record{\Name=\var{carol}.\Child.\Name}}
\end{examples}
In general, we can simulate the full subtyping
(not only strictly covariant subtyping)
of both records and
variants using this technique. The nested de- and
re-construction can be reformulated into coercion functions to be more
compositional \citep{BREAZUTANNEN1991172}. In
\Cref{sec:encode-full-subtyping}, we show the standard local term-involved
translation \localtermarr{\STLCVarRecSubFull}{\STLCVarRec} formalising
this idea.

However, for type-only encodings, the idea of making every record
presence-polymorphic in \Cref{sec:example-simple-record-subtyping}
does not work directly. Following that idea, we would translate
$\var{carol}$ to
\begin{examples}
  &\var{carol}_\xmark = \Lambda\theta_1'\theta_2'.~ \Record{
    \dots;
    \Child=\var{alice'}
  }^{\Record{\Name^{\theta_1'}:\String; \Child^{\theta_2'}:
        \forall\theta_1\theta_2. \Record{\Name^{\theta_1}:\String; \Age^{\theta_2}:\Int}}} \\
\end{examples}
Then, as $\theta_1$ and $\theta_2$ are abstracted inside a record, we
cannot directly instantiate $\theta_2$ with $\Absent$ to remove the
$\Age$ label without deconstructing the outer record.
However, we can tweak the translation by moving the quantifiers
$\forall \theta_1\theta_2$ to the top-level through introducing new
type abstraction and type application, which gives rise to a
translation that is type-only but global.
\begin{examples}
  &\var{carol'} = \Lambda \theta_1\theta_2\theta_3\theta_4 .
    \Record{\dots; %
      \Child=\var{alice'}\ \theta_3\ \theta_4}^{
      \Record{\Name^{\theta_1}:\String; \Child^{\theta_2}:
              \Record{\Name^{\theta_3}:\String; \Age^{\theta_4}:\Int}}} \\
\end{examples}
Now we can remove the $\Name$ of $\var{carol'}$ and $\Age$ of $\var{alice'}$
by instantiating $\theta_1$ and $\theta_4$ with $\Absent$. As for simple subtyping,
we make the parameter type of $\var{getChildName}$ polymorphic, and re-abstract
over $\var{carol'}$.
\begin{examples}
  &\var{getChildName'} = \lambda x^{
    \forall\theta_1\theta_2 . \Record{\Child^{\theta_1}:\Record{\Name^{\theta_2}:\String}}}.~
                        \var{getName}\ ((x\ \Present\ \Present).\Child) \\
  &\var{getChildName'}\ (\Lambda\theta_1\theta_2.~ \var{carol'}\ \Absent\ \theta_1\ \theta_2\ \Absent)
\end{examples}
This is the essence of the global type-only translation
\globaltypearr{\STLCRecSubCo}{\STLCRecPre} in
\Cref{sec:encode-co-subtyping}.

\subsection{No Type-Only Encoding of Strictly Covariant Variant Subtyping as Polymorphism}
\label{sec:example-co-variant-subtyping}

We now consider whether we could exploit hoisting of quantifiers in
order to encode strictly covariant subtyping for variants
(\STLCVarSubCo) using row polymorphism.
Interestingly, we will see that this cannot work, thus breaking the
symmetry between the results for records and variants we have seen so
far.
To understand why, consider the following example involving nested
variants.
\begin{examples}
  &\var{data} = (\Raw\ \var{year})^{\Variant{\Raw:\Variant{\Year:\Int}}} \\
  &\var{data}\Upcast {\Variant{\Raw:\Variant{\Year:\Int;\Age:\Int}}}
\end{examples}
Following the idea of moving quantifiers, we can translate
$\var{data}$ to use a polymorphic variant, and the upcast can then be
simulated by instantiation and re-abstraction.
\begin{examples}
  &\var{data}_\xmark = \Lambda \rho_1\rho_2.~
    (\Raw\ (\var{year'}\ \rho_2))^{\Variant{\Raw:\Variant{\Year:\Int;\rho_2};\rho_1}} \\
  &\Lambda \rho_1\rho_2.~ \var{data}_\xmark\ \rho_1\ (\Age:\Int;\rho_2)
\end{examples}

So far, the translation appears to have worked.
However, it breaks down when we consider the case split on a nested
variant.
For instance, consider the following function.
\begin{examples}
  &\var{parseAge} = \lambda x^{\Variant{\Raw:\Variant{\Year:\Int}}}.~
    \Case\ x\ \{\Raw\ y \mapsto \var{getAge}\ (y\Upcast\Variant{\Age:\Int;\Year:\Int})\} \\
  &\var{parseAge}\ \var{data}
\end{examples}
Using an upcast and $\var{getAge}$ from
\Cref{sec:example-simple-variant-subtyping} in the case clause, 
it accepts the nested variant $\var{data}$.

The difficulty with encoding $\var{parseAge}$ with row polymorphism is
that the abstraction of the row variable for the inner record of
$\var{data_\xmark}$ is hoisted up to the top-level, but case split
requires a monomorphic value. Thus, we must instantiate $\rho_2$ with
$\Age:\Int$ \emph{before} performing the case split.
\begin{examples}
  &\var{parseAge}_\xmark = \lambda x^{\forall\rho_1\rho_2.
    \Variant{\Raw:\Variant{\Year:\Int;\rho_2};\rho_1}}.~
    \Case\ (x\ \cdot\ (\Age:\Int))\
    \{\Raw\ y \mapsto \var{getAge}\ y\} \\
  &\var{parseAge}_\xmark\ \var{data}_\xmark
\end{examples}
However, this would not yield a compositional type-only translation,
as the translation of the $\Case$ construct only has access to the
types of $x$ and the whole case clause, which provide no information
about $\Age:\Int$.
Moreover, even if the translation could somehow access this type
information, the translation would still fail if there were multiple
incompatible upcasts of $y$ in the case clause.
\begin{examples}
  &\Case\ x\ \{\Raw\ y \mapsto \dots y\Upcast\Variant{\Age:\Int;\Year:\Int} \dots
  y\Upcast\Variant{\Age:\String;\Year:\Int}\}
\end{examples}
The first upcast requires $\rho_2$ to be instantiated with $\Age:\Int$
but the second requires it to be instantiated with the incompatible
$\Age:\String$.
The situation is no better if we add presence polymorphism. In
\Cref{sec:non-stlcvarsubco-stlcvarrowpre}, we prove that there exists
no type-only encoding of strictly covariant variant subtyping into row and
presence polymorphism (\nonexistarr{\STLCVarSubCo}{\STLCVarRowPre}).

\subsection{No Type-Only Encoding of Full Record Subtyping as Polymorphism}
\label{sec:example-full-record-subtyping}

For variants, we have just seen that a type-only encoding of full
subtyping does not exist, even if we restrict propagation of simple
subtyping to strictly covariant positions.
For records, we have seen how to encode strictly covariant subtyping
with presence polymorphism by hoisting quantifiers to the top-level.
We now consider whether we could somehow lift the strictly covariance
restriction and encode full record subtyping with polymorphism.

The idea of hoisting quantifiers does not work arbitrarily, exactly
because we cannot hoist quantifiers through contravariant positions.
Moreover, presence polymorphism alone cannot extend rows. Consider the
full subtyping example $\var{getName}\Upcast
(\Record{\Name:\String;\Age:\Int}\to\String)$ from
\Cref{sec:example-prenex}. The $\var{getName}$ function is translated
to the $\var{getName'}$ function in
\Cref{sec:example-simple-record-subtyping}, which provides no way to
extend the parameter record type with $\Age:\Int$.
\begin{examples}
  &\var{getName'} = \lambda x^{\forall \theta.\Record{\Name^\theta:\String}}.~((x\ \Present).\Name) \\
\end{examples}
A tempting idea is to add row polymorphism:
\begin{examples}
  &\var{getName'}_\xmark =
    \Lambda\rho.\lambda x^{\forall \theta.\Record{\Name^\theta:\String;\rho}}.~((x\ \Present).\Name) \\
\end{examples}
Now we can instantiate $\rho$ with $\Age:\Int$ to simulate the upcast.
However, this still does not work.
One issue is that we have no way to remove the labels introduced by
the row variable $\rho$ in the function body, as $x$ is only
polymorphic in $\theta$.
For instance, consider the following upcast of the function $\var{getUnit}$
which replaces the function body of $\var{getName}$ with an upcast of $x$.
\begin{examples}
  &\var{getUnit} = \lambda x^{\Record{\Name:\String}}. (x\Upcast\Record{}) \\
  &\var{getUnit}\Upcast (\Record{\Name:\String;\Age:\Int}\to\Record{})
\end{examples}
Following the above idea, $\var{getUnit}$ is translated to
\begin{examples}
  &\var{getUnit}_\xmark =
    \Lambda\rho.\lambda x^{\forall\theta.\Record{\Name^\theta:\String;\rho}}. x\ \Absent \\
\end{examples}
Then, in the translation of the upcast of $\var{getUnit}$, the row
variable $\rho$ is expected to be instantiated with a row containing
$\Age:\Int$.
However, we cannot remove $\Age:\Int$ again in the translation of the
function body, meaning that the upcast inside $\var{getUnit}$ cannot
yield an empty record.

\Cref{sec:non-full-subtyping} expands on the discussion here and
proves that there exists no type-only translation of unrestricted full
record subtyping into row and presence polymorphism
(\nonexistarr{\STLCRecSubFull}{\STLCRecRowPre}).

\section{Calculi}
\label{sec:calculi}
The foundation for our exploration of relative expressive power of
subtyping and parametric polymorphism is \citeauthor{Church40}'s simply-typed
$\lambda$-calculus~\cite{Church40}.
We extend
it with variants and records, respectively. We further extend the
variant calculus twice: first with simple structural subtyping and
then with row polymorphism. Similarly, we also extend the record
calculus twice: first with structural subtyping and then with presence
polymorphism. In Section~\ref{sec:full-subtyping}~and~
\ref{sec:prenex-polymorphism}, we explore further extensions with
strictly covariant subtyping, full subtyping and rank-1 polymorphism.

\subsection[A Simply-Typed Base Calculus]{A Simply-Typed Base Calculus \STLC}
\label{sec:stlc}

\begin{figure}[htbp]
  \flushleft
  \textbf{Syntax}

  \begin{minipage}{0.45\textwidth}
    \begin{syntax}
      \slab{Kinds}    &   \Belong{K}{\Kinds}       &::= & \Type
        \mid \hlvarrec{\Row_\LS} \\
      \slab{Types}    &   \Belong{A,B}{\Types}     &::=& \alpha \mid A \to B\\
        & & \mid & \hlvar{\Variant{R}} \mid \hlrec{\Record{R}} \\
      \slab{Type~Environments} & \Belong{\Delta}{\TyEnvs} &::=& \cdot \mid \Delta, \alpha\\
      \slab{Term~Environments} & \Belong{\Gamma}{\Envs} &::=& \cdot \mid \Gamma, x : A\\
      \end{syntax}
  \end{minipage}
  \begin{minipage}{0.6\textwidth}
    \begin{syntax}
      \slab{Rows}     &\Belong{R}{\Rows}     &::=& \hlvarrec{\cdot \mid \ell:A; R} \\
      \slab{Terms}    &\Belong{M,N}{\Terms}  &::=& x \mid \lambda x^A.M \mid M\,N \\
        & &\mid & \hlvar{(\ell\,M)^A \mid \Case~M~\{\ell_i~x_i \mapsto N_i\}_i} \\
        & &\mid & \hlrec{\Record{\ell_i=M_i}_i \mid M.\ell}\\
      \text{$\hlvarrec{\Labels \supseteq \LS \ni \ell}$} \span\span\span \\
    \end{syntax}
  \end{minipage}
  \textbf{Static Semantics} \medskip\\
  \fbox{$\typ{\Delta}{A : K}$}
  \def \MathparLineskip {\lineskip=3pt} %
  \begin{mathpar}
    \inferrule*[Lab=\klab{Base}]
      {~}
      {\typ{\Delta, \alpha}{\alpha : \Type}}

    \inferrule*[Lab=\klab{Arrow}]
      {\typ{\Delta}{A : \Type} \\
      \typ{\Delta}{B : \Type}}
      {\typ{\Delta}{A \to B : \Type}}

    \hlvarrec{
    \inferrule*[Lab=\klab{EmptyRow}]
      {~}
      {\typ{\Delta}{\cdot : \Row_\LS}}
    }

    \hlvarrec{
    \inferrule*[Lab=\klab{ExtendRow}]
      {\typ{\Delta}{A : \Type} \\
       \typ{\Delta}{R : \Row_{\LS \uplus \{\ell\}}}}
      {\typ{\Delta}{\ell : A; R : \Row_{\LS}}}
    }

    \hlvar{
    \inferrule*[Lab=\klab{Variant}]
      {\typ{\Delta}{R : \Row_{\emptyset}}}
      {\typ{\Delta}{\Variant{R} : \Type}}
    }

    \hlrec{
    \inferrule*[Lab=\klab{Record}]
      {\typ{\Delta}{R : \Row_{\emptyset}}}
      {\typ{\Delta}{\Record{R} : \Type}}
    }
  \end{mathpar}

  \fbox{$\typ{\Delta;\Gamma}{M : A}$}
  \begin{mathpar}
    \inferrule*[Lab=\tylab{Var}]
        {~}
        {\typ{\Delta;\Gamma, x : A}{x : A}}

    \inferrule*[Lab=\tylab{Lam}]
        {\typ{\Delta;\Gamma, x : A}{M : B}}
        {\typ{\Delta;\Gamma}{\lambda x^A. M : A \to B}}

    \inferrule*[Lab=\tylab{App}]
        {\typ{\Delta;\Gamma}{M : A \to B} \\\\ \typ{\Delta;\Gamma}{N : A}}
        {\typ{\Delta;\Gamma}{M\,N : B}}

    \hlvar{
    \inferrule*[Lab=\tylab{Inject}]
        {(\ell : A) \in R \\\\ \typ{\Delta;\Gamma}{M : A}}
        {\typ{\Delta;\Gamma}{(\ell\,M)^{\Variant{R}} : \Variant{R}}}
    }

    \hlvar{
    \inferrule*[Lab=\tylab{Case}]
        {\typ{\Delta;\Gamma}{M : \Variant{\ell_i:A_i}_i} \\\\
         [\typ{\Delta;\Gamma,x_i : A_i}{N_i : B}]_i}
        {\typ{\Delta;\Gamma}{\Case~M~\{\ell_i~x_i \mapsto N_i\}_i : B}}
    }

    \hlrec{
    \inferrule*[Lab=\tylab{Record}]
        {[\typ{\Delta;\Gamma}{M_i : A_i}]_i}
        {\typ{\Delta;\Gamma}{\Record{\ell_i=M_i}_i : \Record{\ell_i : A_i}_i}}
    }

    \hlrec{
    \inferrule*[Lab=\tylab{Project}]
        {\typ{\Delta;\Gamma}{M : \Record{R} \\\\ (\ell : A) \in R}}
        {\typ{\Delta;\Gamma}{M.\ell : A}}
    }
  \end{mathpar}
  \textbf{Dynamic Semantics}
  \begin{reductions}
    \colorbox{white}{\betalab{Lam}}  &(\lambda x^A.M)\,N &\breducesto& M[N/x]\\
    \hlvar{\betalab{Case}}  & \Case~(\ell_j\,M)^A~\{\ell_i~x_i \mapsto N_i\}_i &\breducesto& N_j[M/x_j]\\
    \hlrec{\betalab{Project}}  & \Record{(\ell_i=M_i)_i}.\ell_j &\breducesto& M_j\\
  \end{reductions}
  \caption{Syntax, static semantics, and dynamic semantics of \STLC
  (unhighlighted parts), and its extensions with variants \STLCVar
  (highlighted parts with $\Variant{}$ subscript), and records \STLCRec
  (highlighted parts with $\Record{}$ subscript).}
  \label{fig:stlc-var-rec}
\end{figure}
Our base calculus is a Church-style simply typed $\lambda$-calculus,
which we denote $\STLC$. Figure~\ref{fig:stlc-var-rec} shows the
syntax, static semantics, and dynamic semantics of it.
The calculus features one kind ($\Type$) to classify well-formed
types. We will enrich the structure of kinds in the subsequent
sections when we add rows (e.g. Sections~\ref{sec:stlcvar} and
\ref{sec:stlcrec}).
The syntactic category of types includes abstract base types ($\alpha$) and
the function types ($A \to B$), which classify functions with domain
$A$ and codomain $B$.
The terms consist of variables ($x$), $\lambda$-abstraction ($\lambda
x^A.M$) binding variable $x$ of type $A$ in term $M$, and application
($M\,N$) of $M$ to $N$.
We track base types in a type environment ($\Delta$) and the type of
variables in a term environment ($\Gamma$). We treat environments as
unordered mappings.
The static and dynamic semantics are standard.
We implicitly require type annotations in terms to be well-kinded,
e.g., $\typ{\Delta;\Gamma}{\lambda x^A.M : A\to B}$ requires
$\Delta\vdash A$.

\subsection[A Calculus with Variants]{A Calculus with Variants \STLCVar}
\label{sec:stlcvar}

\newcommand{\allLS}{\LS_\omega}

$\STLCVar$ is the extension of $\STLC$ with variants.
\Cref{fig:stlc-var-rec} incorporates the extensions to the syntax,
static semantics, and dynamic semantics.
\emph{Rows} are the basis for variants (and later records).
We assume a countably infinite set of labels $\allLS$.
Given a finite set of labels $\LS$, a row of kind $\Row_{\LS}$ denotes
a partial mapping from the cofinite set $(\allLS \mathbin{\backslash}
{\LS})$ of all labels except those in $\LS$ to types.
We say that a row of kind $\Row_\emptyset$ is \emph{complete}.
A variant type ($\Variant{R}$) is given by a complete row $R$. A row
is written as a sequence of pairs of labels and types. We often omit
the leading $\cdot$, writing e.g. $\ell_1:A_1, \dots, \ell_n:A_n$ or
$(\ell_i:A_i)_i$ when $n$ is clear from context.
We identify rows up to reordering of labels.
Injection $(\ell\,M)^A$ introduces a term of variant type by tagging
the payload $M$ with $\ell$, whose resulting type is $A$. A
case split ($\Case~M~\{\ell_i~x_i \mapsto N_i\}_i$) eliminates an $M$ by
matching against the tags $\ell_i$. A successful match on $\ell_i$
binds the payload of $M$ to $x_i$ in $N_i$.
The kinding rules ensure that rows contain no duplicate labels.
The typing rules for injections and case splits and the $\beta$-rule
for variants are standard.

\subsection[A Calculus with Variants and Structural Subtyping]{A Calculus with Variants and Structural Subtyping $\STLCVarSub$}
\label{sec:stlcvarsub}
\begin{figure}[t]
\flushleft
\textbf{Syntax}
\vspace{-\the\baselineskip}
\begin{syntax}
\slab{Terms}    &   \Belong{M}{\Terms}     &::=& \ldots \mid \hlvrsub{M\Upcast A}\\
\end{syntax}
\textbf{Static Semantics} \medskip\\
\def \MathparLineskip {\lineskip=5pt}
\begin{minipage}[t]{0.55\textwidth}
\fbox{$A \subtype A'$}
\begin{mathpar}
  \hlvsub{
  \inferrule*[Lab=\ssublab{Variant}]
            { \dom{R} \subseteq \dom{R'} \\ R'|_{\dom{R}} = R }
            {\Variant{R} \subtype \Variant{R'}}
  } \\

  \hlrsub{
  \inferrule*[Lab=\ssublab{Record}]
            {\dom{R'}\subseteq \dom{R} \\ R|_{\dom{R'}} = R'}
            {\Record{R} \subtype \Record{R'}}
  }
\end{mathpar}
\end{minipage}
\begin{minipage}[t]{0.4\textwidth}
\fbox{$\typ{\Delta;\Gamma}{M : A}$}
\begin{mathpar}
\hlvrsub{
 \inferrule*[Lab=\tylab{Upcast}]
            {\typ{\Delta;\Gamma}{M:A} \\ A \subtype B}
            {\typ{\Delta;\Gamma}{M \Upcast B : B}}
}
\end{mathpar}
\end{minipage}

\textbf{Dynamic Semantics}
\begin{reductions}
  \hlvsub{\upcastlab{Variant}} & (\ell\,M)^A \Upcast B &\ureducesto& (\ell\,M)^B \\
  \hlrsub{\upcastlab{Record}} & \Record{\ell_i=M_{\ell_i}}_i \Upcast \Record{\ell_j':A_j}_j &\ureducesto& \Record{\ell'_j=M_{\ell'_j}}_j \\
\end{reductions}
\caption{Extensions of \STLCVar with simple subtyping \STLCVarSub
(highlighted parts with $\Variant{}$ subscript), and extensions of
\STLCRec with simple subtyping \STLCRecSub (highlighted parts with
$\Record{}$ subscript).}
\label{fig:stlc-var-rec-sub}
\end{figure}
$\STLCVarSub$ is the extension of $\STLCVar$ with simple structural
subtyping. Figure~\ref{fig:stlc-var-rec-sub} shows the extensions to
syntax, static semantics, and dynamic semantics.

\paragraph{Syntax}
The explicit upcast operator ($M \Upcast A$) coerces $M$ to type $A$.

\paragraph{Static Semantics}
The $\ssublab{Variant}$ rule asserts that variant $\Variant{R}$ is a
subtype of variant $\Variant{R'}$ if row $R'$ contains at least the
same label-type pairs as row $R$. We write $\dom{R}$ for the domain of
row $R$ (i.e. its labels), and $R|_{\mathcal{L}}$ for the restriction
of $R$ to the label set $\mathcal{L}$.
The $\tylab{Upcast}$ rule enables the upcast $M \Upcast B$ if the term
$M$ has type $A$ and $A$ is a subtype of $B$.

\paragraph{Dynamic Semantics}
The $\upcastlab{Variant}$ reduction rule coerces an injection
$(\ell\,M)$ of type $A$ to a larger (variant) type $B$. We distinguish
upcast rules from $\beta$ rules writing instead $\ureducesto$
for the reduction relation. Correspondingly, we write $\ureducestocl$
for the compatible closure of $\ureducesto$.

\subsection[A Calculus with Row Polymorphic Variants]{A Calculus with Row Polymorphic Variants \STLCVarRow}
\label{sec:stlcvarrow}

\begin{figure}
\flushleft
\textbf{Syntax}

\vspace{-\the\baselineskip}
\begin{minipage}{0.6\textwidth}
  \begin{syntax}
    \slab{Types} &\Belong{A}{\Types}     &::=& \ldots \mid \forall \rho^K.A \\
    \slab{Rows}  &\Belong{R}{\Rows} &::=& \ldots \mid \rho\\
  \end{syntax}
\end{minipage}
\begin{minipage}{0.4\textwidth}
  \begin{syntax}
    \slab{Terms} &\Belong{M}{\Terms}     &::=& \ldots \mid \Lambda \rho^K . M \mid M\,R\\
    \slab{Type~Environments}&\Belong{\Delta}{\TyEnvs} &::=& \ldots \mid \Delta, \rho : K \\
  \end{syntax}
\end{minipage}

\textbf{Static Semantics}\medskip\\
\def \MathparLineskip {\lineskip=8pt}
\begin{minipage}[t]{0.45\textwidth}
\fbox{$\typ{\Delta\phantom{;\!}}{A : K}$}
\begin{mathpar}
  \inferrule*[Lab=\klab{RowVar}]
    {~}
    {\typ{\Delta, \rho : \Row_\LS}{\rho : \Row_\LS}} \\

  \inferrule*[Lab=\klab{RowAll}]
    {\typ{\Delta,\rho : \Row_\LS}{A : \Type}}
    {\typ{\Delta}{\forall \rho^{\Row_\LS} .  A : \Type}}
\end{mathpar}
\end{minipage}
\begin{minipage}[t]{0.5\textwidth}
\fbox{$\typ{\Delta;\Gamma}{M : A}$}
\begin{mathpar}
  \inferrule*[Lab=\tylab{RowLam}]
     {\typ{\Delta, \rho : K;\Gamma}{M : A} \\ \rho \notin \ftv{\Gamma}}
     {\typ{\Delta;\Gamma}{\Lambda \rho^K.M : \forall \rho^K.A}} \\

  \inferrule*[Lab=\tylab{RowApp}]
     {\typ{\Delta;\Gamma}{M : \forall \rho^K.B} \\ \typ{\Delta}{A : K}}
     {\typ{\Delta;\Gamma}{M\,A : B[A/\rho]}}
\end{mathpar}
\end{minipage}

\textbf{Dynamic Semantics}
\begin{reductions}
  \tlab{RowLam} & (\Lambda \rho^K.M)\,R &\treducesto& M[R/\rho] \\
\end{reductions}
\caption{Extensions of \STLCVar with row polymorphism \STLCVarRow.}
\label{fig:var-calc-row}
\end{figure}

$\STLCVarRow$ is the extension of $\STLCVar$ with row
polymorphism. Figure~\ref{fig:var-calc-row} shows the extensions to
the syntax, static semantics, and dynamic semantics.

\paragraph{Syntax}
The syntax of types is extended with a quantified type ($\forall
\rho^K.A$) which binds the row variable $\rho$ with kind $K$ in the
type $A$ (the kinding rules restrict $K$ to always be of kind
$\Row_{\mathcal{L}}$ for some $\mathcal{L}$).
The syntax of rows is updated to allow a row to end in a row variable
($\rho$). A row variable enables the tail of a row to be extended with
further labels.
A row with a row variable is said to be \emph{open}; a row without a
row variables is said to be closed.

Terms are extended with type (row) abstraction ($\Lambda \rho^K.M$)
binding the row variable $\rho$ with kind $K$ in $M$ and row
application ($M\,R$) of $M$ to $R$.
Finally, type environments are updated to track the kinds of row
variables.

\paragraph{Static Semantics}
The kinding and typing rules for row polymorphism are the standard
rules for System F specialised to rows.

\paragraph{Dynamic Semantics}
The new rule \tlab{RowLam} is the standard $\beta$ rule
for System F, but specialised to rows.
Though it is a $\beta$ rule, we use the notation $\treducesto$ to
distinguish it from other $\beta$ rules as it only influences
types. This distinction helps us to make the meta theory of
translations in \Cref{sec:translations} clearer.
We write $\treducestocl$ for the compatible closure of $\treducesto$.

\subsection[A Calculus with Records]{A Calculus with Records \STLCRec}
\label{sec:stlcrec}

\STLCRec is \STLC extended with records. \Cref{fig:stlc-var-rec}
incorporates the extensions to the syntax, static semantics, and
dynamic semantics.
As with \STLCVar, we use rows as the basis of record types.
The extensions of kinds, rows and labels are the same as
$\STLCVar$.
As with variants a record type ($\Record{R}$) is given by a complete
row $R$.
Records introduction $\Record{\ell_i=M_i}_i$ gives a record in which
field $i$ has label $\ell_i$ and payload $M_i$.
Record projection ($M.\ell$) yields the payload of the field with label
$\ell$ from the record $M$.
The static and dynamic semantics for records are standard.

\subsection[A Calculus with Records and Structural Subtyping]{A Calculus with Records and Structural Subtyping \STLCRecSub}
\label{sec:stlcrecsub}

$\STLCRecSub$ is the extension of $\STLCRec$ with structural
subtyping. Figure~\ref{fig:stlc-var-rec-sub} shows the extensions to
syntax, static semantics, and dynamic semantics.
The only difference from \STLCVarSub is the subtyping rule
\ssublab{Record} and dynamic semantics rule \upcastlab{Record}.
The subtyping relation ($\subtype$) is just like that for
$\STLCVarSub$ except $R$ and $R'$ are swapped. The \ssublab{Record}
rule states that a record type $\Record{R}$ is a subtype of
$\Record{R'}$ if the row $R$ contains at least the same label-type
pairs as $R'$.
The \upcastlab{Record} rule upcasts a record $\Record{\ell_i=M_i}_i$
to type $\Record{R}$ by directly constructing a record with only the
fields required by the supertype $\Record{R}$.
We implicitly assume that the two indexes $j$ range over the same set
of integers.

\subsection[A Calculus with Presence Polymorphic Records]{A Calculus with Presence Polymorphic Records \STLCRecPre}
\label{sec:stlcrecpre}

\begin{figure}
\flushleft
\textbf{Syntax} \medskip\\
\begin{minipage}{0.45\textwidth}
  \begin{syntax}
  \slab{Kinds} & \Belong{K}{\Kinds}   &::=& \ldots %
                                      \mid \Presence \\
  \slab{Types} &  \Belong{A}{\Types}     &::=& \ldots %
                                      \mid \forall \theta.A \\
  \slab{Rows}  &\Belong{R}{\Rows} &::=& \ldots \mid \hlmod{\ell^{P} : A; R} \\
  \end{syntax}
\end{minipage}
\begin{minipage}{0.55\textwidth}
  \begin{syntax}
  \slab{Presence} & \Belong{P}{\Presences} &::=& \Absent \mid \Present \mid \theta \\
  \slab{Terms}& \Belong{M}{\Terms} &::=& \ldots
        \mid \Lambda \theta . M \mid M\,P \mid \hlmod{\Record{\ell_i=M_i}_i^A}\\
  \slab{Type~Environments}& \Belong{\Delta}{\TyEnvs} &::=& \ldots \mid \Delta, \theta \\
  \end{syntax}
\end{minipage}

\textbf{Static Semantics} \medskip\\
\fbox{$\typ{\Delta\phantom{;\!}}{A : K}$}
\def \MathparLineskip {\lineskip=8pt}
\vspace{-2em}
\begin{mathpar}
 \inferrule*[Lab=\klab{Absent}]
           {~}
           {\typ{\Delta}{\Absent : \Presence}}

 \inferrule*[Lab=\klab{Present}]
           {~}
           {\typ{\Delta}{\Present : \Presence}}

 \inferrule*[Lab=\klab{PreVar}]
           {~}
           {\typ{\Delta, \theta}{\theta : \Presence}}

 \inferrule*[Lab=\klab{PreAll}]
           {\typ{\Delta, \theta}{A : \Type}}
           {\typ{\Delta}{\forall \theta.A : \Type}}

  \hlmod{
  \inferrule*[Lab=\klab{ExtendRow}]
           {
            \typ{\Delta}{P : \Presence} \\\\
            \typ{\Delta}{A : \Type} \\\\
            \typ{\Delta}{R : \Row_{\LS \uplus \{\ell\}}}
          }
           {\typ{\Delta}{\ell^P : A; R : \Row_{\LS}}}
  }
\end{mathpar}
\fbox{$\typ{\Delta;\Gamma}{M : A}$}
\begin{mathpar}
  \inferrule*[Lab=\tylab{PreLam}]
           {\typ{\Delta,\theta;\Gamma}{M : A} \\ \theta \notin \ftv{\Gamma}}
           {\typ{\Delta;\Gamma}{\Lambda \theta.M : \forall \theta.A}}

  \inferrule*[Lab=\tylab{PreApp}]
           {\typ{\Delta;\Gamma}{M : \forall \theta.A}\\
            \Delta\vdash P : \Presence}
           {\typ{\Delta;\Gamma}{M\,P : A[P/\theta]}}

  \hlmod{
  \inferrule*[Lab=\tylab{Record}]
           {[\typ{\Delta;\Gamma}{M_i : A_i}]_i}
           {\typ{\Delta;\Gamma}{\Record{\ell_i=M_i}_i^{\Record{\ell_i^{P_i} : A_i}_i} : \Record{\ell_i^{P_i} : A_i}_i}}

  \inferrule*[Lab=\tylab{Project}]
           {\typ{\Delta;\Gamma}{M : \Record{R}} \\
            (\ell^\Present : A) \in R}
           {\typ{\Delta;\Gamma}{M.\ell : A}}
  }
\end{mathpar}

\textbf{Dynamic Semantics}
\begin{reductions}
  \hlmod{\betalab{Project}}  & \Record{(\ell_i=M_i)_i}^A.\ell_j &\breducesto& M_j\\
  \tbetalab{PreLam} & (\Lambda \theta.M)\,P &\treducesto& M[P/\theta]\\
\end{reductions}
\caption{Extensions and modifications to $\STLCRec$ with presence
polymorphism $\STLCRecPre$. Highlighted parts replace the old ones
in \STLCRec, rather than extensions.
}
\label{fig:rec-calc-pre}
\end{figure}

$\STLCRecPre$ is the extension of $\STLCRec$ with presence-polymorphic
records. Figure~\ref{fig:rec-calc-pre} shows the extensions to the
syntax, static semantics, and dynamic semantics.

\paragraph{Syntax}
The syntax of kinds is extended with the kind of presence types
($\Presence$).
The structure of rows is updated with presence annotations on labels
$(\ell_i^{P_i}:A_i)_i$.
Following \citet{Remy94}, a label can be marked as either absent
($\Absent$), present ($\Present$), or polymorphic in its presence
($\theta$). In each case, the label is associated with a
type. Thus, it is perfectly possible to say that some label $\ell$ is
absent with some type $A$.
As for row variables, the syntax of types is extended with a
quantified type ($\forall\theta.A$), and the syntax of terms is
extended with presence abstraction ($\Lambda \theta.M$) and
application ($M~P$).
To have a deterministic static semantics, we need to extend record
constructions with type annotations to indicate the presence types of
labels ($\Record{\ell_i = M_i}^A$).
Finally, the structure of type environments is updated to track
presence variables.
With presence types, we not only ignore the order of labels, but also
ignore absent labels when comparing row types.
Similarly when comparing two typed records in
\STLCRecPre.
For instance, the row $\Record{\ell_1 = M; \ell_2 =
N}^{\Record{\ell_1^\Present : A; \ell_2^\Absent : B}}$ is equivalent
to $\Record{\ell_1 = M}^{\Record{\ell_1^\Present : A}}$.

\paragraph{Static Semantics}
The kinding and typing rules for polymorphism (\klab{PreAll},
\tylab{PreLam}, \tylab{PreApp}) are the standard ones for System F
specialised to presence types.
The first three new kinding rules $\klab{Absent}$, $\klab{Present}$,
and $\klab{PreVar}$ handle presence types directly. They assign kind
$\Presence$ to absent, present, and polymorphic presence annotation
respectively.
The kinding rule $\klab{ExtendRow}$ is extended with a new kinding
judgement to check $P$ is a presence type.
The typing rules for records, $\tylab{Record}$, and projections,
$\tylab{Project}$, are updated to accommodate the presence annotations
on labels.
The typing rule for record introduction, \tylab{Record}, is changed
such that the type of each component coincides with the annotation.
The projection rule, \tylab{Project}, is changed such that the $\ell$
component must be present in the record row.

\paragraph{Dynamic Semantics}
The new rewrite rule \tbetalab{PreLam} is the standard $\beta$ rule
for System F, but specialised to presence types.
As with \STLCVarRow we use the notation $\treducesto$ to distinguish
it from other $\beta$ rules and write $\treducestocl$ for its
compatible closure.
The \betalab{Project^\star} rule is the same as \betalab{Project}, but
with a type annotation on the record.

\section{Simple Subtyping as Polymorphism}
\label{sec:translations}
\label{SEC:TRANS} %

In this section, we consider encodings of simple subtyping.
We present four encodings and two non-existence results as depicted in
\Cref{fig:results-summary}.
Specifically, in addition to the standard term-involved encodings of simple
variant and record subtyping in \Cref{sec:encode-stlcvarsub-stlcvar} and
\Cref{sec:encode-stlcrecsub-stlcrec}, we give type-only encodings of simple
variant subtyping as row polymorphism in
\Cref{sec:encode-stlcvarsub-stlcvarrow}, and simple record subtyping as
presence polymorphism in \Cref{sec:encode-stlcrecsub-stlcrecpre}.
For each translation, we establish its correctness by demonstrating the
preservation of typing derivations and the correspondence between the
operational semantics.
In \Cref{sec:swapping-row-and-pre}, we show the non-existence of type-only
encodings if we swap the row and presence polymorphism of the target languages.

\paragraph{Compositional Translations}

We restrict our attention to compositional translations defined
inductively over the structure of derivations.
For convenience we will often write these as if they are defined on
plain terms, but formally the domain is derivations rather than terms,
whilst the codomain is terms.
In this section translations on derivations will always be defined on
top of corresponding compositional translations on types, kind
environments, and type environments, in such a way that we obtain a
type preservation property for each translation.
In Sections~\ref{sec:full-subtyping}~and~\ref{sec:prenex-polymorphism}
we will allow non-compositional translations on types (as they will
necessarily need to be constructed in a non-compositional global
fashion, e.g., by way of a type inference algorithm).

\subsection{Local Term-Involved Encoding of $\STLCVarSub$ in $\STLCVar$}
\label{sec:encode-stlcvarsub-stlcvar}

We give a local term-involved compositional translation from $\STLCVarSub$ to
$\STLCVar$, formalising the idea of simulating
$\var{age}\Upcast\Variant{\Age:\Int;\Year:\Int}$ with case split and injection
in \Cref{sec:example-simple-variant-subtyping}.
\begin{equations}
  \transl{-} &:& \Deriv \to \Terms\\
  \transl{M^{\Variant{\ell_i:A_i}_i} \Upcast \Variant{R}}
  &=& \Case~\transl{M}~\{\ell_i~x_i \mapsto (\ell_i~x_i)^{\Variant{R}}\}_{i}
\end{equations}

The translation has a similar structure to the $\eta$-expansion of variants:
\begin{reductions}
  \etalab{Case} & M^{\Variant{\ell_i : A_i}_i} &\ereducesto& \Case~M~\{\ell_i~x_i \mapsto (\ell_i\,x_i)^{\Variant{\ell_i : A_i}_i}\}_i
\end{reductions}

The following theorem states that the translation preserves typing
derivations.
Note that compositional translations always translate environments
pointwise.
For type environments, we have $\transl{\Gamma,x:A} =
\transl{\Gamma},x:\transl{A}$.
For kind environments, we have the identity function $\transl{\Delta}
= \Delta$.
\begin{restatable}[Type Preservation]{theorem}{STLCVarSubSTLCVarTYPE}
  Every well-typed \STLCVarSub term $\typ{\Delta;\Gamma}{M : A}$ is translated
  to a well-typed \STLCVar term $\typ{\transl{\Delta};\transl{\Gamma}}{\transl{M}
  : \transl{A}}$.
\end{restatable}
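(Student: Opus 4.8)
The plan is to proceed by induction on the typing derivation of $\typ{\Delta;\Gamma}{M:A}$. Since \STLCVarSub and \STLCVar share exactly the same syntax of types (subtyping only adds the upcast term former and the subtyping judgement, not new type constructors), the accompanying translation on types is the identity, so that $\transl{A}=A$, $\transl{\Gamma}=\Gamma$, and $\transl{\Delta}=\Delta$ throughout; the goal therefore simplifies to establishing $\typ{\Delta;\Gamma}{\transl{M}:A}$ in \STLCVar. Because the translation is local, and hence homomorphic on every construct except the upcast, the cases for variables, $\lambda$-abstraction, application, injection, and case split are routine: the translation commutes with the term former, so I apply the induction hypothesis to each immediate subderivation and re-apply the very same typing rule (\tylab{Var}, \tylab{Lam}, \tylab{App}, \tylab{Inject}, \tylab{Case}), each of which is common to both calculi.

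The only substantive case is \tylab{Upcast}, where $M\Upcast B$ is typed from $\typ{\Delta;\Gamma}{M:A}$ together with $A\subtype B$. I first invert the subtyping derivation: in \STLCVarSub the relation $\subtype$ is generated solely by \ssublab{Variant}, so necessarily $A=\Variant{R}$ and $B=\Variant{R'}$ with $\dom{R}\subseteq\dom{R'}$ and $R'|_{\dom{R}}=R$. Writing $R=(\ell_i:A_i)_i$, the translation yields $\transl{M\Upcast\Variant{R'}}=\Case~\transl{M}~\{\ell_i~x_i\mapsto(\ell_i~x_i)^{\Variant{R'}}\}_i$. By the induction hypothesis $\transl{M}$ has type $\Variant{(\ell_i:A_i)_i}$, which is exactly the scrutinee shape demanded by \tylab{Case}, so it remains to check that each branch $(\ell_i~x_i)^{\Variant{R'}}$ admits the common type $\Variant{R'}$ in the extended context $\Delta;\Gamma,x_i:A_i$.

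The crux, and the step I expect to be the only real obstacle, is verifying these reinjections, where the second subtyping premise does the essential work. The branch body $(\ell_i~x_i)^{\Variant{R'}}$ is typed by \tylab{Inject}, whose side condition requires $(\ell_i:A_i)\in R'$; that is, the payload type recorded for $\ell_i$ in the target row $R'$ must coincide with the type $A_i$ of the bound variable $x_i$. This is precisely what $R'|_{\dom{R}}=R$ guarantees: since $\ell_i\in\dom{R}$, the restriction equation forces the entry for $\ell_i$ in $R'$ to be exactly $A_i$ (and in particular $\ell_i\in\dom{R'}$). Hence every branch is well typed at $\Variant{R'}$, the case split is well typed at $\Variant{R'}=B=\transl{B}$, and the case closes. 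Everything else is bookkeeping; the single point meriting care is this appeal to $R'|_{\dom{R}}=R$, without which the reconstructed injections would fail to type check.
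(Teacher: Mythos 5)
Your proof is correct and follows essentially the same route as the paper's: induction on typing derivations, with all cases except \tylab{Upcast} handled homomorphically, and the upcast case discharged by typing the generated case split via \tylab{Case} and \tylab{Inject} using the \ssublab{Variant} premises. Your elaboration of the side condition $(\ell_i:A_i)\in R'$ from $R'|_{\dom{R}}=R$ merely spells out what the paper's terser proof leaves implicit.
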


In order to state an operational correspondence result, we first
define $\bureducesto$ as the union of $\breducesto$ and $\ureducesto$,
and $\bureducestocl$ as its compatible closure. There is a one-to-one
correspondence between reduction in $\STLCVarSub$ and reduction in
$\STLCVar$.

\begin{restatable}[Operational Correspondence]{theorem}{STLCVarSubSTLCVar}
  \label{thm:oc-varsub-var}
  For the translation $\transl{-}$ from \STLCVarSub to \STLCVar, we have
  \begin{itemize}[align=spaceleft]
  \item[\textsc{\emph{Simulation}}] If $M \bureducestocl N$, then $\transl M
  \breducestocl \transl N$.
  \item[\textsc{\emph{Reflection}}] If $\transl M \breducestocl \transl N$, then
  $M \bureducestocl N$.
  \end{itemize}
\end{restatable}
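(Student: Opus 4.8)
The plan is to prove the two directions separately, both resting on a single computational observation together with a substitution lemma. The observation is that an upcast redex translates to a $\Case$-redex that fires in exactly one step: since injections translate homomorphically, $\transl{(\ell\,M)^A \Upcast \Variant{R}}$ is $\Case~(\ell\,\transl M)^{\transl A}~\{\ell_i~x_i \mapsto (\ell_i~x_i)^{\transl{\Variant R}}\}_i$, to which $\betalab{Case}$ applies, yielding $(\ell\,\transl M)^{\transl{\Variant R}} = \transl{(\ell\,M)^{\Variant R}}$. Thus each source step will correspond to exactly one target step, matching the single-step relations $\bureducestocl$ and $\breducestocl$ used in the statement. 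I would first establish, by induction on derivations, the substitution lemma $\transl{M[N/x]} = \transl{M}[\transl N/x]$; the only non-homomorphic clause is the upcast, where it holds because term substitution does not touch the type annotation that fixes the $\Case$ branches and the branch-bound variables are fresh.

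For \textsc{Simulation} I would induct on the derivation of $M \bureducestocl N$, i.e.\ on the compatible closure. The three base cases are the source reduction rules: $\betalab{Lam}$ and $\betalab{Case}$ map to the corresponding target $\beta$-rules, discharging the residual via the substitution lemma, while $\upcastlab{Variant}$ maps to a single $\betalab{Case}$ step by the observation above. The compatibility cases are routine for every context except reduction underneath an upcast, $M \Upcast B \bureducestocl M' \Upcast B$. Here I would appeal to \emph{type preservation} (the preceding theorem): because the subject retains its type $\Variant{\ell_i:A_i}_i$ under reduction, $\transl{M\Upcast B}$ and $\transl{M'\Upcast B}$ carry identical $\Case$ branches and differ only in the scrutinee $\transl M$ versus $\transl{M'}$, so the source step lifts to a compatible step in the scrutinee position. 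This reliance on preservation is the one non-mechanical point of this direction.

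For \textsc{Reflection} I would \emph{trace the fired redex}. Given $\transl M \breducestocl Q$, I identify the redex reduced in $\transl M$ and match it to a unique redex in $M$ by structural analysis, using the head-shape invariants that a translated term is an injection (resp.\ a $\lambda$) only when the source is. A $\betalab{Lam}$ redex $(\lambda x.P)\,R$ can only come from a source application whose function translates to a $\lambda$, hence from a source $\betalab{Lam}$ redex (upcasts translate to $\Case$, never to applications). A $\betalab{Case}$ redex comes from a $\Case$ node, which in the source is either a genuine case split or an upcast; in either case its scrutinee/subject translates to an injection and so is itself a source injection, yielding respectively a source $\betalab{Case}$ redex or a source $\upcastlab{Variant}$ redex. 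Reducing the identified source redex gives $N$ with $M \bureducestocl N$, and applying \textsc{Simulation} to this very step reduces exactly the image redex; since firing a fixed redex is deterministic, $\transl N = Q$, which gives the stated correspondence.

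The main obstacle is the bookkeeping in \textsc{Reflection}: showing that every target redex has a \emph{unique} source origin, in particular cleanly distinguishing the $\Case$-redexes that the translation manufactures from upcasts from those arising from genuine case splits. The resolution is that the trace is carried out against the known structure of the given $M$, so the provenance of each target constructor is determined by the above head-shape invariants, making the case analysis exhaustive and unambiguous. I would also remark that, since the translation is not injective on terms, the robust formulation is the existential one (there exists $N$ with $M \bureducestocl N$ and $\transl N = Q$); the determinism argument recovers exactly $\transl N = Q$, and hence the theorem as stated.
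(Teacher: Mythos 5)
Your proposal is correct and follows essentially the same route as the paper's proof: the same substitution lemma, the same three root-reduction cases (with an upcast redex firing as a single \betalab{Case} step on the $\eta$-expansion), congruence cases handled structurally, and reflection by tracing each target redex to a unique source application, case split, or upcast using the shape of the translation, with the determinism remark matching the paper's restriction of the reduct to the image $\transl{N}$. One small mislabel: the fact you need for reduction under an upcast is subject reduction in \STLCVarSub{} (together with uniqueness of typing, so the translated $\Case$ branches are unchanged), not the translation's type-preservation theorem — the paper glosses this point entirely, so your version is if anything more careful.
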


Intuitively, every step of $\beta$-reduction in \STLCVarSub is mapped
to itself in \STLCVar.
For every step of upcast reduction of $M^{\Variant{R'}}\Upcast
\Variant{R}$ in \STLCVarSub, the \upcastlab{Variant} rule guarantees
that $M$ must be a variant value.
Thus, it is mapped to one step of $\beta$-reduction which reduces the
$\eta$-expansion of $M$.
The full proofs of type preservation and operational correspondence
can be found in \Cref{app:proof-stlcvarsub-stlcvar}.

\subsection{Local Type-Only Encoding of $\STLCVarSub$ in $\STLCVarRow$}
\label{sec:encode-stlcvarsub-stlcvarrow}

We give a local type-only translation from \STLCVarSub to \STLCVarRow
by making variants row-polymorphic, as demonstrated by $\var{year'}$
and $\var{getAge'}$ in \Cref{sec:example-simple-variant-subtyping}.

\noindent
\hspace{-1em}
\begin{minipage}[t]{0.45\textwidth}
  \begin{equations}
    \transl{-} &:& \Types \to \Types\\
    \transl{\Variant{R}}
      &=& \forall \rho^{\Row_{R}}.\Variant{\transl{R};\rho}\\
    \transl{-} &:& \Rows \to \Rows\\
    \transl{(\ell_i:A_i)_i}
      &=& (\ell_i:\transl{A_i})_i \\
  \end{equations}
\end{minipage}
\hspace{2em}
\begin{minipage}[t]{0.5\textwidth}
  \begin{equations}
    \transl{-} &:& \Deriv \to \Terms\\
    \transl{(\ell\,M)^{\Variant{R}}}
      &=& \Lambda \rho^{\Row_{R}}. (\ell\,\transl{M})^{\Variant{\transl{R}; \rho}}\\
    \transl{\Case~M~\{\ell_i~x_i \mapsto N_i\}_i}
      &=& \Case~(\transl{M}\,{\cdot})~\{\ell_i~x_i \mapsto \transl{N_i}\}_i\\
    \transl{M^{\Variant{R}} \Upcast \Variant{R'}}
      &=& \Lambda \rho^{\Row_{R'}}.\transl{M}\, \hk{(\transl{R' \backslash R}; \rho)}\\
  \end{equations}
\end{minipage}

\bigskip

The $\Row_R$ is short for $\Row_{\dom{R}}$ and $R \backslash R'$ is
defined as row difference:
\begin{align*}
  R \backslash R' &= (\ell:A)_{(\ell:A)\in R \text{ and } (\ell:A)\notin R'}
\end{align*}

The translation preserves typing derivations.
\begin{restatable}[Type Preservation]{theorem}{STLCVarSubSTLCVarRowTYPE}
  Every well-typed \STLCVarSub term $\typ{\Delta;\Gamma}{M : A}$ is translated
  to a well-typed \STLCVarRow term $\typ{\transl{\Delta};\transl{\Gamma}}{\transl{M}
  : \transl{A}}$.
\end{restatable}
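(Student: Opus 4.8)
The plan is to prove type preservation by rule induction on the typing derivation $\typ{\Delta;\Gamma}{M : A}$ in $\STLCVarSub$, after first establishing an auxiliary lemma that the type translation preserves kinding: if $\typ{\Delta}{A : \Type}$ then $\typ{\transl{\Delta}}{\transl{A} : \Type}$, and likewise $\transl{R}$ preserves row kinds (its domain is unchanged). This lemma is itself proved by induction on the kinding derivation; the only interesting case is $\klab{Variant}$, where $\transl{\Variant{R}} = \forall\rho^{\Row_R}.\Variant{\transl{R};\rho}$ is well-kinded because, under $\Delta,\rho:\Row_{\dom{R}}$, the tail $\rho$ supplies exactly the labels outside $\dom{R}$ so that $\transl{R};\rho$ is a complete row (kind $\Row_\emptyset$), and $\klab{RowAll}$ then binds $\rho$. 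Since the translation is local, it is homomorphic on base types and arrows, making those cases immediate.

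For the term induction, the cases for variables, $\lambda$-abstraction, and application are homomorphic and follow directly from the induction hypotheses (using $\tylab{Var}$, $\tylab{Lam}$, $\tylab{App}$ together with kinding preservation for the annotation on $\lambda$). For injection $(\ell\,M)^{\Variant{R}}$ with $(\ell:A)\in R$ and $\typ{\Delta;\Gamma}{M:A}$, the induction hypothesis gives $\typ{\transl{\Delta};\transl{\Gamma}}{\transl{M}:\transl{A}}$; working under $\Delta,\rho:\Row_{\dom{R}}$ we have $(\ell:\transl{A})\in(\transl{R};\rho)$, so $\tylab{Inject}$ types $(\ell\,\transl{M})^{\Variant{\transl{R};\rho}}$ and $\tylab{RowLam}$ abstracts $\rho$ to yield $\transl{\Variant{R}}$. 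For a case split on $M : \Variant{(\ell_i:A_i)_i}$, writing $R = (\ell_i:A_i)_i$, the term $\transl{M}$ has type $\forall\rho^{\Row_R}.\Variant{(\ell_i:\transl{A_i})_i;\rho}$; instantiating $\rho$ with the empty closed row $\cdot$ via $\tylab{RowApp}$ (legal since $\klab{EmptyRow}$ assigns $\cdot$ any row kind) yields the monomorphic variant $\Variant{(\ell_i:\transl{A_i})_i}$, after which $\tylab{Case}$ applies with each branch typed by the induction hypothesis under $x_i:\transl{A_i}$.

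The crux is the upcast case $M^{\Variant{R}}\Upcast\Variant{R'}$, typed by $\tylab{Upcast}$ from $\typ{\Delta;\Gamma}{M:\Variant{R}}$ and $\Variant{R}\subtype\Variant{R'}$. Inverting $\ssublab{Variant}$ gives $\dom{R}\subseteq\dom{R'}$ and $R'|_{\dom{R}} = R$. By the induction hypothesis $\transl{M}:\forall\rho'^{\Row_{\dom{R}}}.\Variant{\transl{R};\rho'}$. The translation instantiates this quantifier with the row $\transl{R'\backslash R};\rho$ and re-abstracts over $\rho$. Two facts drive the case. First, a kinding fact: since the labels of $R'\backslash R$ are exactly $\dom{R'}\setminus\dom{R}$ and $\rho:\Row_{\dom{R'}}$, repeated use of $\klab{ExtendRow}$ shows $\transl{R'\backslash R};\rho : \Row_{\dom{R'}\cap\dom{R}} = \Row_{\dom{R}}$ (using $\dom{R}\subseteq\dom{R'}$), which is precisely the kind demanded by the quantifier, so $\tylab{RowApp}$ applies. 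Second, a row equation: because $R'|_{\dom{R}} = R$, the pairs of $R'$ partition into those of $R$ and those of $R'\backslash R$, hence $R;(R'\backslash R) = R'$ up to reordering, and since the row translation is homomorphic, $\transl{R};\transl{R'\backslash R} = \transl{R'}$. Substituting therefore gives the instantiated type $\Variant{\transl{R};\transl{R'\backslash R};\rho} = \Variant{\transl{R'};\rho}$; $\tylab{RowLam}$ then abstracts $\rho$ to produce $\forall\rho^{\Row_{\dom{R'}}}.\Variant{\transl{R'};\rho} = \transl{\Variant{R'}}$, as required.

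I expect the upcast case to be the main obstacle, concentrated in the row bookkeeping just described: getting the kind of the instantiating row $\transl{R'\backslash R};\rho$ to land exactly on $\Row_{\dom{R}}$, and verifying the reordering equation $\transl{R};\transl{R'\backslash R}\equiv\transl{R'}$ against the convention that rows are identified up to permutation of labels. A secondary technical point, recurring in the injection, case, and upcast cases, is discharging the freshness side-condition $\rho\notin\ftv{\transl{\Gamma}}$ required by $\tylab{RowLam}$; this is handled by choosing each newly bound row variable fresh (Barendregt convention) and noting that $\transl{\Gamma}$ does not mention it, so re-abstraction is always legal.
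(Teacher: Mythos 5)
Your proof is correct and takes essentially the same route as the paper's: induction on typing derivations, with the upcast case discharged by instantiating the translated quantifier with $\transl{R'\backslash R};\rho$ and re-abstracting via \tylab{RowLam}. You work out the row-kinding bookkeeping and the reordering equation $\transl{R};\transl{R'\backslash R} = \transl{R'}$ explicitly, which the paper's appendix proof leaves implicit, but this is elaboration rather than a different argument.
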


In order to state an operational correspondence result, we introduce
two auxiliary reduction relations.
First, we annotate the type application introduced by the translation of upcasts
with the symbol @ to distinguish it from the type application introduced by the
translation of $\Case$.
We write $\vreducesto$ for the associated reduction and
$\vreducestocl$ for its compatible closure.

\begin{reductions}
  \vlab{RowLam} & (\Lambda \rho^K.M)\,\hk{A} &\vreducesto& M[A/\rho]
\end{reductions}

Then, we add another intuitive reduction rule for upcast in
\STLCVarSub, which allows nested upcasts to reduce to a single
upcast.

\begin{reductions}
  \upcastnewlab{Nested} & M\Upcast A \Upcast B &\unewreducesto& M\Upcast B
\end{reductions}

We write $\uunewreducesto$ for the union of $\ureducesto$ and
$\unewreducesto$, and $\uunewreducestocl$ for its compatible closure.
There are one-to-one correspondences between $\beta$-reductions
(modulo $\treducestocl$), and between upcast and $\vreducestocl$.

\begin{restatable}[Operational Correspondence]{theorem}{STLCVarSubSTLCVarRow}
  \label{thm:oc-varsub-varrow}
  For the translation $\transl{-}$ from \STLCVarSub to \STLCVarRow, we have
  \begin{itemize}[align=spaceleft]
  \item[\textsc{\emph{Simulation}}]
  If $M \breducestocl N$, then
  $\transl M \treducestocl^\que \breducestocl \transl N$;
  if $M \ureducestocl N$, then
  $\transl M \vreducestocl \transl N$.
  \item[\textsc{\emph{Reflection}}]
  If $\transl M \treducestocl^\que \breducestocl \transl N$, then
  $M \breducestocl N$;
  if $\transl M \vreducestocl \transl N$, then
  $M \uunewreducestocl N$.
  \end{itemize}
\end{restatable}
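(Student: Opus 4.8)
The plan is to reduce both directions to a case analysis on a single contracted redex, backed by a congruence argument for the compatible closures, and to isolate the algebraic facts this analysis needs up front. I would first prove a \emph{substitution lemma}, $\transl{M[N/x]} = \transl M[\transl N/x]$, by induction on the derivation of $M$; this is what lets a translated $\betalab{Lam}$ or $\betalab{Case}$ redex contract to the translation of the source contractum. I would then record two \emph{row-identity facts}: whenever $\Variant R\subtype\Variant{R'}$, the row $\transl{R'}$ equals $\transl R;\transl{R'\backslash R}$ up to reordering; and for a chain $R\subtype R'\subtype R''$ we have $(R'\backslash R);(R''\backslash R') = R''\backslash R$. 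The first identity recognises the residual of a $\nu$-step as a translated injection at the larger type, and the second recognises it as a collapsed (nested) upcast.

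For \emph{simulation} I would analyse the source redex. A root $\betalab{Lam}$ step maps to a single $\breducesto$ step with no $\tau$-step, using the substitution lemma. A root $\betalab{Case}$ step is exactly the case that forces the optional $\tau$-step: its translated scrutinee is $\transl M\,\cdot$ with $M$ an injection, so the $\Lambda\rho$ introduced by $\transl{(\ell\,-)}$ must first be instantiated by a $\treducesto$ step before $\betalab{Case}$ can fire, giving the pattern $\treducestocl\breducestocl$. A root $\upcastlab{Variant}$ step maps to a single $\vreducesto$ step, with the first row-identity fact identifying the residual as $\transl{(\ell\,M)^{\Variant{R'}}}$. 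Because the translation is compositional and $\treducestocl$, $\breducestocl$, and $\vreducestocl$ are all closed under term contexts, I would lift each base case to the compatible closure by reading $\transl{C[M_0]}$ as $\transl C[\transl{M_0}]$ for a translated context $\transl C$ and appealing to congruence.

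The hard part is \emph{reflection}, because the translation is not a bijection on reductions: a single source $\betalab{Case}$ step becomes two target steps, and --- more delicately --- a single target $\nu$-step can reflect either a simple or a nested upcast. The plan is to prove a \emph{shape lemma} characterising the image of $\transl{-}$ (translated injections and translated upcasts are $\Lambda\rho$-abstractions, the latter wrapping an @-annotated application; translated case scrutinees are applications to $\cdot$) and then to classify the contracted target redex. A $\beta$-Lam redex can only occupy a position coming from a source application and so reflects a $\betalab{Lam}$ step with no coupled $\tau$; a $\beta$-Case redex can only arise once the scrutinee's $\Lambda\rho$ has been instantiated, so the $\treducestocl\breducestocl$ pattern reflects a source $\betalab{Case}$ step. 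For a $\nu$-step, the shape lemma says the applied head $\Lambda\rho$ is either a translated injection --- whence the first row-identity fact makes the residual $\transl{(\ell\,M)^{\Variant{R'}}}$ and the step reflects $\ureducesto$ --- or a translated upcast --- whence the second row-identity fact collapses the two @-applications and the step reflects the nested-upcast rule $\unewreducesto$. This dichotomy is exactly why the reflected relation is the union $\uunewreducesto$. The main obstacle throughout is the bookkeeping that makes this classification sound: relying on the @-annotation to keep $\nu$-redexes disjoint from the $\tau$-redexes created by case scrutinees, and checking that contracting each target redex leaves a term still in the image of $\transl{-}$, so that the reflected source step is well defined.
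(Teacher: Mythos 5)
Your proposal is correct and follows essentially the same route as the paper's proof: the same substitution lemma, the same root-redex analysis in simulation (a $\treducestocl$ step forced by instantiating the scrutinee's $\Lambda\rho$ before $\betalab{Case}$ fires, and a single $\vreducestocl$ step for upcasts), and the same reflection dichotomy in which a $\nu$-redex whose head is a translated injection reflects $\ureducesto$ while one whose head is a translated upcast collapses via the row-difference composition identity and reflects $\unewreducesto$, which is exactly why the union $\uunewreducestocl$ appears. The only difference is packaging: you isolate the row-identity facts and the image characterisation as explicit lemmas, whereas the paper establishes these inline within its induction on $M$.
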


We write $\treducestocl^\que$ to represent zero or one step of
$\treducestocl$.
For the $\beta$-reduction of a case-split in \STLCVarSub, in order to
reduce further in \STLCVarRow, the translation of it must first reduce
the empty row type application $\transl{M}\,\cdot$ by $\treducestocl$.
One step of upcast reduction in \STLCVarSub is simply mapped to the
corresponding type application in \STLCVarRow.
The other direction (reflection) is slightly more involved as one step
of $\vreducestocl$ in \STLCVarRow may correspond to a nested upcast;
hence the need for $\uunewreducestocl$ instead of $\ureducestocl$.
The proofs of type preservation and operational correspondence can be
found in \Cref{app:proof-stlcvarsub-stlcvarrow}.

\subsection{Local Term-Involved Encoding of $\STLCRecSub$ in $\STLCRec$}
\label{sec:encode-stlcrecsub-stlcrec}

We give a local term-involved translation from \STLCRecSub to \STLCRec,
formalising the idea of simulating $\var{alice}\Upcast\Record{\Name:\String}$
with projection and record construction in
\Cref{sec:example-simple-variant-subtyping}.

\begin{equations}
  \transl{-} &:& \Deriv \to \Terms\\
      \transl{M \Upcast \Record{\ell_i:A_i}_i}
  &=& \Record{\ell_i = \transl{M}.\ell_i}_i
\end{equations}

The translation has a similar structure to the $\eta$-expanding of
records, which is
\begin{reductions}
  \etalab{Project} & M^{\Record{\ell_i:A_i}_i} &\ereducesto& \Record{\ell_i = M.\ell_i}_i
\end{reductions}

The translation preserves typing derivations.

\begin{restatable}[Type Preservation]{theorem}{STLCRecSubSTLCRecTYPE}
  Every well-typed \STLCRecSub term $\typ{\Delta;\Gamma}{M : A}$ is translated
  to a well-typed \STLCRec term $\typ{\transl{\Delta};\transl{\Gamma}}{\transl{M}
  : \transl{A}}$.
\end{restatable}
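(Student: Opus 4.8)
The plan is to proceed by structural induction on the typing derivation $\typ{\Delta;\Gamma}{M : A}$ in $\STLCRecSub$. As with the term-involved encoding of $\STLCVarSub$ in $\STLCVar$, the translation here is local and term-involved, so the accompanying translation on types is the identity ($\transl{A} = A$), and the environment translations are pointwise identities ($\transl{\Gamma} = \Gamma$, $\transl{\Delta} = \Delta$). Every typing rule other than $\tylab{Upcast}$ is translated homomorphically, so for those cases the conclusion follows immediately by applying the induction hypotheses to the premises and reassembling the same $\STLCRec$ typing rule; since the type translation is the identity, no type-level adjustment is needed.

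The only substantive case is $\tylab{Upcast}$, where the derivation concludes $\typ{\Delta;\Gamma}{M \Upcast B : B}$ from $\typ{\Delta;\Gamma}{M : A}$ together with $A \subtype B$. Inverting the subtyping judgement via $\ssublab{Record}$ gives $A = \Record{R}$ and $B = \Record{R'}$ with $\dom{R'} \subseteq \dom{R}$ and $R|_{\dom{R'}} = R'$; I would write $B = \Record{\ell_i : A_i}_i$ so that the $\ell_i$ enumerate $\dom{R'}$. By the induction hypothesis, $\typ{\Delta;\Gamma}{\transl{M} : \Record{R}}$ holds in $\STLCRec$, and the translation produces $\transl{M \Upcast B} = \Record{\ell_i = \transl{M}.\ell_i}_i$.

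To finish this case, I would type each projection and then the record. Because $\ell_i \in \dom{R'} \subseteq \dom{R}$, there is some $A_i'$ with $(\ell_i : A_i') \in R$, so $\tylab{Project}$ yields $\typ{\Delta;\Gamma}{\transl{M}.\ell_i : A_i'}$. The crux is the restriction condition $R|_{\dom{R'}} = R'$, which forces $A_i' = A_i$: the type recorded for $\ell_i$ in $R$ coincides exactly with the one demanded by the target row $R'$. Assembling the projections with $\tylab{Record}$ then gives $\typ{\Delta;\Gamma}{\Record{\ell_i = \transl{M}.\ell_i}_i : \Record{\ell_i : A_i}_i}$, i.e. type $B$, as required.

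I do not expect a genuine obstacle here: the argument is a routine induction whose single delicate point is the appeal to $R|_{\dom{R'}} = R'$ to guarantee that the projected field types match those of the target record type. This is precisely where the width-only (shallow) nature of simple subtyping is essential — there is no depth coercion, so shared fields keep identical types and no recursive coercion on field payloads is needed. I would also remark that well-kindedness of $B$ is inherited from the implicit well-kindedness of the annotation in the source upcast, so the reconstructed record type presents no difficulty.
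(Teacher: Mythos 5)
Your proposal is correct and matches the paper's own proof: a straightforward induction on typing derivations, with all rules except $\tylab{Upcast}$ handled homomorphically, and the upcast case discharged by inverting $\ssublab{Record}$, applying the induction hypothesis to get $\transl{M} : \Record{R}$, typing each projection $\transl{M}.\ell_i$ via $\tylab{Project}$ (using $R|_{\dom{R'}} = R'$ to match field types), and reassembling with $\tylab{Record}$. If anything, your explicit use of projections in the upcast case is a cleaner rendering of the same argument than the paper's somewhat terse appendix text.
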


One upcast or $\beta$-reduction in \STLCRecSub corresponds to a
sequence of $\beta$-reductions in \STLCRec.

\begin{restatable}[Operational Correspondence]{theorem}{STLCRecSubSTLCRec}
  For the translation $\transl{-}$ from \STLCRecSub to \STLCRec, we have
  \begin{itemize}[align=spaceleft]
  \item[\textsc{\emph{Simulation}}] If $M \bureducestocl N$, then $\transl M
  \breducestocl^\ast \transl N$.
  \item[\textsc{\emph{Reflection}}] If $\transl M \breducestocl N'$, then there
  exists $N$ such that $N' \breducestocl^\ast \transl N$ and $M \bureducestocl %
  N$.
  \end{itemize}
\end{restatable}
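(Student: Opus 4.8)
The plan is to prove the two directions separately, in both cases exploiting that $\transl{-}$ is homomorphic on every construct except upcast and hence commutes with substitution. So I would first establish the routine substitution lemma $\transl{M[N/x]} = \transl{M}[\transl{N}/x]$ by induction on the derivation of $M$; this immediately takes care of the $\betalab{Lam}$ and $\betalab{Project}$ reductions. The only genuinely new ingredient is $\upcastlab{Record}$: because $\transl{M \Upcast \Record{\ell_i:A_i}_i} = \Record{\ell_i = \transl{M}.\ell_i}_i$ turns an upcast to an $n$-field supertype into $n$ projections, contracting a single source upcast fans out into $n$ applications of $\betalab{Project}$ in the target. This is exactly why the target side of the statement is phrased with $\breducestocl^\ast$ rather than a single step.

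For \textbf{Simulation} I would induct on the derivation of $M \bureducestocl N$. The base cases are the three root reductions: $\betalab{Lam}$ and $\betalab{Project}$ each map to one target step via the substitution lemma, and $\upcastlab{Record}$ maps to the $n$ projection steps described above (the source upcast reduces a record literal, so each $\transl{M}.\ell_i$ is a genuine $\betalab{Project}$ redex). For the congruence cases I would use that $\breducestocl^\ast$ is a congruence. The only subtle point is a reduction inside the subject $M_1$ of an upcast $M_1 \Upcast \Record{\ell_i:A_i}_i$: since $\transl{M_1}$ is duplicated across the $n$ projections $\transl{M_1}.\ell_i$, a single source step is replayed once per copy, which is again absorbed by $\breducestocl^\ast$.

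\textbf{Reflection} is the harder direction, and I would prove it by analysing the position of the contracted redex in $\transl{M}$. Redexes split into two classes. A \emph{genuine} redex is one already present in the source --- a $\betalab{Lam}$ redex or a $\betalab{Project}$ redex projecting from a source record literal --- and these invert the Simulation argument to give a matching single source step with $N' = \transl{N}$. The delicate class is the \emph{administrative} projection redexes created by the translation of upcasts, namely projections of the shape $\Record{\ell_i = \transl{M_1}.\ell_i}_i.\ell_j$. For these I would show that $N'$ can always be driven by further administrative $\betalab{Project}$ steps to a term of the form $\transl{N}$, the term $N$ being the one reached from $M$ by firing the upcast; this is precisely the role of the $N' \breducestocl^\ast \transl{N}$ clause.

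I expect the main obstacle to be the administrative redexes arising from \emph{nested} upcasts. This is the point where records behave differently from variants: the translation of a record upcast is a record \emph{literal}, so $\transl{(M_1 \Upcast A) \Upcast B}$ exposes a projection redex even when neither upcast of the source term can fire (for instance when $M_1$ is a variable), whereas in the variant encoding the analogous $\Case$ on a non-injection is stuck. Contracting this redex yields, after administrative normalisation, $\transl{M_1 \Upcast B}$, which corresponds to collapsing the two nested upcasts into one --- an identification that is not a single $\ureducesto$ step of $\STLCRecSub$. I therefore expect the cleanest route to mirror the device introduced for the type-only variant encoding in \Cref{sec:encode-stlcvarsub-stlcvarrow}: read the source reduction up to the nested-upcast rule $\upcastnewlab{Nested}$, with $M \Upcast A \Upcast B \unewreducesto M \Upcast B$, so that the witnessing source step exists, or equivalently run a postponement argument showing the administrative $\betalab{Project}$ steps commute with the genuine ones. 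Verifying this commutation and checking that it dovetails with the projection fan-out, so that a single (possibly nested-collapsing) source step is recovered, is where the real work lies.
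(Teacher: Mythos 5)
Your proposal has the same skeleton as the paper's proof: the substitution lemma $\transl{M[N/x]} = \transl{M}[\transl{N}/x]$ dispatching \betalab{Lam} and \betalab{Project}, a base-case analysis in which one \upcastlab{Record} step fans out into a sequence of \betalab{Project} steps, congruence cases by induction on $M$, and the duplication of $\transl{M_1}$ across the fields of the translated upcast absorbed by $\breducestocl^\ast$ (in Reflection the paper likewise replays the inner step in every copy of $\transl{M_1}$). Where you genuinely diverge is on the administrative redexes created by nested upcasts, and there you are \emph{more} careful than the paper. In its Reflection case for $M'\Upcast\Record{\ell_i:A_i}_i$, the paper asserts that if some $\transl{M'}.\ell_i$ is contracted by \betalab{Project} then ``by the definition of translation'' $M'$ is a record literal; this overlooks that $\transl{M'}$ is also a record literal when $M'$ is itself an upcast. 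Your example shape is a real witness: for $M = (x \Upcast A)\Upcast B$ with $x$ a variable, $M$ is $\bureducestocl$-normal (neither upcast can fire, since $x\Upcast A$ is not a literal), yet $\transl{M}$ contains the redex $\Record{\ell_k = x.\ell_k}_k.\ell_j$, and contracting it, followed by the remaining administrative projections, lands on $\transl{x \Upcast B}$ --- which is matched in the source only by the collapse $M_1\Upcast A\Upcast B \unewreducesto M_1\Upcast B$. So Reflection, as literally stated with $\bureducestocl$, needs to be read with $\uunewreducestocl$, exactly as the paper itself does in \Cref{thm:oc-varsub-varrow} and \Cref{thm:oc-recsub-recpre}; your proposed repair is the right one. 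One refinement: your alternative of a postponement/commutation argument cannot by itself rescue the literal statement, since in the example above the source has no step at all to postpone to --- the \upcastnewlab{Nested} rule (or an equivalent quotient on source terms) is genuinely required. In short, your approach is correct, and your flagged obstacle identifies a lacuna in the paper's own proof of the Reflection upcast case rather than a weakness of your plan.
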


We write $\breducestocl^\ast$ to represent multiple (including zero)
steps of $\breducestocl$.
Unlike \Cref{thm:oc-varsub-var}, one step of reduction in
\STLCRecSub might be mapped to multiple steps of reduction in \STLCRec
because the translation of upcast possibly introduces multiple copies
of the same term.
For instance, $\transl{M \Upcast \Record{\ell_1:A;\ell_2:B}} =
\Record{\ell_1 = \transl{M}.\ell_1; \ell_2 = \transl{M}.\ell_2}$.
One step of $\beta$-reduction in $M$ in \STLCRecSub is mapped to at
least two steps of $\beta$-reduction in the two copies of $\transl{M}$
in \STLCRec.
Reflection is basically the reverse of simulation but requires at
least one step of reduction in \STLCRec.
The proofs of type preservation and operational correspondence can be
found in \Cref{app:proof-stlcrecsub-stlcrec}.

\subsection{Local Type-Only Encoding of $\STLCRecSub$ in $\STLCRecPre$}
\label{sec:encode-stlcrecsub-stlcrecpre}

Before presenting the translation, let us focus on order of labels in types.
Though generally we treat row types as unordered collections, in this
section we assume, without loss of generality, that there is a
canonical order on labels, and the labels of any rows (including
records) conform to this order.
This assumption is crucial in preserving the correspondence between
labels and presence variables bound by abstraction. For example,
consider the type $A = \Record{\ell_1 : A_1; \dots; \ell_n : A_n}$ in
$\STLCRecSub$.
Following the idea of making records presence polymorphic as
exemplified by $\var{getName'}$ and $\var{alice'}$ in
\Cref{sec:example-simple-record-subtyping}, this record is translated
as $\transl{A} = \forall
\theta_1\dots\theta_n.\Record{\ell_1^{\theta_1} : \transl{A_1}; \dots;
  \ell_n^{\theta_n} : \transl{A_n}}$.
With the canonical order, we can guarantee that $\ell_i$ always appears at the
$i$-th position in the record and possesses the presence variable bound at the
$i$-th position.
The full translation is as follows.

\noindent
\hspace{-1em}
\begin{minipage}[t]{0.45\textwidth}
  \begin{equations}
    \transl{-} &:& \Types \to \Types\\
    \transl{\Record{\ell_i : A_i}_i}
    &=& (\forall \theta_i)_i.\Record{\ell_i^{\theta_i} : \transl{A_i}}_i\\
  \end{equations}
\end{minipage}
\hspace{2em}
\begin{minipage}[t]{0.5\textwidth}
  \begin{equations}
    \transl{-} &:& \Deriv \to \Terms\\
    \transl{\Record{\ell_i = M_i}_i^{\Record{\ell_i : A_i}_i}}
    &=& (\Lambda \theta_i)_i.\Record{\ell_i =
        \transl{M_i}}_i^{\Record{\ell_i^{\theta_i} : \transl{A_i}}_i}\\

    \transl{M^{\Record{\ell_i : A_i}_i}.\ell_j}
    &=& (\transl{M}\,({P_i})_i).\ell_j \\
    \multicolumn{3}{r}{\text{where }
      \ba[t]{@{~}l@{~}l@{\quad}l}
        P_i = \Absent &, i \neq j &
        P_j = \Present
      \ea
    }\\

    \transl{M^{\Record{\ell_i : A_i}_i} \Upcast \Record{\ell'_j : A'_j}_j}
    &=& (\Lambda \theta_j)_j.\transl{M}\,(\hk{P_i})_i \\
    \multicolumn{3}{r}{\text{where }
      \ba[t]{@{~}l@{~}l@{\quad}l@{~}l}
        P_i = \Absent &, \ell_i \notin (\ell'_j)_j &
        P_i = \theta_j &, \ell_i = \ell'_j
      \ea
    }\\
  \end{equations}
\end{minipage}

The translation preserves typing derivations.
\begin{restatable}[Type Preservation]{theorem}{STLCRecSubSTLCRecPreTYPE}
  Every well-typed \STLCRecSub term $\typ{\Delta;\Gamma}{M : A}$ is translated
  to a well-typed \STLCRecPre term $\typ{\transl{\Delta};\transl{\Gamma}}{\transl{M}
  : \transl{A}}$.
\end{restatable}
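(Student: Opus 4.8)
The plan is to proceed by induction on the typing derivation of $\typ{\Delta;\Gamma}{M : A}$ in $\STLCRecSub$. Since the translation is local, it is homomorphic on every construct except record construction, projection, and upcast, so for the $\tylab{Var}$, $\tylab{Lam}$, and $\tylab{App}$ cases I would simply apply the induction hypothesis to the subderivations and reassemble with the corresponding $\STLCRecPre$ rule, using that locality forces $\transl{A \to B} = \transl{A} \to \transl{B}$ and $\transl{\alpha} = \alpha$. Before starting, I would establish two auxiliary lemmas: that the type translation preserves well-kindedness, so that every type annotation in a translated term is well-kinded in $\transl{\Delta}$ (extended with the freshly introduced presence variables); and a weakening lemma for $\STLCRecPre$, needed because the record-introduction and upcast cases type $\transl{M}$ under binders $(\theta_i)_i$ that do not occur in $\transl{\Gamma}$.

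For record introduction, given $[\typ{\Delta;\Gamma}{M_i : A_i}]_i$ the induction hypothesis yields $[\typ{\transl{\Delta};\transl{\Gamma}}{\transl{M_i} : \transl{A_i}}]_i$; after weakening with the fresh $(\theta_i)_i$ I would type the annotated record $\Record{\ell_i = \transl{M_i}}_i^{\Record{\ell_i^{\theta_i}:\transl{A_i}}_i}$ by $\tylab{Record}$ and then apply $\tylab{PreLam}$ repeatedly, each application discharging the side condition $\theta_i \notin \ftv{\transl{\Gamma}}$ by freshness, reaching $(\forall\theta_i)_i.\Record{\ell_i^{\theta_i}:\transl{A_i}}_i = \transl{\Record{\ell_i:A_i}_i}$. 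For a projection $M^{\Record{\ell_i:A_i}_i}.\ell_j$, the induction hypothesis gives $\transl{M}$ the polymorphic record type, and I would instantiate each quantifier by $\tylab{PreApp}$ with $P_j = \Present$ and $P_i = \Absent$ for $i \neq j$ (well-kinded by $\klab{Present}$ and $\klab{Absent}$), leaving a row in which $\ell_j$ is present, so that $\tylab{Project}$ applies and returns $\transl{A_j}$.

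The upcast case is the crux. Here the source judgement is $\typ{\Delta;\Gamma}{M : \Record{\ell_i:A_i}_i}$ with $\Record{\ell_i:A_i}_i \subtype \Record{\ell'_j:A'_j}_j$, so $\ssublab{Record}$ gives $\{\ell'_j\}_j \subseteq \{\ell_i\}_i$ together with $R|_{\dom{R'}} = R'$, the latter guaranteeing that whenever $\ell_i = \ell'_j$ the payloads agree, $A_i = A'_j$, and hence $\transl{A_i} = \transl{A'_j}$. Working under the outer abstraction over $(\theta_j)_j$, I would instantiate the quantifiers of $\transl{M}$'s type by applying $(\hk{P_i})_i$, where $P_i = \Absent$ when $\ell_i$ is dropped and $P_i = \theta_j$ when $\ell_i = \ell'_j$ (well-kinded by $\klab{Absent}$ and $\klab{PreVar}$, the latter since $\theta_j$ is in scope). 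This yields the row $\Record{\ell_i^{P_i}:\transl{A_i}}_i$, whose absent entries may be discarded and whose surviving entries coincide, by the payload agreement above, with those of $\Record{(\ell'_j)^{\theta_j}:\transl{A'_j}}_j$; here I rely on the $\STLCRecPre$ equivalence that ignores absent labels, together with the canonical-order convention that keeps label $\ell_i$ aligned with the presence variable bound at position $i$. Finally $\tylab{PreLam}$ re-abstracts over $(\theta_j)_j$ to give $\transl{\Record{\ell'_j:A'_j}_j}$.

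The step I expect to be the main obstacle is precisely this row reconciliation in the upcast case: making fully precise that instantiating the subtype's presence variables with a mixture of $\Absent$ and retained variables, and then quotienting by the absent-label-erasing equivalence, produces exactly the translated supertype. This requires careful bookkeeping of the reindexing between the subtype labels $(\ell_i)_i$ and the supertype labels $(\ell'_j)_j$, and leans essentially on \emph{both} halves of the $\ssublab{Record}$ premise (domain inclusion determining which labels survive, and the restriction equality ensuring payload agreement), as well as on the canonical-ordering assumption adopted at the start of the subsection.
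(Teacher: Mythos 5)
Your proposal is correct and follows essentially the same route as the paper's proof: induction on typing derivations, with the homomorphic cases dispatched by the induction hypothesis, the record case closed by \tylab{Record} plus repeated \tylab{PreLam}, the projection case by instantiating with $\Present$/$\Absent$ via \tylab{PreApp} before \tylab{Project}, and the upcast case by instantiating with $\Absent$ or $\theta_j$ under the canonical label order and the absent-label-erasing row equivalence, then re-abstracting. The paper's appendix proof is merely terser; your explicit kinding and weakening lemmas and the careful reindexing in the upcast case fill in bookkeeping the paper leaves implicit rather than diverging from it.
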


Similarly to \Cref{sec:encode-stlcvarsub-stlcvarrow}, we annotate type
applications introduced by the translation of upcast with @, and write
$\vreducesto$ for the associated reduction rule and $\vreducestocl$
for its compatible closure.
\begin{reductions}
  \vlab{PreLam} & (\Lambda \theta.M)\,\hk{P} &\vreducesto& M[P/\theta]\\
\end{reductions}
We also re-use the \upcastnewlab{Nested} reduction rule defined in
\Cref{sec:encode-stlcvarsub-stlcvarrow}.
There is a one-to-one correspondence between $\beta$-reductions
(modulo $\treducestocl$), and a correspondence between one upcast
reduction and a sequence of $\vreducestocl$ reductions.

\begin{restatable}[Operational Correspondence]{theorem}{STLCRecSubSTLCRecPre}
  \label{thm:oc-recsub-recpre}
  The translation $\transl{-}$ from \STLCRecSub to \STLCRecPre has the following properties:
  \begin{itemize}[align=spaceleft]
  \item[\textsc{\emph{Simulation}}]
  If $M \breducestocl N$, then $\transl M \treducestocl^\ast\breducestocl \transl
  N$; if $M \ureducestocl N$, then $\transl M \vreducestocl^\ast \transl N$.
  \item[\textsc{\emph{Reflection}}]
  If $\transl M \treducestocl^\ast\breducestocl \transl N$, then
  $M \breducestocl N$;
  if $\transl M \vreducestocl N'$, then there exists $N$ such that
  $N'\vreducestocl^\ast \transl N$ and $M \uunewreducestocl N$.
  \end{itemize}
\end{restatable}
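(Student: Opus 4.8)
The plan is to follow the same template as the proof of \Cref{thm:oc-varsub-varrow} for the variant case, adapting it to the fact that a record of $n$ fields carries $n$ presence variables rather than a single row variable. This is exactly what turns the $\treducestocl^\que$ and single $\vreducestocl$ steps of the variant case into the $\treducestocl^\ast$ and $\vreducestocl^\ast$ steps needed here: the translation of a projection must instantiate all $n$ presence variables (via unannotated applications, reducing by $\treducestocl$) before the underlying record projection fires, and the translation of an upcast must instantiate all $n$ presence variables (via @-annotated applications, reducing by $\vreducestocl$) to realise one source upcast.

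First I would establish two preparatory lemmas. The key one is a substitution lemma $\transl{M[N/x]} = \transl{M}[\transl{N}/x]$, stated on derivations since the translation is defined on derivations; because the translation is compositional and treats term variables homomorphically, this follows by a routine induction, the only points to check being the record, projection, and upcast clauses, where the newly introduced presence abstractions and applications commute with the substitution. The second is a shape observation: the translation of a record value $\Record{\ell_i = M_i}_i$ is always a presence abstraction $(\Lambda \theta_i)_i.\Record{\ell_i = \transl{M_i}}_i^{\ldots}$, so the presence applications introduced by the translations of projection and of upcast become redexes precisely when their subject is the translation of a record. Crucially, each of the three nontrivial clauses places $\transl{M}$ in a \emph{single} occurrence, so reductions propagate through translated contexts without the duplication that forced the multi-step $\breducestocl^\ast$ in \Cref{sec:encode-stlcrecsub-stlcrec}.

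For simulation I would induct on the compatible closure of the source reduction. In the congruence cases the single-occurrence property lets the induction hypothesis propagate directly. For the base cases: $\betalab{Lam}$ maps to one $\breducestocl$ step using the substitution lemma; $\betalab{Project}$ on an $n$-field record maps to $n$ steps of $\treducestocl$ instantiating the presence variables, followed by one $\betalab{Project^\star}$ step, giving $\treducestocl^\ast\breducestocl$; and $\upcastlab{Record}$ maps to $n$ steps of $\vreducestocl$ instantiating the presence variables with $P_i = \Absent$ for dropped fields and $P_i = \theta_j$ for retained fields, giving $\vreducestocl^\ast$. The essential step in the upcast case is identifying the fully instantiated record with $\transl{N}$: after substitution the dropped fields remain syntactically present but carry the annotation $\Absent$, so the result equals $\transl{N}$ only \emph{up to the equivalence that ignores absent labels} (and label order). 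I would make explicit that target terms and types are compared up to this equivalence, as stipulated when $\STLCRecPre$ is introduced.

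For reflection I would induct on the target reduction, arguing that the only redexes present in a translated term are images of source redexes. Reflection of the $\beta$ pattern mirrors simulation: a block of the form $\treducestocl^\ast\breducestocl$ is forced to be either a $\betalab{Lam}$ image or the $n$ presence instantiations plus projection of a $\betalab{Project}$ image, and each recovers a unique source redex and contractum $N$. I expect the reflection of a single $\vreducestocl$ step to be the main obstacle, and it is entirely a matter of granularity and bookkeeping. Since one source upcast corresponds to $n$ @-annotated applications, a single $\vreducestocl$ step is in general only a \emph{partial} upcast, which is why the statement allows $N' \vreducestocl^\ast \transl{N}$ rather than landing on $\transl{N}$ at once: one must track which presence variables have already been instantiated and complete the remaining applications to reach a translated term. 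Matching this back to the source then requires the nested-upcast rule, because the subject of a translated upcast is syntactically a $\Lambda$-abstraction even when it is itself (the translation of) an upcast; the target can therefore instantiate the outer presence variables before the inner upcast "completes", and collapsing the two adjacent source upcasts via $\upcastnewlab{Nested}$ is precisely what lines up with reducing the two adjacent blocks of @-applications. Hence the source step is captured by $\uunewreducestocl$ rather than by $\ureducestocl$ alone, and the absent-label equivalence reappears, since the completed target term equals $\transl{N}$ only after discarding the $\Absent$-annotated fields.
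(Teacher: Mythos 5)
Your proposal is correct and follows essentially the same route as the paper's proof: the same substitution lemma, the same base-case analysis for simulation ($n$ unannotated presence instantiations via $\treducestocl$ before \betalab{Project^\star}, and $n$ @-applications via $\vreducestocl$ for an upcast, closed up to the equivalence ignoring absent labels), and the same reflection analysis in which a single $\vreducestocl$ step on a nested upcast is completed to $\transl{N}$ and matched in the source by \upcastnewlab{Nested}, hence $\uunewreducestocl$. The only cosmetic difference is that you induct on the target reduction for reflection where the paper inducts on the source term $M$, recovering the redex shapes from the form of translated terms in the same way.
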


Unlike \Cref{thm:oc-varsub-varrow}, one step of reduction in
\STLCRecSub might be mapped to multiple steps of reduction in
\STLCRecPre because we might need to reduce the type application of
multiple presence types in the translation results of projection and
upcast.
Reflection is again basically the reverse of simulation, requiring at
least one step of reduction in \STLCRecPre.
The proofs of type preservation and operational correspondence can be found in \Cref{app:proof-stlcrecsub-stlcrecpre}.

\subsection{Swapping Row and Presence Polymorphism}
\label{sec:swapping-row-and-pre}

In \Cref{sec:encode-stlcvarsub-stlcvarrow} and
\Cref{sec:encode-stlcrecsub-stlcrecpre}, we encode simple subtyping for variants
using row polymorphism, and simple subtyping for records using presence
polymorphism. These encodings enjoy the property that they only introduce new
type abstractions and applications.
A natural question is whether we can swap the polymorphism used by the encodings
meanwhile preserve the type-only property.
As we have seen in \Cref{sec:example-swapping}, an intuitive attempt to encode
simple record subtyping with row polymorphism failed. Specifically, we have the
problematic translation
\begin{examples}
& \phantom{=}\ \transl{\var{getName}\ (\var{alice}\Upcast\Record{\Name:\String})} \\
& =\transl{\var{getName}}\ (\Age:\Int)\ \transl{\var{alice}\Upcast\Record{\Name:\String}} \\
& =\var{getName}_\xmark\ (\Age:\Int)\ \var{alice}
\end{examples}
First, the type information $\Age:\Int$ is not accessible to a compositional
type-only translation of the function application here.
Moreover, the type preservation property is also broken:
$\transl{\var{alice}\Upcast\Record{\Name:\String}}$ should have type
$\transl{\Record{\Name:\String}}$, but here it is just translated to
$\var{alice}$ itself, which has an extra label $\Age$ in its record type.
We give a general non-existence theorem.

\begin{restatable}{theorem}{globalRecSubRecRow}
  \label{thm:swapping}
  There exists no global type-only encoding of $\STLCRecSub$ in $\STLCRecRow$,
  and no global type-only encoding of $\STLCVarSub$ in $\STLCVarPre$.
\end{restatable}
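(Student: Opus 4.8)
The plan is to derive a contradiction from the existence of a type-preserving type-only translation; in fact the operational correspondence will not be needed, since type preservation alone already fails. I would first isolate the only two facts about type-only translations that the argument uses. Write $\erase{-}$ for the common erasure into the untyped record/variant calculus, collapsing upcasts, type annotations, and all type abstractions and applications. The type-only property gives $\erase{\transl M}=\erase M$, and since an upcast is itself a purely type-level construct, compositionality forces the translation of an upcast to reuse the translation of its subject modulo type operations: $\transl{M\Upcast B}=\mathcal C[\transl M]$ for some context $\mathcal C$ built solely from (row- or presence-) abstractions and applications. Extracting this normal form from the formal definition of \emph{type-only} is the one genuinely delicate bookkeeping step.

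For the record case ($\STLCRecSub$ into $\STLCRecRow$) I would argue through \emph{introduction} forms. Fix a label $\ell$ not occurring in the finite type $\transl{\Record{\Name:\String}}$, and consider the closed value $r=\Record{\Name=\makestring{a};\ell=0}$ of type $\Record{\Name:\String;\ell:\Int}$ and its upcast $r\Upcast\Record{\Name:\String}$. The translation sends $r$ to a \emph{value} $\transl r$ of type $\transl{\Record{\Name:\String;\ell:\Int}}$ whose erasure is the two-field record $\Record{\Name=\makestring{a};\ell=0}$. A canonical-forms lemma for $\STLCRecRow$ (there is no subtyping in the target, so a record literal has a closed row type containing exactly its fields, and a row variable contributes no actual fields to a value) then forces $\ell$ to occur as a \emph{literal} field of $\transl{\Record{\Name:\String;\ell:\Int}}$. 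The crucial invariant is that row operations never delete a literal label: $\Lambda\rho$ merely adds a quantifier and an application $\cdot\,R$ only substitutes for $\rho$, so every literal label of the type of $\transl r$ survives in the type of $\mathcal C[\transl r]$. Hence $\transl{r\Upcast\Record{\Name:\String}}=\mathcal C[\transl r]$ still mentions $\ell$, whereas type preservation demands its type be $\transl{\Record{\Name:\String}}$, which by choice of $\ell$ does not --- a contradiction.

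The variant case ($\STLCVarSub$ into $\STLCVarPre$) is dual, but I would drive it through \emph{elimination} forms, since the corresponding value injects at $\Year$ and never exhibits the new label itself. Fix $\ell$ not occurring in $\transl{\Variant{\Year:\Int}}$ and consider $v=(\Year\,1984)^{\Variant{\Year:\Int}}$, its upcast $v\Upcast\Variant{\ell:\Int;\Year:\Int}$, and a surrounding $\Case$ with branches for both $\ell$ and $\Year$. The counterpart invariant is sharper: presence operations preserve the entire label \emph{set} of a type, since presence substitution replaces a $\theta$ by $\Absent$, $\Present$, or another $\theta'$ and touches no label. Thus the type of $\mathcal C[\transl v]$ has the same label set as $\transl{\Variant{\Year:\Int}}$, omitting $\ell$. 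On the other hand, type preservation of the translated $\Case$ forces its scrutinee $\mathcal C[\transl v]$ to have a variant type containing $\ell$: by the $\STLCVarPre$ case rule (the dual of record construction) a branch on $\ell$ requires $\ell$ among the scrutinee's labels. Since that type is $\transl{\Variant{\ell:\Int;\Year:\Int}}$, we conclude $\ell$ both does and does not occur in it.

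I expect the main obstacle to be the first paragraph rather than the two contradictions: formalising \emph{type-only} so as to yield both $\erase{\transl M}=\erase M$ and the normal form $\transl{M\Upcast B}=\mathcal C[\transl M]$, and then proving the two structural lemmas (canonical forms for $\STLCRecRow$, and label-set invariance under presence operations) at the level of derivations rather than raw terms. Once these are in place the result follows from a single design principle --- row polymorphism can enlarge a label set but never shrink it, while presence polymorphism changes presences but never alters the label set at all --- so the field-removing record upcast and the constructor-adding variant upcast each lack any type-only witness, the offending label being forced through an introduction form for records and an elimination form for variants.
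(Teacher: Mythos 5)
Your record half is sound and is essentially the paper's first proof: the paper compares $\Record{}$ with $\Record{\ell=y}\Upcast\Record{}$ and derives $\forall\ol{\alpha}.\Record{} = \forall\ol{\alpha}'.\Record{\ell:\forall\ol{\gamma}'.\alpha_0}$, using exactly your invariant that row abstraction and row application only quantify or substitute and never delete a literal label of a record introduction; your variant with a label chosen fresh for the fixed finite type $\transl{\Record{\Name:\String}}$ is an equivalent packaging, and your reduction of \emph{type-only} to the normal form $\transl{M\Upcast B}=\mathcal{C}[\transl{M}]$ with $\mathcal{C}$ purely type-level matches the paper's characterisation. (The paper additionally gives two proofs avoiding type preservation and compositionality, to show the result is robust to weakening those assumptions, but that is beyond what the statement requires.)

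The variant half, however, has a genuine gap: the step claiming that ``a branch on $\ell$ requires $\ell$ among the scrutinee's labels'' is false in $\STLCVarPre$. The paper's \tylab{Case} rule for $\STLCVarPre$ explicitly does \emph{not} require the matched labels to be present, and rows are identified up to absent labels with arbitrary payload types; hence a scrutinee of type $\Variant{\Year^{\Present}:\Int}$ is equivalent to $\Variant{\ell^{\Absent}:B;\Year^{\Present}:\Int}$ for whatever $B$ makes the $\ell$-branch typecheck, so your translated $\Case$ typechecks even though $\ell$ occurs nowhere in the scrutinee's type. Your freshness choice constrains only $\transl{\Variant{\Year:\Int}}$; nothing in your argument forces $\ell$ to occur (non-absently) in $\transl{\Variant{\ell:\Int;\Year:\Int}}$, so the two sides of your intended contradiction never meet --- it is consistent, as far as your lemmas go, that the translation of the larger variant type simply omits $\ell$. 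The paper closes this hole with an \emph{introduction} form rather than an elimination form: the direct injection $(\ell\,y)^{\Variant{\ell:\Int;\Year:\Int}}$ must translate, by \tylab{Inject} in $\STLCVarPre$ (which demands $(\ell^{\Present}:A)\in R$ in the annotation), to a term whose type carries $\ell$ present; since this type equals $\transl{\Variant{\ell:\Int;\Year:\Int}}$, the label is pinned there, and comparing with the translated upcast of the $\Year$-injection --- whose type arises from the fixed $\transl{\Variant{\Year:\Int}}$ by presence substitution, which (as you correctly observe) can toggle presences but never introduce a fresh label --- yields the contradiction. So your guiding slogan is right, but for variants it must be cashed out through injection at the enlarged type, not through $\Case$; relatedly, your ``label-set invariance'' is stated too strongly given the absent-label equivalence (instantiating $\theta:=\Absent$ shrinks the effective label set), though the direction you actually use (no label can be added) is the correct one.
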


The extensions for $\STLCRecRow$ and $\STLCVarPre$ are straightforward
and can be found in \Cref{app:more-calculi}.
The proofs of this theorem can be found in
\Cref{app:non-existence-swapping}.
We will give further non-existence results in
\Cref{sec:full-subtyping}.
The core idea underlying the proofs of this kind of non-existence
result is to construct counterexamples and use proof by contradiction.
One important observation is that in our case a type-only translation
ensures that terms are invariant under the translation modulo type
abstraction and type application.
As a consequence, we may characterise the general form of any such
translation by accounting for the possibility of adding type
abstractions and type applications in every possible position.
Then we can obtain a contradiction by considering the general form of
type-only translations of carefully selected terms.

To give an example, let us consider the proof of \Cref{thm:swapping}.
Consider $\Record{}$ and $\Record{\ell=y}\Upcast\Record{}$ which have
the same type under environments $\Delta=\alpha_0$ and $\Gamma =
y:\alpha_0$.
Any type-only translation must yield $\transl{\Record{}} = \Lambda
\ol{\alpha} . \Record{}$ and
\[
  \transl{\Record{\ell=y} \Upcast \Record{}}
  = \Lambda \ol{\beta} . \transl{\Record{\ell=y}}\ \ol B
  = \Lambda \ol{\beta} . (\Lambda\ol{\alpha}' .
    \Record{\ell = \transl{y}\ \ol A'})\ \ol B
  = \Lambda \ol{\beta} . (\Lambda\ol{\alpha}' .
    \Record{\ell = (\Lambda\ol{\beta}'.y)\ \ol A'})\ \ol B
\]
which can be simplified to
$\Lambda \ol{\gamma} . \Record{\ell=\Lambda \ol{\delta} . y}$.
Thus, $\transl{\Record{}}$ has type $\forall\ol{\alpha}.\Record{}$,
and $\transl{\Record{\ell=y} \Upcast \Record{}}$ has type $\forall
\ol{\gamma} . \Record{\ell : \forall \ol{\delta}.\alpha_0}$.
By type preservation, they should still have the same type, which
implies $\forall \ol{\alpha} . \Record{} = \forall \ol{\gamma} .
\Record{\ell : \forall \ol{\delta} . \alpha_0}$.
However, this equation obviously does not hold, showing a
contradiction.

The above proof relies on the assumption that translations should
always satisfy the type preservation theorem.
Sometimes this assumption can be too strong.
In order to show the robustness of our theorem, we provide three
proofs of \Cref{thm:swapping} in \Cref{app:non-existence-swapping},
where only one of them relies on type preservation.
The second proof uses the compositionality and a similar argument to
the $\var{getName}_\xmark$ example in \Cref{sec:example-swapping},
while the third proof does not rely on either of them.

In \Cref{sec:prenex-polymorphism}, we will show that it is possible to
simulate record subtyping with rank-1 row polymorphism and type
inference, at the cost of a weaker type preservation property and some
extra conditions on the source language.

\section{Full Subtyping as Polymorphism}
\label{sec:full-subtyping}
\label{SEC:FULL}

So far we have only considered simple subtyping, which means the subtyping
judgement applies shallowly to a single variant or record constructor
(width subtyping).
Any notion of simple subtyping can be mechanically lifted to
full subtyping by inductively propagating the subtyping relation to the
components of each type.
The direction of the subtyping relation remains the same for covariant
positions, and is reversed for contravariant positions.

In this section, we consider encodings of full subtyping.
We first formalise the calculus \STLCVarRecSubFull\ with full subtyping for
records and variants, and give its standard term-involved translation to
\STLCVarRec (\Cref{sec:encode-full-subtyping}).
Next we give a type-only encoding of strictly covariant record subtyping
(\Cref{sec:encode-co-subtyping}) and a non-existence result for
variants (\Cref{sec:non-stlcvarsubco-stlcvarrowpre}).
Finally, we give a non-existence result for type-only encodings of
full record subtyping as polymorphism (\Cref{sec:non-full-subtyping}).

\subsection{Local Term-Involved Encoding of $\STLCVarRecSubFull$ in $\STLCVarRec$}
\label{sec:encode-full-subtyping}

\begin{figure}[t]
  \flushleft
  \fbox{$A \subtype A'$}
  \begin{mathpar}
    \inferrule*[Lab=\fsublab{Var}]
      { }
      {\alpha \subtype \alpha}

    \inferrule*[Lab=\fsublab{Fun}]
      {A' \subtype A \\ B \subtype B'}
      {A \to B \subtype A' \to B'}

    \inferrule*[Lab=\fsublab{Variant}]
      {\dom{R}\subseteq \dom{R'} \\\\
       [A_i \subtype A_i']_{(\ell_i:A_i) \in R, (\ell_i:A_i') \in R'}}
      {\Variant{R} \subtype \Variant{R'}}

    \inferrule*[Lab=\fsublab{Record}]
      {\dom{R'}\subseteq \dom{R} \\\\
       [A_i \subtype A_i']_{(\ell_i:A_i) \in R, (\ell_i:A_i') \in R'}}
      {\Record{R} \subtype \Record{R'}}
  \end{mathpar}
  \caption{Full subtyping rules of \STLCVarRecSubFull.}
  \label{fig:stlc-var-rec-sub-full}
\end{figure}

We first consider encoding \STLCVarRecSubFull, an extension of
\STLCVarSub and \STLCRecSub with full subtyping, in \STLCVarRec, the
combination of \STLCVar and \STLCRec. %
\Cref{fig:stlc-var-rec-sub-full} shows the standard full subtyping
rules of \STLCVarRecSubFull.
We inductively propagate the subtyping relation to sub-types, and reverse the
subtyping order for function parameters because of contravariance.
The reflexivity and transitivity rules are admissible.

For the dynamic semantics of \STLCVarRecSubFull, one option is to give
concrete upcast rules for each value constructor, similar to \STLCVarSub
and \STLCRecSub.
However, as encoding full subtyping is more intricate than encoding
simple subtyping (especially the encoding in
\Cref{sec:encode-co-subtyping}), upcast reduction rules significantly
complicate the operational correspondence theorems.
To avoid such complications we adopt an \emph{erasure semantics} for
\STLCVarRecSubFull which, following \citet{tapl}, interprets upcasts
as no-ops.
The type erasure function $\erase{-}$ transforms typed terms in
\STLCVarRecSubFull to untyped terms in \STLCVarRec by erasing all
upcasts and type annotations.
It is given by the homomorphic extension of the following equations.
\begin{equations}
  \erase{M \Upcast A} = \erase{M} \quad
  \erase{\lambda x^A . M} = \lambda x.\erase{M} \quad
  \erase{(\ell\, M)^A} = \ell\, \erase{M}
\end{equations}

We show a correspondence between upcasting and erasure
in \Cref{app:erasure-semantics}.
In the following, we always use the erasure semantics for calculi with
full subtyping or strictly covariant subtyping.

The idea of the local term-involved translation from
\STLCVarRecSubFull to \STLCVarRec in
\Cref{sec:example-co-record-subtyping} has been well-studied as the
\emph{coercion semantics} of subtyping \citep{BREAZUTANNEN1991172,
  BREAZUTANNEN90, tapl}, which transforms subtyping relations
$A\subtype B$ into coercion functions $\transl{A\subtype B}$.
Writing translations in the form of coercion functions ensures
compositionality. The translation is standard and shown in
\Cref{app:coercion-functions}.
For instance, the full subtyping relation in
\Cref{sec:example-co-record-subtyping} is translated to
\begin{examples}
  &\phantom{=}\ \transl{\Record{\Name:\String;\Child:\Record{\Name:\String;\Age:\Int}} \subtype
  \Record{\Child:\Record{\Name:\String}}} \\
  &= (\lambda x%
  .\Record{\Child = \transl{\Record{\Name:\String;\Age:\Int}\subtype \Record{\Name:\String}}\
          x.\Child})\\
  &= \lambda x%
  .\Record{\Child = (\lambda x%
          .\Record{\Name=x.\Name})\
          x.\Child}) \\
  &\breducestocl^\ast \lambda x%
  .\Record{\Child =
          \Record{\Name=x.\Child.\Name}}
\end{examples}
We refer the reader to \citet{tapl} and \citet{BREAZUTANNEN90} for the
standard type preservation and operational correspondence theorems and
proofs.

\subsection{Global Type-Only Encoding of \STLCRecSubCo in \STLCRecPre}
\label{sec:encode-co-subtyping}

As a stepping stone towards exploring the possibility of type-only
encodings of full subtyping, we first consider an easier problem: the
encoding of \STLCRecSubCo, a calculus with strictly covariant structural
subtyping for records.
Strictly covariant subtyping lifts simple subtyping through only the
covariant positions of all type constructors.
For \STLCVarRecSubCo, the only change with respect to
\STLCVarRecSubFull is to replace the subtyping rule \fsublab{Fun} with
the following rule which requires the parameter types to be equal:
\begin{mathpar}
  \inferrule*%
    {B \subtype B'}
    {A \to B \subtype A \to B'}
\end{mathpar}

As illustrated by the examples $\var{carol}_\xmark$ and $\var{carol'}$
from \Cref{sec:example-co-record-subtyping}, we can extend the idea of
encoding simple record subtyping as presence polymorphism described in
\Cref{sec:encode-stlcrecsub-stlcrecpre} by hoisting quantifiers to the
top-level, yielding a global but type-only encoding of \STLCRecSubCo
in \STLCRecPre.
The full type and term translations are spelled out in
\Cref{fig:encode-stlcrecsubco-stlcrecpre} together with three
auxiliary functions.
\noindent
\begin{figure}[htbp]
\hspace{-1em}
\begin{minipage}[t]{.45\linewidth}
\raggedright \small
\begin{equations}
  \transl{-} &:& \Types \to \Types\\
  \transl{A \to B}
  &=& \forall \ol{\theta} . \transl{A} \to \transl{B, \ol{\theta}} \\
  \multicolumn{3}{r}{\text{where }
      \ba[t]{@{~}l@{~}l}
      \ol{\theta} = \transpre{\theta}{B} \\
      \ea
  }\\
  \transl{\Record{\ell_i : A_i}_i}
  &=& \forall (\theta_i)_i\, (\ol{\theta}_i)_i.
      \Record{\ell_i^{\theta_i} : \transl{A_i, \ol{\theta}_i}}_i \\
  \multicolumn{3}{r}{\text{where }
      \ba[t]{@{~}l@{~}l}
      \ol{\theta}_i = \transpre{\theta_i}{A_i} \\
      \ea
  }\\[3ex]

  \transl{-} &:& \Deriv \to \Terms\\
  \transl{\lambda x^A . M^B}
  &=& \Lambda \ol{\theta} . \lambda x^{\transl{A}} . \transl{M}\ \ol{\theta} \\
  \multicolumn{3}{r}{\text{where }
      \ba[t]{@{~}l@{~}l}
      \ol{\theta} = \transpre{\theta}{B}
      \ea
  }\\

  \transl{M^A\ N^B}
  &=& \Lambda \ol{\theta} . (\transl{M}\ \ol{\theta})\ \transl{N} \\
  \multicolumn{3}{r}{\text{where }
      \ba[t]{@{~}l@{~}l}
      \ol{\theta} = \transpre{\theta}{A}
      \ea
  }\\

  \transl{\Record{\ell_i = M_i^{A_i}}_i}
  &=& \Lambda (\theta_i)_i\, (\ol{\theta}_i)_i .
      \Record{\ell_i = \transl{M_i}\ \ol{\theta}_i}_i^{\Record{\ell_i^{\theta_i} : \transl{A_i}}_i}\\
  \multicolumn{3}{r}{\text{where }
      \ba[t]{@{~}l@{~}l}
      \ol{\theta}_i = \transpre{\theta_i}{A_i} \\
      \ea
  }\\

  \transl{M^{\Record{\ell_i : A_i}_i}.\ell_j}
  &=& \Lambda \ol{\theta} . (\transl{M}\ (P_i)_i\ (\ol{P}_i)_{i<j}\ \ol{\theta}\ (\ol{P}_i)_{j<i}).\ell_j \\
  \multicolumn{3}{r}{\text{where }
      \ba[t]{@{~}l@{~}l@{\qquad}l}
      P_i = \Absent &, i \neq j &
      \ol{\theta} = \transpre{\theta}{A_j} \\
      P_j = \Present & &
      \ol{P}_i = \transpre{\Absent}{A_i}
      \ea
  }\\

  \transl{M^A \Upcast B}
  &=& \Lambda \ol{\theta} . \transl{M}\ {\ol{P}} \\
  \multicolumn{3}{r}{\text{where }
      \ba[t]{@{~}l@{~}l}
      (\ol{\theta}, \ol{P}) = \transpre{\theta}{A\subtype B}
      \ea
}\\
\end{equations}
\end{minipage}%
\hspace{1em}
\begin{minipage}[t]{.5\linewidth}
\raggedleft \small
\begin{equations}
  \transl{-,-} &:& (\Types,\ol{\Presence}) \to \Types\\
  \transl{A, \ol{P}}
  &=& A'[\ol{P} / \ol{\theta}'] \\
  \multicolumn{3}{r}{\text{where }
      \ba[t]{@{~}l@{~}l}
      \forall \ol{\theta}'.A' = \transl{A}
      \ea
  }\\[3ex]

  \transpre{-}{-} &:& (\Presence, \Type) \to \ol{\Presence}\\
  \transpre{P}{\alpha}
  &=& \cdot
  \\
  \transpre{P}{A \to B}
  &=& \transpre{P}{B}
  \\
  \transpre{P}{\Record{\ell_i : A_i}_i}
  &=& (P_i)_i\ \transpre{P_i}{A_i}_i \\
  \multicolumn{3}{r}{\text{where }
      \ba[t]{@{~}l@{~}l}
      P_i = \theta_i &, P \text{ is a variable }\theta \\
      P_i = \Absent &, P = \Absent \\
      P_i = \Present &, P = \Present
      \ea
  }
  \\[3ex]
  \\

  \transpre{-}{-} : (\Presence, \Type\subtype\Type) \to (\ol{\Presence}, \ol{\Presence}) \span\span\\
  \transpre{\theta}{\alpha\subtype\alpha}
  &=& (\cdot, \cdot)
  \\
  \transpre{\theta}{A \to B \subtype A \to B'}
  &=& \transpre{\theta}{B \subtype B'}
  \\
  \transpre{\theta}{\Record{\ell_i : A_i}_i \subtype \Record{\ell_j' : A_j'}_j}
  &=& ((\theta_j)_j\ (\ol{\theta}_j)_j, (P_i)_i\ (\ol{P}_i)_i) \\
  \multicolumn{3}{r}{\text{where }
      \ba[t]{@{~}l@{~}l@{\qquad}l@{~}l}
      (\ol{\theta}_j, \ol{P}_j') = \transpre{\theta_j}{A_i \subtype A_j'} , \ell_i = \ell_j' \span\span\span\\
      P_i = \Absent &, \ell_i \notin (\ell'_j)_j &
      \ol{P}_i = \transpre{\Absent}{A_i} &, \ell_i \notin (\ell'_j)_j \\
      P_i = \theta_j &, \ell_i = \ell_j' &
      \ol{P}_i = \ol{P}_j' &, \ell_i = \ell_j'
      \ea
  } \\
\end{equations}
\end{minipage}

\caption{A global type-only translation from $\STLCRecSubCo$ to $\STLCRecPre$.}
\label{fig:encode-stlcrecsubco-stlcrecpre}
\end{figure}

As in \Cref{sec:encode-stlcrecsub-stlcrecpre}, we rely on a canonical
order on labels.
The auxiliary function $\transl{A,\ol{P}}$ instantiates a polymorphic
type $A$ with $\ol{P}$, simulating the type application in the term
level.
The auxiliary function $\transpre{\theta}{A}$ takes a presence
variable $\theta$ and a type $A$, and generates
a sequence of presence variables based on $\theta$ that have the same
length as the presence variables bound by $\transl{A}$.
It is used to allocate a fresh presence variable for every label in
records on strictly covariant positions.
We can also use it to generate a sequence of $\Present$ or $\Absent$
for the instantiation of $\transl{A}$ by $\transpre{\Present}{A}$ and
$\transpre{\Absent}{A}$.
The auxiliary function $\transpre{\theta}{A\subtype B}$ takes a
presence variable $\theta$ and a subtyping relation $A\subtype B$, and
returns a pair $(\ol \theta, \ol{P})$.
The sequence of presence variables $\ol{\theta}$ is the same as
$\transpre{\theta}{B}$.
The sequence of presence types are used to instantiate $\transl{A}$ to
get $\transl{B}$ (as illustrated by the term translation
$\transl{M^A\Upcast B} = \Lambda \ol{\theta}.\transl{M}\,{\ol{P}}$ which
has type $\transl{B}$).

The translation on types is straightforward. We not only introduce a
presence variable for every element of record types, but also move the
quantifiers of the types of function bodies and record elements to the
top level, as they are on strictly covariant positions.
While the translation on terms (derivations) may appear complicated,
it mainly focuses on moving type abstractions to the top level by type
application and re-abstraction using the auxiliary functions.
For the projection and upcast cases, it also instantiates the
sub-terms with appropriate presence types.
Notice that for function application $M\ N$, we only need to move the type
abstractions in $\transl{M}$, and for projection $M.\ell_j$, we only need to
move the type abstractions in the payload of $\ell_j$.

Strictly speaking, the type translation is actually not compositional
because of the type application introduced by the term translation.
As a consequence, in the type translation, we need to use the
auxiliary function $\transl{A,\ol{P}}$ which looks into the concrete
structure of $\transl{A}$ instead of using it compositionally.
However, we believe that it is totally fine to slightly compromise the
compositionality of the type translation, which is much less
interesting than the compositionality of the term translation.
Moreover, we can still make the type translation compositional by
extending the type syntax with type operators and type-level type
application of System F$\omega$.

We have the following type preservation theorem. The proof shown in
\Cref{app:proof-recsubco-recpre} follows from induction on typing
derivations of $\STLCRecSubCo$.
\begin{restatable}[Type Preservation]{theorem}{STLCVarRecSubCoSTLCVarRecPreTYPE}
  \label{thm:type-recsubco-recpre}
  Every well-typed \STLCRecSubCo term $\typ{\Delta;\Gamma}{M : A}$ is
  translated to a well-typed \STLCRecPre term
  $\typ{\transl{\Delta};\transl{\Gamma}}{\transl{M} : \transl{A}}$.
\end{restatable}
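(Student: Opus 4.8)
The plan is to proceed by induction on the typing derivation of $\typ{\Delta;\Gamma}{M:A}$ in $\STLCRecSubCo$, after first isolating the combinatorial content of the translation into a handful of lemmas about the auxiliary functions $\transpre{\theta}{A}$, $\transl{A,\ol P}$, and $\transpre{\theta}{A\subtype B}$. Since the term translation constantly performs the pattern ``instantiate the hoisted quantifiers, then re-abstract'', the whole argument hinges on showing that these instantiations and re-abstractions reconstruct exactly the type demanded by $\transl{-}$.

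First I would prove a \emph{length/kinding lemma}: for any well-kinded $A$ and any presence $P$, the sequence $\transpre{P}{A}$ has exactly the same length as the quantifier prefix of $\transl{A}$, and if $P$ is well-kinded (a variable in $\Delta$, or $\Present$/$\Absent$) then every entry of $\transpre{P}{A}$ has kind $\Presence$. This is a routine structural induction on $A$, the only interesting clause being the record clause, where the prefix is the concatenation of the per-field variables $(\theta_i)_i$ with the recursively generated $(\ol\theta_i)_i$. Together with the \emph{instantiation lemma} --- that $\transl{A,\ol P}$ is precisely the body of $\transl{A}=\forall\ol\theta'.A'$ with $\ol P$ substituted for $\ol\theta'$, hence well-kinded under $\Delta$ whenever $\ol P$ is and has length matching $\ol\theta'$ --- these two facts guarantee that every type application emitted by the translation is well-typed and that each re-abstraction closes over exactly the variables it must. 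They also settle the scoping concern: the freshly introduced $\ol\theta$ in, say, the $\lambda$-case are bound immediately by the leading $\Lambda\ol\theta$, so they never escape and the inductive hypothesis applies cleanly under the extended $\Delta$.

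The crux is the \emph{subtyping coherence lemma}: if $A\subtype B$ in $\STLCRecSubCo$ and $\transpre{\theta}{A\subtype B}=(\ol\theta,\ol P)$, then $\ol\theta=\transpre{\theta}{B}$ (so the re-abstraction in the upcast case binds exactly the prefix of $\transl{B}$), the length of $\ol P$ matches the prefix of $\transl{A}$, and instantiating $\transl{A}$ with $\ol P$ yields the body of $\transl{B}$, i.e. $\transl{B}=\forall\ol\theta.\,\transl{A,\ol P}$. I would prove this by induction on the derivation of $A\subtype B$. The variable case is immediate; the record case uses the canonical label ordering carried over from \Cref{sec:encode-stlcrecsub-stlcrecpre} to align, position by position, the present labels of $B$ with a subset of those of $A$, emitting $\Absent$ for dropped labels and recursing on the retained ones. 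The function case is exactly where strict covariance is essential: because the \fsublab{Fun} rule here forces the domains to coincide, $\transl{A}$ appears unchanged on both sides of the arrow, so no presence data need be produced for the contravariant position and the recursion proceeds only on the codomain. A genuinely contravariant domain would instead demand instantiating a quantifier living under the arrow, which top-level hoisting cannot reach --- this is precisely the obstruction that the strict-covariance restriction removes.

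Finally I would carry out the main induction. The \tylab{Var} case is immediate from the pointwise translation of $\Gamma$. For \tylab{Lam} and \tylab{App} the emitted $\Lambda\ol\theta.\cdots$ and $\transl M\ \ol\theta$ typecheck by the length/kinding and instantiation lemmas, using $\transl{A\to B}=\forall\ol\theta.\transl{A}\to\transl{B,\ol\theta}$ with $\ol\theta=\transpre{\theta}{B}$; the abstracted variables are exactly the ones consumed when typing the body against $\transl{B,\ol\theta}$. The record and projection cases are analogous, hoisting the per-field quantifiers to the front and instantiating dropped fields with $\Absent$ and the projected field with $\Present$. The \tylab{Upcast} case is a direct application of the coherence lemma: $\transl{M}$ has type $\transl{A}$, instantiating with $\ol P$ gives $\transl{A,\ol P}$, and re-abstracting over $\ol\theta$ yields $\forall\ol\theta.\transl{A,\ol P}=\transl{B}$ as required. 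I expect the subtyping coherence lemma --- in particular getting its record clause to respect the canonical ordering while correctly interleaving the freshly bound $\ol\theta_j$ with the $\Absent$ markers for dropped labels --- to be the main obstacle; everything else is bookkeeping driven by the two structural lemmas.
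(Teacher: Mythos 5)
Your proposal is correct and takes essentially the same route as the paper: your ``subtyping coherence lemma'' is precisely the paper's Upcast Translation lemma ($\transl{B} = \forall\ol\theta.\transl{A,\ol P}$ for $(\ol\theta,\ol P) = \transpre{\theta}{A\subtype B}$, proved by induction on the subtyping derivation, with strict covariance doing exactly the work you identify in the function case), and your main induction on typing derivations discharges each case by the same instantiate-then-re-abstract pattern via \tylab{PreApp}, weakening, and \tylab{PreLam}. The only difference is presentational: you state the length/kinding and instantiation facts as explicit lemmas where the paper leaves them implicit (the instantiation one is essentially the definition of $\transl{A,\ol P}$).
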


In order to state an operational correspondence result, we use the
erasure semantics for \STLCRecPre given by the standard type erasure
function defined as the homomorphic extension of the following
equations.
\begin{equations}
  \erase{\Lambda \theta . M} = \erase{M} \quad
  \erase{M\ P} = \erase{M} \quad
  \erase{\lambda x^A . M} = \lambda x.\erase{M} \quad
\end{equations}
Since the terms in \STLCRecSubCo and \STLCRecPre are both erased to
untyped \STLCRec, for the operational correspondence we need only show
that any term in \STLCRecSubCo is still erased to the same term after
translation.

\begin{restatable}[Operational Correspondence]{theorem}{STLCRecSubCoSTLCRecPre}
  \label{thm:oc-recsubco-recpre}
  The translation $\transl{-}$ from \STLCRecSubCo to \STLCRecPre satisfies the
  equation $\erase{M} = \erase{\transl{M}}$ for any well-typed term $M$ in
  \STLCRecSubCo.
\end{restatable}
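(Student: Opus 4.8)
The plan is to proceed by structural induction on the typing derivation of $M$ (equivalently, on the structure of $M$, using well-typedness to fix which translation clause applies, since the clause for $M^A\Upcast B$ depends on the subtyping derivation $A\subtype B$). The single guiding observation is that $\transl{-}$ only ever wraps \emph{type-manipulation} syntax around the term-level skeleton of $M$: leading presence abstractions $\Lambda\ol{\theta}$, trailing presence applications (the $\ol{\theta}$, $(P_i)_i$, and $\ol{P}$ produced by the auxiliary functions $\transpre{-}{-}$), and a presence-annotated record type on record introductions. Every one of these is discarded by $\erase{-}$, via the \STLCRecPre clauses $\erase{\Lambda \theta . M}=\erase{M}$ and $\erase{M\,P}=\erase{M}$, together with the homomorphic clauses for the term formers. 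Hence erasing $\transl{M}$ should peel away exactly the material $\transl{-}$ introduced and recover $\erase{M}$.

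Concretely, in each case I first strip the leading $\Lambda\ol{\theta}$ (iterating $\erase{\Lambda \theta . M}=\erase{M}$), then push $\erase{-}$ through the term-level former homomorphically, then strip any trailing presence applications (iterating $\erase{M\,P}=\erase{M}$), and finally invoke the induction hypothesis on the immediate subterms. For variables, $\transl{x}=x$ and both sides are $x$. For $\lambda x^A.M$, I compute $\erase{\transl{\lambda x^A.M}}=\erase{\lambda x^{\transl A}.\transl{M}\,\ol{\theta}}=\lambda x.\erase{\transl M}=\lambda x.\erase{M}=\erase{\lambda x^A.M}$, and the application, record-introduction, and projection cases follow the same template, with the record type annotation in $\transl{\Record{\ell_i = M_i}_i}$ simply vanishing under erasure. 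For the upcast $M\Upcast B$, I get $\erase{\transl{M\Upcast B}}=\erase{\Lambda \ol{\theta} . \transl M\,\ol P}=\erase{\transl M}=\erase M=\erase{M\Upcast B}$, matching the source clause $\erase{M\Upcast A}=\erase{M}$ exactly; this is the case in which the ``upcasts are no-ops'' erasure semantics pays off.

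I do not expect a genuine obstacle, only bookkeeping. The one point requiring care is verifying that \emph{all} the presence applications generated in the projection and upcast clauses are erased, since $\transpre{\theta}{A}$ and $\transpre{\theta}{A\subtype B}$ can emit sequences whose length and contents depend on the full structure of the types involved. This is harmless: regardless of how many presence arguments $\transpre{-}{-}$ produces, and regardless of whether each is a variable, $\Absent$, or $\Present$, each remains a presence application removed by $\erase{M\,P}=\erase{M}$. I would therefore first record a short auxiliary lemma, namely $\erase{N\,\ol{P}}=\erase{N}$ and $\erase{\Lambda\ol{\theta}.N}=\erase{N}$ for arbitrary finite sequences $\ol{P}$ and $\ol{\theta}$, which discharges this uniformly; after that, every case collapses to a single application of the induction hypothesis.
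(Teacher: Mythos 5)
Your proposal is correct and is exactly the paper's argument: the paper proves this theorem ``by straightforward induction on $M$,'' and your case analysis (strip the introduced $\Lambda\ol{\theta}$ and trailing presence applications via $\erase{\Lambda \theta . M}=\erase{M}$ and $\erase{M\,P}=\erase{M}$, push erasure homomorphically through the term former, apply the induction hypothesis) is precisely the straightforward induction spelled out. The auxiliary lemma for finite sequences $\ol{\theta}$, $\ol{P}$ is sensible bookkeeping but does not change the substance.
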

\begin{proof}
  By straightforward induction on $M$.
\end{proof}

By using erasure semantics, the operational correspondence becomes
concise and obvious for type-only translations, as all constructs
introduced by type-only translations are erased by type erasure
functions.
It is also possible to reformulate \Cref{thm:oc-varsub-varrow} and
\Cref{thm:oc-recsub-recpre} to use erasure semantics, but the current
versions are somewhat more informative and not excessively complex.

\subsection{Non-Existence of Type-Only Encodings of \STLCVarSubCo in \STLCVarRowPre}
\label{sec:non-stlcvarsubco-stlcvarrowpre}

As illustrated by the example
$\var{parseAge}_\xmark\ \var{data}_\xmark$ in
\Cref{sec:example-co-variant-subtyping}, the approach of hoisting
quantifiers to the top-level does not work for variants, because of
case splits.
Formally, we have the following general non-existence theorem showing
that no other approaches exist.
\begin{restatable}{theorem}{globalVarSubCo}
  \label{thm:globalVarSubCo}
  There exists no global type-only encoding of $\STLCVarSubCo$ in
  $\STLCVarRowPre$.
\end{restatable}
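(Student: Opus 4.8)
The plan is to argue by contradiction, assuming a global type-only encoding $\transl{-}$ of $\STLCVarSubCo$ in $\STLCVarRowPre$ exists and then deriving incompatible requirements on the fixed type translation. Since the encoding is type-only, every source term is translated to itself modulo the insertion of type and presence abstractions and applications, so type erasure is preserved; following the method used for \Cref{thm:swapping}, this lets me write down the general form of the translation of any term by allowing such insertions in every position. I would centre the counterexample on the nested variant type $W = \Variant{\Raw : \Variant{\Year : \Int}}$ and exhibit two well-typed $\STLCVarSubCo$ programs that share this type, extracting from each an irreconcilable constraint on how $\transl{-}$ must treat the row tail of the inner variant $\Variant{\Year : \Int}$.

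First I would use the strictly covariant upcast $z \Upcast \Variant{\Raw : \Variant{\Year : \Int; \Age : \Int}}$ with $z : W$, which is well typed because the subtyping propagates into the covariant $\Raw$ payload. Its type-only translation must coerce $\transl{z} : \transl{W}$ into $\transl{\Variant{\Raw : \Variant{\Year : \Int; \Age : \Int}}}$ using type manipulation alone. As $\Age$ is a fresh label of the inner variant and presence polymorphism can only toggle existing labels rather than introduce new ones, the only way to add it is to instantiate a row variable sitting at the tail of the inner variant's row; and to instantiate it on $z$ from outside the $\Raw$ constructor it must be bound by a \emph{leading} quantifier of $\transl{W}$. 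Hence the inner tail variable must be a top-level handle of $\transl{W}$. Second I would use $\Case~x~\{\Raw~y \mapsto \ldots (y \Upcast \Variant{\Year : \Int; \Age : \Int}) \ldots (y \Upcast \Variant{\Year : \Int; \Age : \String}) \ldots\}$ with $x : W$. In the target the scrutinee $\transl{x} : \transl{W}$ must be instantiated to a monomorphic variant before the split, and the clause binds $y$ at the $\Raw$-payload type, which by type preservation is exactly the fixed closed type $\transl{\Variant{\Year : \Int}}$. Because the two inner upcasts extend $\Year : \Int$ with $\Age$ at \emph{incompatible} payloads ($\Int$ versus $\String$), $\transl{\Variant{\Year : \Int}}$ cannot pre-contain $\Age$ at any fixed type (that would commit to one of $\Int$, $\String$); so $\Age$ must again be supplied by a genuine row variable, forcing $\transl{\Variant{\Year : \Int}}$ to carry a \emph{leading} row variable as its inner tail. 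When embedded in $\transl{W}$ this handle is therefore bound \emph{inside} the $\Raw$ payload, so that $y$ stays polymorphic after the monomorphic case split.

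These two requirements collide: a row has exactly one tail variable, so the single variable forming the tail of the inner variant cannot be bound both at the top of $\transl{W}$ (needed for the covariant upcast) and inside the $\Raw$ payload as a leading $\forall$ of the closed type $\transl{\Variant{\Year : \Int}}$ (needed to keep $y$ polymorphic after the split). Presence polymorphism offers no escape, since it neither introduces labels nor is instantiable across the intervening variant constructor, and the $\Int$/$\String$ clash rules out pre-declaring $\Age$ with a presence marker. This is the contradiction. The hard part will be making the top-level-versus-payload-level handle argument watertight for a fully general global translation: I must account for arbitrary re-abstraction after the case split, and justify by a scoping argument that the instantiation used to monomorphise the scrutinee cannot mention the variables bound inside $\transl{\Variant{\Year : \Int}}$, since those live within the clause rather than at the scrutinee. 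As with \Cref{thm:swapping}, I would make the result robust by providing both this type-preservation-based argument and an alternative that relies only on compositionality.
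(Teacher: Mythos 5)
Your proposal is correct and takes essentially the same approach as the paper: the paper's proof uses exactly your counterexamples --- the nested variant (its $M = (\ell\,(\ell\,y)^{\Variant{\ell}})^{\Variant{\ell:\Variant{\ell}}}$ is your $W$-typed term), its strictly covariant upcast, and a case split whose clause upcasts the bound payload variable --- and derives the same contradiction by showing the inner variant's row variable can neither be bound at the top level of the translated type nor inside the $\ell$-payload. The only cosmetic difference is that the paper's formal argument rules out a pre-declared fresh label via compositionality and the arbitrariness of $\ell'$ using a single upcast in the clause, whereas you additionally deploy the two incompatible upcasts ($\Int$ versus $\String$), which the paper itself uses only as intuition in \Cref{sec:example-co-variant-subtyping}.
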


The idea of the proof is the same as that of \Cref{thm:swapping} which
we have shown in \Cref{sec:swapping-row-and-pre}: construct the
schemes of type-only translations for certain terms and derive a
contradiction.
The terms we choose here are the nested variant $M =
(\ell\,(\ell\,y)^{\Variant{\ell}})^{\Variant{\ell: \Variant{\ell}}}$
for some free term variable $y$ in the environment together with its
upcast $M_1 = M\Upcast \Variant{\ell:\Variant{\ell;\ell'}}$ and its
case split $M_2 = \Case\ M\ \{\ell\ x \mapsto x\Upcast
\Variant{\ell;\ell'}\}$, similar to the counterexamples we give in
\Cref{sec:example-co-variant-subtyping}.
To obtain a contradiction, we show that we cannot give a uniform
type-only translation of $M$ such that both $M_1$ and $M_2$ can be
translated compositionally.
The details of the proof can be found in
\Cref{app:non-existence-stlcvarsubco}.

As a corollary, there can be no global type-only encoding of
\STLCVarSubFull in \STLCVarRowPre.

One might worry that \Cref{thm:globalVarSubCo} contradicts the duality
between records and variants, especially in light of
\citet{BlumeAC06}'s translation from variants with default cases to
records with record extensions. In their translation, a variant is
translated to a function which takes a record of functions. For
instance, the translation of variant types is:
\begin{equations}
  \transl{\Variant{\ell_i : A_i}_i} = \forall\alpha.\Record{\ell_i:A_i\to\alpha}_i\to\alpha
\end{equations}
In fact, there is no contradiction because a variant in a covariant
position corresponds to a record in a contravariant position, which
means that the encoding of $\STLCRecSubCo$ in
\Cref{sec:encode-co-subtyping} cannot be used. Moreover, the
translation from variants to records is not type-only as it introduces
$\lambda$-abstractions.

\subsection{Non-Existence of Type-Only Encodings of \STLCRecSubFull in \STLCRecRowPre}
\label{sec:non-full-subtyping}

As illustrated by the examples $\var{getName'}_\xmark$ and
$\var{getUnit}_\xmark$ in \Cref{sec:example-full-record-subtyping},
one attempt to simulate full record subtyping by both making record
types presence-polymorphic and adding row variables for records in
contravariant positions fails. In fact no such encoding exists.

\begin{restatable}{theorem}{globalSubFull}
  \label{thm:globalSubFull}
  There exists no global type-only encoding of $\STLCRecSubFull$ in
  $\STLCRecRowPre$.
\end{restatable}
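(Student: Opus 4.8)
The plan is to argue by contradiction, in the same style as the proofs of \Cref{thm:swapping} and \Cref{thm:globalVarSubCo}: assuming a global type-only encoding $\transl{-}$ exists, I would write down the most general form such a translation can take on a carefully chosen term and then show that the constraints imposed by type preservation are jointly unsatisfiable. The witnessing term is the function $\var{getUnit} = \lambda x^{\Record{\Name:\String}}.(x\Upcast\Record{})$ of \Cref{sec:example-full-record-subtyping}, together with its contravariant upcast $\var{getUnit}\Upcast(\Record{\Name:\String;\Age:\Int}\to\Record{})$. This upcast is well typed precisely because \fsublab{Fun} reverses the subtyping order on domains, so $\Record{\Name:\String;\Age:\Int}\subtype\Record{\Name:\String}$ \emph{enlarges} the parameter record while the codomain $\Record{}$ stays fixed. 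Enlarging a function parameter is exactly the capability that full subtyping has but that is absent both from simple subtyping and from the strictly covariant fragment encoded in \Cref{sec:encode-co-subtyping}, where \fsublab{Fun} is replaced by a rule forcing the domains to be equal; this is why that encoding succeeds and why I expect this one to fail.

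First I would pin down the shape of $\transl{\var{getUnit}}$. Being type-only, it satisfies $\erase{\transl{\var{getUnit}}} = \erase{\var{getUnit}} = \lambda x.x$, so it must be $\lambda x.x$ decorated with row and presence abstractions and applications: an outer block $\Lambda\ol{s}$, the binder $\lambda x^{B}$, and a body built from the single occurrence of $x$ by further abstractions and applications, of the form $\Lambda\ol{t}.x\,\ol{P}$. Since compositional translations act pointwise on environments, this body is typed with $x : \transl{\Record{\Name:\String}}$, and by type preservation it inhabits the fixed translation of the codomain $\Record{}$. Likewise, the translation of the upcast is, by type-only-ness and compositionality, just $\transl{\var{getUnit}}$ reused as a subterm under further type operations, $\Lambda\ol{s'}.\transl{\var{getUnit}}\,\ol{Q}$, and type preservation forces its domain component to be $\transl{\Record{\Name:\String;\Age:\Int}}$, a type in which the label $\Age$ is present.

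The contradiction is then a single thread. To type the upcast, the instantiation $\ol{Q}$ must turn the domain record of $\transl{\var{getUnit}}$ into one in which $\Age$ is present. Presence polymorphism can only toggle labels already occurring in a type's skeleton and can never introduce a genuinely fresh label, and $\transl{\Record{\Name:\String}}$ is a fixed type chosen independently of $\Age$, so the only way $\ol{Q}$ can make $\Age$ present is by instantiating a \emph{row variable} $\rho$ occurring in that domain record, with $\rho$ bound among the outer $\ol{s}$ so that the upcast can reach it. But a variable bound outside the $\lambda x$-binder is free in the function body, and the body is type-only: it is the variable $x$ under type applications $\ol{P}$, with no term-level projection or reconstruction available to discard a field. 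A type application can instantiate only the quantifiers at the head of $x$'s type; it cannot eliminate a row variable that sits, free, in a record tail. Hence $\rho$ must occur in the type of $\Lambda\ol{t}.x\,\ol{P}$. Yet that type is the translation of the codomain $\Record{}$ --- an \emph{empty} record, which offers no field from which quantifiers could be hoisted and in which a fresh $\rho$ cannot occur. This is impossible, since row equivalence in \STLCRecRowPre identifies rows only up to reordering and up to \emph{absent} labels, never erasing a live tail variable. The inner upcast $x\Upcast\Record{}$ in $\var{getUnit}$ is exactly what pins the codomain to the empty record and makes this step airtight, matching the informal obstruction seen for $\var{getUnit}_\xmark$ in \Cref{sec:example-full-record-subtyping}.

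The main obstacle I anticipate is making this watertight against \emph{every} global type-only scheme rather than the representative one above: the encoding must be allowed to hoist quantifiers to the top level (as the strictly covariant encoding does), to coerce the bound variable by substituting $x \mapsto x\,\ol{P}$, and to interleave abstractions and applications in every position. I would discharge this by proving a single structural lemma about \STLCRecRowPre types: the type operations available in the target --- row and presence abstraction and application --- preserve the set of labels that every instance is forced to carry as present, cannot add a fresh present label except by instantiating a row variable, and cannot remove a row variable occurring free in a record tail. Both the necessity of a row variable to enlarge the domain and the persistence of that variable in the body then follow as instances, closing the argument uniformly. Finally, combining this with \Cref{thm:globalVarSubCo} gives, as a corollary, that full subtyping admits no global type-only encoding into row and presence polymorphism for either records or variants.
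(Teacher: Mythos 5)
Your overall strategy---write down the general form a type-only translation must take on chosen terms and refute it via type preservation---is indeed the paper's strategy, but your specific argument has a genuine gap, and it is precisely why the paper's proof of \Cref{thm:globalSubFull} deliberately abandons the $\var{getUnit}$ example of \Cref{sec:example-full-record-subtyping} (``Our choice of terms here are different from the counterexamples\ldots'') in favour of other witnesses. Your pivotal step asserts that the type of the translated body $\Lambda\ol{t}.x\,\ol{P}$ is ``the fixed translation of the codomain $\Record{}$'', in which the outer-bound row variable $\rho$ cannot occur. For the class of translations under consideration this is question-begging: the paper explicitly permits \emph{non-compositional} translations on types in Sections~\ref{sec:full-subtyping} and~\ref{sec:prenex-polymorphism}, and the legitimate encoding of \Cref{sec:encode-co-subtyping} translates $A \to B$ to $\forall\ol{\theta}.\transl{A}\to\transl{B,\ol{\theta}}$, i.e.\ variables bound in the outer prefix \emph{do} occur in the codomain of translated function types. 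Concretely, your contradiction is dodged by a candidate translation that sets $\transl{\Record{\Name:\String}\to\Record{}} = \forall\rho.(\forall\theta'.\Record{\Name^{\theta'}:\String;\rho})\to\Record{\rho}$ and translates $\var{getUnit}$ to $\Lambda\rho.\lambda x.\,x\,\Absent$, whose body has type $\Record{\Name^{\Absent}:\String;\rho}\equiv\Record{\rho}$; the outer upcast is then simulated type-only by instantiating $\rho$ with $(\Age:\Int;\rho')$ and re-abstracting. Nothing in your single-term analysis (nor in your proposed structural lemma, which constrains what type applications can do but not what the translated codomain types \emph{are}) rules this out: here $\rho$ does occur, live, in the body's type, and the inner upcast $x \Upcast \Record{}$ does not ``pin'' anything by itself.

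The obstruction only materialises when you compare several inhabitants of the same source type, which is exactly what the paper's proof does and your sketch omits. First it compares $f_1 = \lambda x^{\Record{}}.x$ with the constant function $f_2 = \lambda x^{\Record{}}.\Record{}$: since the body of $\transl{f_2}$ is the empty-record \emph{literal} under type operations, its type cannot contain a live row variable, so type preservation pins the shared codomain and yields the key lemma that outer-bound variables may occur in the domain type $A_1$ of $\transl{f_1}$ only inside labels instantiated to absent by $x\,\ol{B}_1$---never in a row tail. (This is what kills the dodge above: the constant function cannot receive codomain $\Record{\rho}$.) Second, your claim that $\Age$ must be \emph{present} in $\transl{\Record{\Name:\String;\Age:\Int}}$ is likewise unjustified as stated---a degenerate translation could render the fresh label absent or omit it; the paper forces presence by comparing the upcast $g_1 = f_1 \Upcast (\Record{\ell:\Record{}}\to\Record{})$ with the projection $g_2 = \lambda x^{\Record{\ell:\Record{}}}.(x.\ell)$, whose body $(x\,\ol{C}).\ell$ requires $\ell$ present in the translated parameter type. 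With both cross-term comparisons in place, the contradiction you are aiming for (a fresh present label can only come from a row variable, but no such variable is available in the domain) goes through; without them it does not.
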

Again, the proof idea is to give general forms of type-only
translations for certain terms and proof by contradiction.
Our choice of terms here are different from the counterexamples in
\Cref{sec:example-full-record-subtyping} this time.
Instead, we first consider two functions $f_1 = \lambda x^{\Record{}}
. x$ and $f_2 = \lambda x^{\Record{}} . \Record{}$ of the same type
$\Record{} \to \Record{}$.
Any type-only translations of these functions must yield terms of the following forms:
\[\begin{array}{l@{}l@{}c@{}l}
\transl{f_1} &= \Lambda \ol{\alpha}_1 . \lambda
x^{A_1} . \Lambda \ol{\beta}_1 . &x&\ \ol B_1 \\
\transl{f_2} &= \Lambda
\ol{\alpha}_2 . \lambda x^{A_2} . \Lambda \ol{\beta}_2 . &\transl{\Record{}}&\ \ol B_2
= \Lambda \ol{\alpha}_2 . \lambda x^{A_2} . \Lambda \ol{\beta}_2 . (\Lambda
\ol{\gamma} . \Record{})\ \ol B_2
\end{array}\]
By type preservation, they should have the same type, which means $x\
\ol B_1$ and $(\Lambda \ol{\gamma} . \Record{})\ \ol B_2$ should also
have the same type.
As a result, the type $A_1$ of $x$ cannot contain any type variables
bound in $\ol{\alpha}_1$ unless they are inside the type of some labels
which are instantiated to absent by the type application $x\ \ol B_1$.
Then, it is problematic when we want to upcast the parameter of $f_1$
to be a wider record, e.g., $f_1 \Upcast (\Record{\ell:\Record{}} \to
\Record{})$.
Intuitively, because $A_1$ cannot be an open record type with the row
variable bound in $\ol{\alpha}_1$, we actually have no way to expand
$A_1$, which leads to a contradiction.
The full proof can be found in \Cref{app:non-existence-full}.

\section{Full Subtyping as Rank-1 Polymorphism}
\label{sec:prenex-polymorphism}
\label{SEC:PRENEX}

In \Cref{sec:swapping-row-and-pre}, we showed that no type-only
encoding of record subtyping as row polymorphism exists.
The main obstacle is a lack of type information for instantiation.
By focusing on rank-1 polymorphism in the target language, we need no
longer concern ourselves with type abstraction and application
explicitly anymore.
Instead we defer to Hindley-Milner type inference~\citep{DamasMilner}
as demonstrated by the examples in \Cref{sec:example-prenex}.
In this section, we formalise the encodings of full subtyping as
rank-1 polymorphism.

Here we focus on the encoding of \STLCRecSubFull\ in \STLCRecPrenex, a
ML-style calculus with records and rank-1 row polymorphism (the same
idea applies to each combination of encoding records or variants as
rank-1 row polymorphism or rank-1 presence polymorphism).
The specification of \STLCRecPrenex\ is given in
\Cref{app:stlcrecprenex}, which uses a standard declarative
Hindley-Milner style type system and extends the term syntax with
let-binding $\Let\,x=M\,\In\,N$ for polymorphism.
We also extend \STLCRecSubFull\ with let-binding syntax and its
standard typing and operational semantics rules.

As demonstrated in \Cref{sec:example-prenex}, we can use the following
(local and type-only) erasure translation to encode
\STLCRecSubFullCond{2}, the fragment of \STLCRecSubFull where types
are restricted to have rank-2 records, in \STLCRecPrenex.

\noindent
  \begin{equations}
    \transl{-} &:& \Deriv \to \Terms\\
    \transl{M \Upcast A} &=& M
  \end{equations}

Since the types of translated terms in \STLCRecPrenex are given by
type inference, we do not need to use a translation on types in the
translation on terms.
Moreover, we implicitly allow type annotations on
$\lambda$-abstractions to be erased as they no longer exist in the
target language.

To formalise the definition of rank-$n$ records defined in
\Cref{sec:example-prenex}, we introduce the predicate $\recrank{n}{A}$
defined as follows for any natural number $n$.

\begin{equations}
  \begin{split}
  \recrank{n}{\alpha} &= \True \\
  \recrank{n}{A \to B} &= \recrank{n-1}{A} \land  \recrank{n}{B} \\
  \recrank{n}{\Record{\ell_i : A_i}_i} &= \land_i \recrank{n}{A_i}
  \end{split}
  \qquad
  \begin{split}
  \recrank{0}{\alpha} &= \True \\
  \recrank{0}{A \to B} &= \recrank{0}{A} \land \recrank{0}{B} \\
  \recrank{0}{\Record{\ell_i : A_i}_i} &= \False
  \end{split}
\end{equations}

We define a type $A$ to have rank-$n$ records, if $\recrank{n}{A}$
holds.
The predicate $\recrank{n}{A}$ basically means no record types can
appear in the left subtrees of $n$ or more arrows.

The operational correspondence of the erasure translation comes
for free.
Note that both \STLCRecSubFullCond{2} and \STLCRecPrenex are type
erased to untyped \STLCRec.
The type erasure function of \STLCRecSubFullCond{2} inherited from
\STLCVarRecSubFull in \Cref{sec:encode-full-subtyping} is identical to
the erasure translation.
The type erasure function $\erase{-}$ of \STLCRecPrenex is simply the
identity function (as there is no type annotation at all).
We have the following theorem.

\begin{restatable}[Operational Correspondence]{theorem}{STLCRecRowPrenexOC}
  The translation $\transl{-}$ from \STLCRecSubFullCond{2} to
  \STLCRecPrenex satisfies the equation $\erase{M} =
  \erase{\transl{M}}$ for any well-typed term $M$ in
  \STLCRecSubFullCond{2}.
\end{restatable}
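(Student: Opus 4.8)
The plan is to observe that the asserted equation collapses to a definitional identity once we unfold the two erasure functions involved, leaving only a routine structural induction. First I would record the two relevant facts. The type-erasure function $\erase{-}$ on \STLCRecPrenex is the identity (the target has no upcasts and no type annotations to erase), so $\erase{\transl{M}} = \transl{M}$; and the source-side erasure function $\erase{-}$ on \STLCRecSubFullCond{2} is exactly the one inherited from \STLCVarRecSubFull in \Cref{sec:encode-full-subtyping}, namely the homomorphic extension of $\erase{M \Upcast A} = \erase{M}$, $\erase{\lambda x^A.M} = \lambda x.\erase{M}$, and $\erase{(\ell\,M)^A} = \ell\,\erase{M}$. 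Consequently it suffices to prove $\transl{M} = \erase{M}$.

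Next I would argue that the erasure translation $\transl{-}$ and the source erasure $\erase{-}$ are literally the same function on terms. Both drop every upcast ($\transl{M \Upcast A} = \transl{M}$ against $\erase{M \Upcast A} = \erase{M}$), both drop the type annotation on each $\lambda$-abstraction (the former by the implicit convention that such annotations no longer exist in \STLCRecPrenex), both drop injection annotations, and both act homomorphically on all remaining constructs: variables, application, record formation $\Record{\ell_i = M_i}_i$, projection $M.\ell$, and the $\Let\,x=M\,\In\,N$ binding with which we extended the source calculus. Hence $\transl{-}$ and $\erase{-}$ agree clause by clause.

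The proof itself is then a straightforward induction. Since the translation is formally defined on typing derivations rather than on raw terms, I would phrase the induction on the derivation of $\typ{\Delta;\Gamma}{M : A}$, but observe that the result is independent of the particular derivation chosen: upcasts appear explicitly in the term syntax, and the only genuine freedom in a \STLCRecSubFullCond{2} derivation lies in the subtyping subderivations $A \subtype B$, which play no role in erasure. Each inductive case then matches a single clause of the two coinciding functions and closes immediately by the induction hypothesis.

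There is essentially no obstacle of substance here: the content of the theorem is precisely the two observations above --- that the translation \emph{is} the source erasure and that the target erasure is the identity --- so that operational correspondence comes, as remarked in the text, for free. The only points demanding minor care are the derivation-versus-term framing just mentioned and the bookkeeping to confirm that every term constructor (including the $\Let$ extension) is covered by a matching clause in both functions. Notably, neither the rank-$2$ record restriction $\recrank{2}{A}$ nor the well-typedness of $M$ is needed for the equation itself, the latter serving only to guarantee that $\transl{M}$ is defined and lands in \STLCRecPrenex.
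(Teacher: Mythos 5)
Your proposal is correct and follows essentially the same route as the paper, whose proof is literally ``by definition of $\erase{-}$ and $\transl{-}$'', justified by the same two observations you make: the erasure translation coincides clause by clause with the source erasure function inherited from \STLCVarRecSubFull, and the erasure function on \STLCRecPrenex is the identity. Your additional remarks (the derivation-versus-term framing and the fact that well-typedness only ensures $\transl{M}$ is defined) are sound elaborations of the paper's one-line argument rather than a different method.
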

\begin{proof}
  By definition of $\erase{-}$ and $\transl{-}$.
\end{proof}

Proving type preservation is more challenging.
To avoid the complexity of reasoning about type inference, we state
the type preservation theorem using the declarative type system of
\STLCRecPrenex, which requires us to give translations on types.
We define the translations on types and environments in
\Cref{fig:encode-stlcrecsubfull-stlcrecprenex-type}.
As in \Cref{sec:encode-stlcrecsub-stlcrecpre} and
\Cref{sec:encode-co-subtyping}, we assume a canonical order on labels
and require all rows and records to conform to this order.
The translation on type environments is still the identity
$\transl{\Delta} = \Delta$.
To define the translation on term environments, we need to explicitly
distinguish between variables bound by $\lambda$ and variables bound
by $\Let$. We write $a,b$ for the former, and $x,y$ for the latter.
Because the translation on term environments may introduce fresh free
type variables which are not in the original type environments, we
define $\transl{\Delta;\Gamma}$ as a shortcut for
$(\transl{\Delta},\ftv{\transl{\Gamma}});\transl{\Gamma}$.

The type translation $\transl{A}$ returns a type scheme.
It
                               opens up row types in $A$ that appear
strictly covariantly inside the left-hand-side of strictly covariant
function types, 
binding all of the freshly generated row
variables at the top-level.
It
  applies 
           the auxiliary translation $\translb{A}$
to function parameter types, 
similarly extending all
record types appearing strictly covariantly in $A$ with fresh row
variables, and binding them all at the top-level.

We define four auxiliary functions for the translation.
The functions $\transrowa{\rho}{A}$ and $\transrowb{\rho}{A}$ are used
to generate fresh row variables.
The $\transrowa{\rho}{A}$ takes a row variable $\rho$ and a type $A$,
and generates a sequence of row variables based on $\rho$ with the
same length of row variables bound by $\transla{A}$.
The function $\transrowb{\rho}{A}$ does the same thing for
$\translb{A}$.
The functions $\transla{A,\ol{\rho}}$ and $\translb{A,\ol{\rho}}$
instantiate polymorphic types, simulating term-level type application.
As we discussed in \Cref{sec:encode-co-subtyping}, these functions
actually break the compositionality of the type translation, because
they must inspect the concrete structure of $\transl{A}$.
However, we only use the type translation in the theorem and proof;
the compositionality of the erasure translation itself remains intact.

\begin{figure}[tp]
  \noindent
  \begin{minipage}[t]{.5\textwidth}\small
  \raggedright
  \begin{equations}
    \transla{-} &:& \Types \to \TyScheme\\
    \transla{A \to B}
    &=& \forall \ol{\rho}_1 \ol{\rho}_2 . \translb{A,\ol{\rho}_1} \to \transla{B, \ol{\rho}_2} \\
    \multicolumn{3}{r}{\text{where }
        \ba[t]{@{~}l@{~}l}
        \ol{\rho}_1 = \transrowb{\rho_1}{A},\
        \ol{\rho}_2 = \transrowa{\rho_2}{B} \\
        \ea
    }\\
    \transla{\Record{\ell_i : A_i}_i}
    &=& \forall (\ol{\rho}_i)_i.
        \Record{\ell_i : \transla{A_i, \ol{\rho}_i}}_i \\
    \multicolumn{3}{r}{\text{where }
        \ba[t]{@{~}l@{~}l}
        \ol{\rho}_i = \transrowa{\rho_i}{A_i} \\
        \ea
    }\\[2ex]

    \transla{-,-} &:& (\Types,\ol{\RowVars}) \to \Type\\
    \transla{A, \ol{\rho}}
    &=& A'[\ol{\rho} / \ol{\rho}']
    \ \text{where}\ \forall \ol{\rho}'.A' = \transla{A}
    \\[2ex]

    \transrowa{-}{-} &:& (\RowVars, \Type) \to \ol{\RowVars}\\
    \transrowa{\rho}{\alpha}
    &=& \cdot
    \\
    \transrowa{\rho}{A \to B}
    &=& \transrowb{\rho_1}{A}\, \transrowa{\rho_2}{B}
    \\
    \transrowa{\rho}{\Record{\ell_i : A_i}_i}
    &=& \transrowa{\rho_i}{A_i}_i
    \\[2ex]

    \transl{-} &:& \Envs \to \Envs\\
    \transl{\cdot}
    &=& \cdot \\
    \transl{\Gamma, x:A}
    &=& \transl{\Gamma}, x:\transla{A}\\
    \transl{\Gamma, a:A}
    &=& \transl{\Gamma}, a:\translb{A, \transrowb{\rho_{|\Gamma|}}{A}}\\
  \end{equations}
  \end{minipage}%
  \begin{minipage}[t]{.5\textwidth}\small
  \raggedleft
  \begin{equations}
    \translb{-} &:& \Types \to \TyScheme\\
    \translb{A \to B}
    &=& \forall \ol{\rho} . A \to \translb{B, \ol{\rho}}\\
    \multicolumn{3}{r}{\text{where }
        \ba[t]{@{~}l@{~}l}
        \ol{\rho} = \transrowb{\rho}{B} \\
        \ea
    }\\
    \translb{\Record{\ell_i : A_i}_i}
    &=& \forall \rho\, (\ol{\rho}_i)_i.
        \Record{\ell_i : \translb{A_i, \ol{\rho}_i}; \rho}_i \\
    \multicolumn{3}{r}{\text{where }
        \ba[t]{@{~}l@{~}l}
        \ol{\rho}_i = \transrowb{\rho_i}{A_i} \\
        \ea
    }\\[2ex]

    \translb{-,-} &:& (\Types,\ol{\RowVars}) \to \Type\\
    \translb{A, \ol{\rho}}
    &=& A'[\ol{\rho} / \ol{\rho}']
    \ \text{where}\ \forall \ol{\rho}'.A' = \translb{A}
    \\[2ex]

    \transrowb{-}{-} &:& (\RowVars, \Type) \to \ol{\RowVars}\\
    \transrowb{\rho}{\alpha}
    &=& \cdot
    \\
    \transrowb{\rho}{A \to B}
    &=& \transrowb{\rho}{B}
    \\
    \transrowb{\rho}{\Record{\ell_i : A_i}_i}
    &=& \rho\,\transrowb{\rho_i}{A_i}_i
    \\
  \end{equations}
  \end{minipage}

  \caption{The translations of types and environments from \STLCRecSubFullCond{2} to \STLCRecPrenex.}
  \label{fig:encode-stlcrecsubfull-stlcrecprenex-type}
\end{figure}

After giving the type and environment translation, we aim for a weak
type preservation theorem which allows the translated terms to have
subtypes of the original terms, because the erasure translation
ignores all upcasts. As we have row variables in \STLCRecPrenex, the
types of translated terms may contain extra row variables in strictly
covariant positions. We need to define an auxiliary subtype relation
$\subtyperow$ which only considers row variables. {\small
\begin{mathpar}
  \inferrule
    { }
    {\alpha \subtyperow \alpha}

  \inferrule
    {[A_i \subtyperow A_i']_i}
    {\Record{\ell_i:A_i}_i \subtyperow \Record{\ell_i:A_i'}_i}

  \inferrule
    {[A_i \subtyperow A_i']_i}
    {\Record{(\ell_i:A_i)_i;\rho} \subtyperow \Record{\ell_i:A_i'}_i}

  \inferrule
    {B \subtyperow B'}
    {A\to B \subtyperow A\to B'}

  \inferrule
    {\tau \subtyperow \tau'}
    {\forall \rho^K . \tau \subtyperow \forall \rho^K . \tau'}
\end{mathpar}}
Finally, we have the following weak type preservation theorem.
\begin{restatable}[Weak Type Preservation]{theorem}{STLCRecRowPrenexType}
  \label{thm:type-rec-rowprenex}
  Every well-typed \STLCRecSubFullCond{2} term $\typ{\Delta;\Gamma}{M : A}$ is
  translated to a well-typed \STLCRecPrenex term
  $\typ{\transl{\Delta;\Gamma}}{\transla{M} : \tau}$ for some $A'\subtype A$ and
  $\tau \subtyperow \transla{A'}$.
\end{restatable}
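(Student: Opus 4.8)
The plan is to proceed by induction on the typing derivation of $\typ{\Delta;\Gamma}{M:A}$ in \STLCRecSubFullCond{2}. The conclusion carries two independent slacks: an outer $A'\subtype A$ that absorbs erased upcasts, and an inner $\tau\subtyperow\transla{A'}$ that absorbs the fresh row variables the type translation attaches to covariantly-positioned records; the induction hypothesis supplies exactly these two degrees of freedom at every subterm. I would rely on the declarative type system of \STLCRecPrenex permitting the assignment of type schemes to terms (top-level generalisation), since $\tau$ ranges over schemes and the $\subtyperow$ rule for $\forall$ requires matching prenex quantifiers on both sides. I would also first establish that $\subtyperow$ is reflexive and transitive (by routine induction on its rules), since both are used pervasively.

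The upcast case is where the outer slack is spent: because $\transl{M\Upcast B}=\transl{M}$, the hypothesis on $M:A$ gives $\tau\subtyperow\transla{A'}$ with $A'\subtype A$, and transitivity of source subtyping with $A\subtype B$ yields $A'\subtype B$ for the \emph{same} $\tau$. The variable, abstraction, and let cases are then mostly bookkeeping over the two-way environment translation. A let-bound $x$ carries the scheme $\transla{A}$ in $\transl{\Gamma}$ and may be re-generalised to it by reflexivity of $\subtyperow$; a $\lambda$-bound $a$ carries $\translb{A,\ol{\rho}}$, so I must relate $\translb{-}$ to $\transla{-}$; abstraction rebuilds an arrow from the body's type and matches the parameter slot $\translb{A,\ol{\rho}_1}$ dictated by $\transla{A\to B}$; and let generalises the translated binder over its fresh row variables, which is sound precisely because those variables are fresh for $\transl{\Gamma}$.

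The technical core is an instantiation lemma asserting that source subtyping is simulated by row instantiation: whenever $A\subtype B$ within the rank-2 fragment, the row variables introduced by $\transla{A}$ (respectively $\translb{A}$) can be instantiated so that the result is $\subtyperow$-below $\transla{B}$ (respectively accepts the wider record). This is the formal statement that width subtyping on records becomes the choice of a concrete extension for a row variable. The rank-2 restriction is essential: it guarantees every record sits at most one arrow deep, so the single top-level row variable that $\translb{-}$ attaches to each parameter record suffices to absorb any extra fields, and no record is ever buried beneath a second, un-instantiable parameter. I would prove this lemma by induction on the subtyping derivation, with the arrow case reversing direction and the record case choosing the instantiation that sends the row variables guarding the discarded labels onto the appropriate fields.

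The main obstacle I expect is the application case, where contravariance bites and the instantiation lemma is deployed in its reversed form. From $M:A\to B$ and $N:A$ the hypotheses give $\transl{M}$ a type $\subtyperow\transla{A_1\to B_1}$ with $A\subtype A_1$ and $B_1\subtype B$, while $\transl{N}$ has a type $\subtyperow\transla{A''}$ with $A''\subtype A\subtype A_1$. I must instantiate $\transl{M}$'s quantifiers and feed $\transl{N}$ into the parameter slot $\translb{A_1,\ol{\rho}_1}$, using the lemma to show the row variables guarding $A_1$'s parameter records can be chosen to receive the extra fields carried by $A''$. Threading these fresh row variables so that they remain generalisable at the eventual let-binding while simultaneously meeting the $\subtyperow$ bound on the result is the delicate part; projection and record construction then follow the same instantiate-then-reassemble pattern, and the remaining cases are direct.
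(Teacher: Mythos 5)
Your proposal is viable and shares the paper's central idea---simulating record subtyping by instantiating the row variables that $\translb{-}$ attaches to parameter-position records---but it organises the induction differently. The paper does not carry your two slacks through a direct induction on the declarative derivation. Instead it first factors through an algorithmic system \STLCRecSubFullACond{2}, which erases all upcasts and fuses subsumption into a single rule \tylab{AppSub}; by the standard soundness/completeness of such systems, the outer slack $A'\subtype A$ is discharged once and for all at this step, and the remaining induction carries only the inner slack $\tau\subtyperow\transla{A}$, with subsumption localised at applications. Your inlined upcast case (spending the outer slack by transitivity) is exactly the soundness direction of that factoring, so the two routes are morally equivalent; what the paper's decomposition buys is that the slack-threading discipline you must re-derive case by case (e.g.\ that $\subtyperow$'s invariant parameter slot forces the arrow-type slack entirely into the result, so the \tylab{Lam} case cannot exploit contravariance) comes for free, and each case reduces to a fixed pattern of \tylab{Inst}, the syntactic rule, then \tylab{Gen}.

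One point in your plan needs sharpening, in the application case. Your instantiation lemma is stated at the level of source types ($A\subtype B$ implies the row variables of $\transla{A}$ can be instantiated to something $\subtyperow$-below $\transla{B}$), but \tylab{App} in \STLCRecPrenex demands that the argument's type \emph{equal} the parameter slot exactly, and the argument's type is the actual $\tau_2$ produced by the induction hypothesis, which may carry extra free row-variable tails beyond $\transla{A''}$. The paper's enabling observation, which you do not state, is that the rank-2 restriction forces $\nocontraonce{A''}$, hence $\transla{A''}=A''$ and $\tau_2$ is \emph{monomorphic}, so one has the combined chain $\tau_2\subtyperow A''\subtype A$ directly between plain types; the paper then defines a function $\transrowsub{\tau_2\subtyperow\subtype A}$ on this combined relation that reads off concrete rows $\ol{R}$ (routing $\tau_2$'s own row tails into them) with $\translb{A,\ol{R}}=\tau_2$ on the nose. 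Without the monomorphism observation your lemma as stated does not suffice: a $\subtyperow$-below conclusion cannot feed a rule requiring equality, and a polymorphic $\tau_2$ could not occupy a monomorphic parameter slot at all. Your closing remark about choosing instantiations that ``receive the extra fields carried by $A''$'' suggests you would arrive at this, but the lemma must be keyed to the combined relation and produce an exact match for the case to go through; you also implicitly need that the $A'$ slacks produced by your induction stay within the rank-restricted fragment, which holds because they are always types occurring in subderivations.
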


The proof makes use of \STLCRecSubFullACond{2}, an algorithmic variant
of the type system of \STLCRecSubFullCond{2} which combines
\tylab{App} and \tylab{Upcast} into one rule \tylab{AppSub}, and
removes all explicit upcasts in terms.
\begin{mathpar}
  \inferrule*[Lab=\tylab{AppSub}]
          { \typ{\Delta;\Gamma}{M : A \to B} \\ \typ{\Delta;\Gamma}{N : A'} \\
            A' \subtype A }
          {\typ{\Delta;\Gamma}{M\,N : B}}
\end{mathpar}
It is standard that \STLCRecSubFullACond{2} is sound and complete with
respect to \STLCRecSubFullCond{2} \citep{tapl}.
Immediately, we have that $\typ{\Delta;\Gamma}{M:A}$ in
\STLCRecSubFullCond{2} implies $\typ{\Delta;\Gamma}{\widehat{M}:A'}$ in
\STLCRecSubFullACond{2} for some $A'\subtype A$, where $\widehat{M}$
is defined as $M$ with all upcasts erased.
Thus, we only need to prove that $\typ{\Delta;\Gamma}{M:A}$ in
\STLCRecSubFullACond{2} implies
$\typ{\transl{\Delta;\Gamma}}{\transl{M}:\tau}$ for some $\tau
\subtyperow \transl{A}$ in \STLCRecPrenex.
The remaining proof can be done by induction on the typing derivations
in \STLCRecSubFullACond{2}, where the most non-trivial case is the
\tylab{AppSub} rule.
The core idea is to use instantiation in \STLCRecPrenex to simulate
the subtyping relation $A' \subtype A$ in the \tylab{AppSub} rule.
This is possible because the source language \STLCRecSubFullACond{2}
is restricted to have rank-$2$ records, which implies that $A\to B$ is
translated to a polymorphic type where the record types in parameters
are open and can be extended to simulate the subtyping relation.
The full proof can be found in
\Cref{app:proof-stlcrecsubfull-stlcrecprenex}.

So far, we have formalised the erasure translation from
\STLCRecSubFullCond{2} to \STLCRecPrenex.
As shown in \Cref{sec:example-prenex}, we have three other results.
For records, we have another erasure translation from \STLCRecSubFullCond{1},
the fragment of \STLCRecSubFull where types are restricted to have rank-1
records, to \STLCRecPrePrenex with rank-1 presence polymorphism.
Similarly, for variants, we formally define a type $A$ to have
rank-$n$ variants, if the predicate $\varrank{n}{A}$ defined as
follows holds.
\begin{equations}
  \begin{split}
  \varrank{n}{\alpha} &= \True \\
  \varrank{n}{A \to B} &= \varrank{n-1}{A} \land  \varrank{n}{B} \\
  \varrank{n}{\Variant{\ell_i : A_i}_i} &= \land_i \varrank{n}{A_i}
  \end{split}
  \qquad
  \begin{split}
  \varrank{0}{\alpha} &= \True \\
  \varrank{0}{A \to B} &= \varrank{0}{A} \land \varrank{0}{B} \\
  \varrank{0}{\Variant{\ell_i : A_i}_i} &= \False
  \end{split}
\end{equations}
We also have two erasure translations from \STLCVarSubFullCond{1}
to
\STLCVarPrenex
and from \STLCVarSubFullCond{2}
to \STLCVarPrePrenex.
They all use the same idea to let type inference infer
row/presence-polymorphic types for terms involving records/variants,
and use instantiation to automatically simulate subtyping.
We omit the metatheory of these three results as they are similar to
what we have seen for the encoding of \STLCRecSubFullCond{2} in
\STLCRecPrenex.

The requirement of rank-1 polymorphism and Hindley-Milner type
inference for target languages is not mandatory; target languages can
support more advanced type inference for higher-rank polymorphism like
FreezeML~\citep{FreezeML}, as long as no type annotation is needed to
infer rank-1 polymorphic types.
One might hope to also relax the $\nocontratwice{-}$ restriction in
\STLCRecSubFullCond{2} via enabling higher-rank polymorphism.
However, at least the erasure translation do not work anymore.
For instance, consider the functions $\var{id}=\lambda x^{\Record{\ell:\Int}} .
x$ and $\var{const}=\lambda x^{\Record{\ell:\Int}} . \Record{\ell = 1}$ with the
same type $\Record{\ell:\Int} \to \Record{\ell:\Int}$.
Type inference would give $\transl{\var{id}}$ the type $\forall
\rho^{\Row_{\{\ell\}}}
. \Record{\ell:\Int;\rho}\to\Record{\ell:\Int;\rho}$, and
$\transl{\var{const}}$ the type $\forall \rho^{\Row_{\{\ell\}}} .
\Record{\ell:\Int;\rho}\to\Record{\ell:\Int}$.
For a second-order function of type $(\Record{\ell:\Int} \to
\Record{\ell:\Int}) \to A$, we cannot give a type to the parameter of
the function after translation which can be unified with the types of
both $\transl{\var{id}}$ and $\transl{\var{const}}$.
We leave it to future work to explore whether there exist other
translations making use of type inference for higher-rank
polymorphism.

\section{Discussion}
\label{sec:discussion}

We have now explored a range of encodings of structural subtyping for
variants and records as parametric polymorphism under different
conditions.
These encodings and non-existence results capture the extent to which
row and presence polymorphism can simulate structural subtyping and
crystallise longstanding folklore and informal intuitions.
In the remainder of this section we briefly discuss record extensions
and default cases (\Cref{sec:record-extension}), combining subtyping
and polymorphism (\Cref{sec:combine-sub-and-poly}),
related work (\Cref{sec:related-work}) and conclusions and future work
(\Cref{sec:conclusion}).

\subsection{Record Extensions and Default Cases}
\label{sec:record-extension}

Two important extensions to row and presence polymorphism are record
extensions \citep{Remy94}, and its dual, default cases
\citep{BlumeAC06}.
These operations provide extra expressiveness beyond structural
subtyping.
For example, with default cases, we can give a default age $42$ to the
function $\var{getAge}$ in
\Cref{sec:example-simple-variant-subtyping}, and then apply it to
variants with arbitrary constructors.
\begin{examples}
  &\var{getAgeD} :\: \, \forall\rho^{\Row_{\{\Age,\Year\}}}.
    {\Variant{\Age:\Int;\Year:\Int; \rho}}\to \Int \\
  &\var{getAgeD} = \lambda x.
    ~\Case\,x\,\{\Age\, y\mapsto y; \Year\,y\mapsto 2023-y; z\mapsto 42\} \\
  &\var{getAgeD}\ (\Name\, \makestring{Carol}) \breducestocl^\ast 42
\end{examples}
\vspace{-1\baselineskip}

\subsection{Combining Subtyping and Polymorphism}
\label{sec:combine-sub-and-poly}

Though row and presence polymorphism can simulate subtyping well and
support expressive extensions like record extension and default cases,
it can still be beneficial to allow both subtyping and polymorphism
together in the same language.
For example, the OCaml programming language combines row and presence
polymorphism with subtyping. Row and presence variables are hidden in
its core language.
It supports both polymorphic variants and polymorphic objects (a
variation on polymorphic records) as well as explicit upcast for
closed variants and records.
Our results give a rationalisation for why OCaml supports subtyping in
addition to row polymorphism. Row polymorphism simply is not expressive
enough to give a local encoding of unrestricted structural subtyping,
even though OCaml indirectly supports full first-class polymorphism.

Bounded quantification~\citep{CardelliW85,CardelliMMS94} extends
system F with subtyping by introducing subtyping bounds to type
variables.
There is also much work on the type inference for both polymorphism
and subtyping based on collecting, solving, and simplifying
constraints~\citep{TrifonovS96,pottier:inria-00073205,Pottier01}.
Algebraic subtyping \citep{DolanM17, Dolan16} combines subtyping and
parametric polymorphism, offering compact principal types and
decidable subsumption checking.
MLstruct \citep{ParreauxC22} extends algebraic subtyping with
intersection and union types, giving rise to another alternative to
row polymorphism.

\subsection{Related Work}
\label{sec:related-work}

\paragraph{Row types.}
\citet{Wand87} first introduced rows and row polymorphism.
There are many further papers on row types, which take a variety of
approaches, particularly focusing on extensible records.
\citet{Harper90} extended System F with constrained quantification,
where predicates $\rho\,\mathsf{lacks}\,L$ and $\rho\,\mathsf{has}\,L$
are used to indicate the presence and absence of labels in row
variables.
\citet{Gaster96} and \citet{Gaster98} explored a calculus with a
similar $\mathsf{lacks}$ predicate based on qualified types.
\citet{Remy89} introduced the concept of presence types and
polymorphism, and \citet{Remy94} combines row and presence
polymorphism.
\citet{Leijen05} proposed a variation on row polymorphism with support
for scoped labels.
\citet{pottier04ML} considered type inference for row and presence
polymorphism in HM(X).
\citet{MorrisM19} introduced \textsc{Rose}, an algebraic foundation for
row typing via a rather general language with two predicates
representing the containment and combination of rows. It is parametric
over a row theory which enables it to express different styles of row
types (including \citeauthor{Wand87} and \citeauthor{Remy94}'s style
and \citeauthor{Leijen05}'s style).

\paragraph{Row polymorphism vs structural subtyping.}

\citet{Wand87} compared his calculus with row polymorphism (similar to
\stlc{\Variantonly\Recordonly}{\rho 1}) with \citet{Cardelli84}'s calculus
with structural subtyping (similar to \STLCVarRecSubFull) and showed
that they cannot be encoded in each other by examples.
\citet{pottier:inria-00073205} conveyed the intuition that row
polymorphism can lessen the need for subtyping to some extent, but
there are still situations where subtyping are necessary, e.g., the
reuse of $\lambda$-bound variables which cannot be polymorphic given
only rank-1 polymorphism.

\paragraph{Disjoint polymorphism.}
Disjoint intersection types~\citep{OliveiraSA16} generalise record
types.
Record concatenation and restriction~\citep{CardelliM91} are replaced
by a merge operator~\citep{Dunfield14} and a type difference
operator~\citep{XuHO23}, respectively.
Parametric polymorphism of disjoint intersection types is supported
via disjoint polymorphism~\citep{AlpuimOS17} where type variables are
associated with disjointness constraints.
Similarly to our work, \citet{XieOBS19} formally prove that both row
polymorphism and bounded quantification of record types can be encoded
in terms of disjoint polymorphism.

\subsection{Conclusion and Future Work}
\label{sec:conclusion}

We carried out a formal and systematic study of the encoding of
structural subtyping as parametric polymorphism.
To better reveal the relative expressive power of these two type
system features, we introduced the notion of type-only translations to
avoid the influence of non-trivial term reconstruction.
We gave type-only translations from various calculi with subtyping to
calculi with different kinds of polymorphism and proved their
correctness; we also proved a series of non-existence results.
Our results provide a precise characterisation of the long-standing
folklore intuition that row polymorphism can often replace subtyping.
Additionally, they offer insight into the trade-offs between subtyping
and polymorphism in the design of programming languages.

In future, we would like to explore whether it might be possible to
extend our encodings relying on type inference to systems supporting
higher-rank polymorphism, such as FreezeML~\citep{FreezeML}.
We would also like to consider other styles of row typing such as
those based on scoped labels \citep{Leijen05} and \textsc{Rose}
\citep{MorrisM19}.
In addition to variant and record types, row types are also the
foundation for various effect type systems, e.g. for effect
handlers~\citep{HillerstromL16, Leijen17}.
It would be interesting to investigate to what extent our approach can
be applied to effect typing.
Aside from studying the relationship between subtyping and row and
presence polymorphism we would also like to study the ergonomics of
these programming language features in practice, especially their
compatibility with others such as algebraic data types.

\begin{acks}                            %
   This work was supported by the UKRI Future Leaders Fellowship
  ``Effect Handler Oriented Programming'' (reference number
  MR/T043830/1) and ERC Consolidator Grant no. 682315 (Skye).
\end{acks}

\bibliography{\jobname}


\begin{thebibliography}{39}


\ifx \showCODEN    \undefined \def \showCODEN     #1{\unskip}     \fi
\ifx \showDOI      \undefined \def \showDOI       #1{#1}\fi
\ifx \showISBNx    \undefined \def \showISBNx     #1{\unskip}     \fi
\ifx \showISBNxiii \undefined \def \showISBNxiii  #1{\unskip}     \fi
\ifx \showISSN     \undefined \def \showISSN      #1{\unskip}     \fi
\ifx \showLCCN     \undefined \def \showLCCN      #1{\unskip}     \fi
\ifx \shownote     \undefined \def \shownote      #1{#1}          \fi
\ifx \showarticletitle \undefined \def \showarticletitle #1{#1}   \fi
\ifx \showURL      \undefined \def \showURL       {\relax}        \fi
\providecommand\bibfield[2]{#2}
\providecommand\bibinfo[2]{#2}
\providecommand\natexlab[1]{#1}
\providecommand\showeprint[2][]{arXiv:#2}

\bibitem[Alpuim et~al\mbox{.}(2017)]%
        {AlpuimOS17}
\bibfield{author}{\bibinfo{person}{Jo{\~{a}}o Alpuim},
  \bibinfo{person}{Bruno~C. d. S.~Oliveira}, {and} \bibinfo{person}{Zhiyuan
  Shi}.} \bibinfo{year}{2017}\natexlab{}.
\newblock \showarticletitle{Disjoint Polymorphism}. In
  \bibinfo{booktitle}{\emph{Programming Languages and Systems - 26th European
  Symposium on Programming, {ESOP} 2017, Held as Part of the European Joint
  Conferences on Theory and Practice of Software, {ETAPS} 2017, Uppsala,
  Sweden, April 22-29, 2017, Proceedings}} \emph{(\bibinfo{series}{Lecture
  Notes in Computer Science}, Vol.~\bibinfo{volume}{10201})},
  \bibfield{editor}{\bibinfo{person}{Hongseok Yang}} (Ed.).
  \bibinfo{publisher}{Springer}, \bibinfo{pages}{1--28}.
\newblock
\urldef\tempurl%
\url{https://doi.org/10.1007/978-3-662-54434-1\_1}
\showDOI{\tempurl}


\bibitem[Birtwistle et~al\mbox{.}(1979)]%
        {BirtwistleDMN79}
\bibfield{author}{\bibinfo{person}{Graham~M. Birtwistle},
  \bibinfo{person}{Ole-Johan Dahl}, \bibinfo{person}{Bjorn Myhrhaug}, {and}
  \bibinfo{person}{Kristen Nygaard}.} \bibinfo{year}{1979}\natexlab{}.
\newblock \bibinfo{title}{{Simula} Begin}.
\newblock \bibinfo{howpublished}{Studentlitteratur (Lund, Sweden), Bratt
  Institut fuer nues Lernen (Goch, FRG), Charwell-Bratt Ltd (Kent, England)}.
\newblock


\bibitem[Blume et~al\mbox{.}(2006)]%
        {BlumeAC06}
\bibfield{author}{\bibinfo{person}{Matthias Blume}, \bibinfo{person}{Umut~A.
  Acar}, {and} \bibinfo{person}{Wonseok Chae}.}
  \bibinfo{year}{2006}\natexlab{}.
\newblock \showarticletitle{Extensible programming with first-class cases}. In
  \bibinfo{booktitle}{\emph{{ICFP}}}. \bibinfo{publisher}{{ACM}},
  \bibinfo{pages}{239--250}.
\newblock


\bibitem[Breazu-Tannen et~al\mbox{.}(1991)]%
        {BREAZUTANNEN1991172}
\bibfield{author}{\bibinfo{person}{Val Breazu-Tannen}, \bibinfo{person}{Thierry
  Coquand}, \bibinfo{person}{Carl~A. Gunter}, {and} \bibinfo{person}{Andre
  Scedrov}.} \bibinfo{year}{1991}\natexlab{}.
\newblock \showarticletitle{Inheritance as implicit coercion}.
\newblock \bibinfo{journal}{\emph{Information and Computation}}
  \bibinfo{volume}{93}, \bibinfo{number}{1} (\bibinfo{year}{1991}),
  \bibinfo{pages}{172--221}.
\newblock
\showISSN{0890-5401}
\urldef\tempurl%
\url{https://doi.org/10.1016/0890-5401(91)90055-7}
\showDOI{\tempurl}
\newblock
\shownote{Selections from 1989 IEEE Symposium on Logic in Computer Science}.


\bibitem[Breazu-Tannen et~al\mbox{.}(1990)]%
        {BREAZUTANNEN90}
\bibfield{author}{\bibinfo{person}{Val Breazu-Tannen}, \bibinfo{person}{Carl~A.
  Gunter}, {and} \bibinfo{person}{Andre Scedrov}.}
  \bibinfo{year}{1990}\natexlab{}.
\newblock \showarticletitle{Computing with Coercions}. In
  \bibinfo{booktitle}{\emph{Proceedings of the 1990 {ACM} Conference on {LISP}
  and Functional Programming, {LFP} 1990, Nice, France, 27-29 June 1990}},
  \bibfield{editor}{\bibinfo{person}{Gilles Kahn}} (Ed.).
  \bibinfo{publisher}{{ACM}}, \bibinfo{pages}{44--60}.
\newblock
\urldef\tempurl%
\url{https://doi.org/10.1145/91556.91590}
\showDOI{\tempurl}


\bibitem[Cardelli(1984)]%
        {Cardelli84}
\bibfield{author}{\bibinfo{person}{Luca Cardelli}.}
  \bibinfo{year}{1984}\natexlab{}.
\newblock \showarticletitle{A Semantics of Multiple Inheritance}. In
  \bibinfo{booktitle}{\emph{Semantics of Data Types, International Symposium,
  Sophia-Antipolis, France, June 27-29, 1984, Proceedings}}
  \emph{(\bibinfo{series}{Lecture Notes in Computer Science},
  Vol.~\bibinfo{volume}{173})}, \bibfield{editor}{\bibinfo{person}{Gilles
  Kahn}, \bibinfo{person}{David~B. MacQueen}, {and} \bibinfo{person}{Gordon~D.
  Plotkin}} (Eds.). \bibinfo{publisher}{Springer}, \bibinfo{pages}{51--67}.
\newblock
\urldef\tempurl%
\url{https://doi.org/10.1007/3-540-13346-1\_2}
\showDOI{\tempurl}


\bibitem[Cardelli(1988)]%
        {Cardelli88}
\bibfield{author}{\bibinfo{person}{Luca Cardelli}.}
  \bibinfo{year}{1988}\natexlab{}.
\newblock \showarticletitle{Structural Subtyping and the Notion of Power Type}.
  In \bibinfo{booktitle}{\emph{{POPL}}}. \bibinfo{publisher}{{ACM} Press},
  \bibinfo{pages}{70--79}.
\newblock


\bibitem[Cardelli et~al\mbox{.}(1994)]%
        {CardelliMMS94}
\bibfield{author}{\bibinfo{person}{Luca Cardelli}, \bibinfo{person}{Simone
  Martini}, \bibinfo{person}{John~C. Mitchell}, {and} \bibinfo{person}{Andre
  Scedrov}.} \bibinfo{year}{1994}\natexlab{}.
\newblock \showarticletitle{An Extension of System {F} with Subtyping}.
\newblock \bibinfo{journal}{\emph{Inf. Comput.}} \bibinfo{volume}{109},
  \bibinfo{number}{1/2} (\bibinfo{year}{1994}), \bibinfo{pages}{4--56}.
\newblock
\urldef\tempurl%
\url{https://doi.org/10.1006/inco.1994.1013}
\showDOI{\tempurl}


\bibitem[Cardelli and Mitchell(1991)]%
        {CardelliM91}
\bibfield{author}{\bibinfo{person}{Luca Cardelli} {and}
  \bibinfo{person}{John~C. Mitchell}.} \bibinfo{year}{1991}\natexlab{}.
\newblock \showarticletitle{Operations on Records}.
\newblock \bibinfo{journal}{\emph{Math. Struct. Comput. Sci.}}
  \bibinfo{volume}{1}, \bibinfo{number}{1} (\bibinfo{year}{1991}),
  \bibinfo{pages}{3--48}.
\newblock
\urldef\tempurl%
\url{https://doi.org/10.1017/S0960129500000049}
\showDOI{\tempurl}


\bibitem[Cardelli and Wegner(1985)]%
        {CardelliW85}
\bibfield{author}{\bibinfo{person}{Luca Cardelli} {and} \bibinfo{person}{Peter
  Wegner}.} \bibinfo{year}{1985}\natexlab{}.
\newblock \showarticletitle{On Understanding Types, Data Abstraction, and
  Polymorphism}.
\newblock \bibinfo{journal}{\emph{{ACM} Comput. Surv.}} \bibinfo{volume}{17},
  \bibinfo{number}{4} (\bibinfo{year}{1985}), \bibinfo{pages}{471--522}.
\newblock
\urldef\tempurl%
\url{https://doi.org/10.1145/6041.6042}
\showDOI{\tempurl}


\bibitem[Church(1940)]%
        {Church40}
\bibfield{author}{\bibinfo{person}{Alonzo Church}.}
  \bibinfo{year}{1940}\natexlab{}.
\newblock \showarticletitle{A Formulation of the Simple Theory of Types}.
\newblock \bibinfo{journal}{\emph{J. Symb. Log.}} \bibinfo{volume}{5},
  \bibinfo{number}{2} (\bibinfo{year}{1940}), \bibinfo{pages}{56--68}.
\newblock


\bibitem[d.~S.~Oliveira et~al\mbox{.}(2016)]%
        {OliveiraSA16}
\bibfield{author}{\bibinfo{person}{Bruno~C. d. S.~Oliveira},
  \bibinfo{person}{Zhiyuan Shi}, {and} \bibinfo{person}{Jo{\~{a}}o Alpuim}.}
  \bibinfo{year}{2016}\natexlab{}.
\newblock \showarticletitle{Disjoint intersection types}. In
  \bibinfo{booktitle}{\emph{Proceedings of the 21st {ACM} {SIGPLAN}
  International Conference on Functional Programming, {ICFP} 2016, Nara, Japan,
  September 18-22, 2016}}, \bibfield{editor}{\bibinfo{person}{Jacques
  Garrigue}, \bibinfo{person}{Gabriele Keller}, {and} \bibinfo{person}{Eijiro
  Sumii}} (Eds.). \bibinfo{publisher}{{ACM}}, \bibinfo{pages}{364--377}.
\newblock
\urldef\tempurl%
\url{https://doi.org/10.1145/2951913.2951945}
\showDOI{\tempurl}


\bibitem[Damas and Milner(1982)]%
        {DamasMilner}
\bibfield{author}{\bibinfo{person}{Luis Damas} {and} \bibinfo{person}{Robin
  Milner}.} \bibinfo{year}{1982}\natexlab{}.
\newblock \showarticletitle{Principal Type-Schemes for Functional Programs}. In
  \bibinfo{booktitle}{\emph{Proceedings of the 9th ACM SIGPLAN-SIGACT Symposium
  on Principles of Programming Languages}} (Albuquerque, New Mexico)
  \emph{(\bibinfo{series}{POPL '82})}. \bibinfo{publisher}{Association for
  Computing Machinery}, \bibinfo{address}{New York, NY, USA},
  \bibinfo{pages}{207–212}.
\newblock
\showISBNx{0897910656}
\urldef\tempurl%
\url{https://doi.org/10.1145/582153.582176}
\showDOI{\tempurl}


\bibitem[Dolan(2016)]%
        {Dolan16}
\bibfield{author}{\bibinfo{person}{Stephen Dolan}.}
  \bibinfo{year}{2016}\natexlab{}.
\newblock \emph{\bibinfo{title}{Algebraic Subtyping}}.
\newblock \bibinfo{thesistype}{Ph.\,D. Dissertation}. \bibinfo{school}{Computer
  Laboratory}, \bibinfo{address}{University of Cambridge, United Kingdom}.
\newblock


\bibitem[Dolan and Mycroft(2017)]%
        {DolanM17}
\bibfield{author}{\bibinfo{person}{Stephen Dolan} {and} \bibinfo{person}{Alan
  Mycroft}.} \bibinfo{year}{2017}\natexlab{}.
\newblock \showarticletitle{Polymorphism, subtyping, and type inference in
  MLsub}. In \bibinfo{booktitle}{\emph{{POPL}}}. \bibinfo{publisher}{{ACM}},
  \bibinfo{pages}{60--72}.
\newblock


\bibitem[Dunfield(2014)]%
        {Dunfield14}
\bibfield{author}{\bibinfo{person}{Jana Dunfield}.}
  \bibinfo{year}{2014}\natexlab{}.
\newblock \showarticletitle{Elaborating intersection and union types}.
\newblock \bibinfo{journal}{\emph{J. Funct. Program.}} \bibinfo{volume}{24},
  \bibinfo{number}{2-3} (\bibinfo{year}{2014}), \bibinfo{pages}{133--165}.
\newblock
\urldef\tempurl%
\url{https://doi.org/10.1017/S0956796813000270}
\showDOI{\tempurl}


\bibitem[Emrich et~al\mbox{.}(2020)]%
        {FreezeML}
\bibfield{author}{\bibinfo{person}{Frank Emrich}, \bibinfo{person}{Sam
  Lindley}, \bibinfo{person}{Jan Stolarek}, \bibinfo{person}{James Cheney},
  {and} \bibinfo{person}{Jonathan Coates}.} \bibinfo{year}{2020}\natexlab{}.
\newblock \showarticletitle{FreezeML: Complete and Easy Type Inference for
  First-Class Polymorphism}. In \bibinfo{booktitle}{\emph{Proceedings of the
  41st ACM SIGPLAN Conference on Programming Language Design and
  Implementation}} (London, UK) \emph{(\bibinfo{series}{PLDI 2020})}.
  \bibinfo{publisher}{Association for Computing Machinery},
  \bibinfo{address}{New York, NY, USA}, \bibinfo{pages}{423–437}.
\newblock
\showISBNx{9781450376136}
\urldef\tempurl%
\url{https://doi.org/10.1145/3385412.3386003}
\showDOI{\tempurl}


\bibitem[Felleisen(1991)]%
        {Felleisen91}
\bibfield{author}{\bibinfo{person}{Matthias Felleisen}.}
  \bibinfo{year}{1991}\natexlab{}.
\newblock \showarticletitle{On the Expressive Power of Programming Languages}.
\newblock \bibinfo{journal}{\emph{Sci. Comput. Program.}} \bibinfo{volume}{17},
  \bibinfo{number}{1-3} (\bibinfo{year}{1991}), \bibinfo{pages}{35--75}.
\newblock
\newblock
\shownote{Revised version}.


\bibitem[Gaster(1998)]%
        {Gaster98}
\bibfield{author}{\bibinfo{person}{Benedict~R Gaster}.}
  \bibinfo{year}{1998}\natexlab{}.
\newblock \emph{\bibinfo{title}{Records, variants and qualified types}}.
\newblock \bibinfo{thesistype}{Ph.\,D. Dissertation}.
  \bibinfo{school}{University of Nottingham}.
\newblock


\bibitem[Gaster and Jones(1996)]%
        {Gaster96}
\bibfield{author}{\bibinfo{person}{Benedict~R Gaster} {and}
  \bibinfo{person}{Mark~P Jones}.} \bibinfo{year}{1996}\natexlab{}.
\newblock \bibinfo{booktitle}{\emph{A polymorphic type system for extensible
  records and variants}}.
\newblock \bibinfo{type}{{T}echnical {R}eport}. \bibinfo{institution}{Technical
  Report NOTTCS-TR-96-3, Department of Computer Science, University~…}.
\newblock


\bibitem[Girard(1972)]%
        {Girard72}
\bibfield{author}{\bibinfo{person}{Jean-Yves Girard}.}
  \bibinfo{year}{1972}\natexlab{}.
\newblock \emph{\bibinfo{title}{Interpr{\'e}tation fonctionnelle et
  {\'e}limination des coupures de l'arithm{\'e}tique d'ordre sup{\'e}rieur}}.
\newblock \bibinfo{thesistype}{Ph.\,D. Dissertation}.
  \bibinfo{school}{Universit{\'e} Paris 7}, \bibinfo{address}{France}.
\newblock


\bibitem[Harper and Pierce(1990)]%
        {Harper90}
\bibfield{author}{\bibinfo{person}{Robert~William Harper} {and}
  \bibinfo{person}{Benjamin~C. Pierce}.} \bibinfo{year}{1990}\natexlab{}.
\newblock \showarticletitle{{Extensible records without subsumption}}.
\newblock  (\bibinfo{date}{2} \bibinfo{year}{1990}).
\newblock
\urldef\tempurl%
\url{https://doi.org/10.1184/R1/6605507.v1}
\showDOI{\tempurl}


\bibitem[Hillerstr{\"{o}}m and Lindley(2016)]%
        {HillerstromL16}
\bibfield{author}{\bibinfo{person}{Daniel Hillerstr{\"{o}}m} {and}
  \bibinfo{person}{Sam Lindley}.} \bibinfo{year}{2016}\natexlab{}.
\newblock \showarticletitle{Liberating effects with rows and handlers}. In
  \bibinfo{booktitle}{\emph{TyDe@ICFP}}. \bibinfo{publisher}{{ACM}},
  \bibinfo{pages}{15--27}.
\newblock


\bibitem[Leijen(2005)]%
        {Leijen05}
\bibfield{author}{\bibinfo{person}{Daan Leijen}.}
  \bibinfo{year}{2005}\natexlab{}.
\newblock \showarticletitle{Extensible records with scoped labels}. In
  \bibinfo{booktitle}{\emph{Proceedings of the 2005 Symposium on Trends in
  Functional Programming (TFP'05), Tallinn, Estonia}}.
\newblock
\urldef\tempurl%
\url{https://www.microsoft.com/en-us/research/publication/extensible-records-with-scoped-labels/}
\showURL{%
\tempurl}


\bibitem[Leijen(2017)]%
        {Leijen17}
\bibfield{author}{\bibinfo{person}{Daan Leijen}.}
  \bibinfo{year}{2017}\natexlab{}.
\newblock \showarticletitle{Type directed compilation of row-typed algebraic
  effects}. In \bibinfo{booktitle}{\emph{Proceedings of the 44th {ACM}
  {SIGPLAN} Symposium on Principles of Programming Languages, {POPL} 2017,
  Paris, France, January 18-20, 2017}},
  \bibfield{editor}{\bibinfo{person}{Giuseppe Castagna} {and}
  \bibinfo{person}{Andrew~D. Gordon}} (Eds.). \bibinfo{publisher}{{ACM}},
  \bibinfo{pages}{486--499}.
\newblock
\urldef\tempurl%
\url{https://doi.org/10.1145/3009837.3009872}
\showDOI{\tempurl}


\bibitem[Morris and McKinna(2019)]%
        {MorrisM19}
\bibfield{author}{\bibinfo{person}{J.~Garrett Morris} {and}
  \bibinfo{person}{James McKinna}.} \bibinfo{year}{2019}\natexlab{}.
\newblock \showarticletitle{Abstracting extensible data types: or, rows by any
  other name}.
\newblock \bibinfo{journal}{\emph{Proc. {ACM} Program. Lang.}}
  \bibinfo{volume}{3}, \bibinfo{number}{{POPL}} (\bibinfo{year}{2019}),
  \bibinfo{pages}{12:1--12:28}.
\newblock


\bibitem[Parreaux and Chau(2022)]%
        {ParreauxC22}
\bibfield{author}{\bibinfo{person}{Lionel Parreaux} {and}
  \bibinfo{person}{Chun~Yin Chau}.} \bibinfo{year}{2022}\natexlab{}.
\newblock \showarticletitle{MLstruct: principal type inference in a Boolean
  algebra of structural types}.
\newblock \bibinfo{journal}{\emph{Proc. {ACM} Program. Lang.}}
  \bibinfo{volume}{6}, \bibinfo{number}{{OOPSLA2}} (\bibinfo{year}{2022}),
  \bibinfo{pages}{449--478}.
\newblock
\urldef\tempurl%
\url{https://doi.org/10.1145/3563304}
\showDOI{\tempurl}


\bibitem[Pierce(2002)]%
        {tapl}
\bibfield{author}{\bibinfo{person}{Benjamin~C. Pierce}.}
  \bibinfo{year}{2002}\natexlab{}.
\newblock \bibinfo{booktitle}{\emph{Types and programming languages}}.
\newblock \bibinfo{publisher}{{MIT} Press}.
\newblock
\showISBNx{978-0-262-16209-8}


\bibitem[Pottier(1998)]%
        {pottier:inria-00073205}
\bibfield{author}{\bibinfo{person}{Fran{\c c}ois Pottier}.}
  \bibinfo{year}{1998}\natexlab{}.
\newblock \bibinfo{booktitle}{\emph{{Type Inference in the Presence of
  Subtyping: from Theory to Practice}}}.
\newblock \bibinfo{type}{Research Report} RR-3483.
  \bibinfo{institution}{{INRIA}}.
\newblock
\urldef\tempurl%
\url{https://hal.inria.fr/inria-00073205}
\showURL{%
\tempurl}


\bibitem[Pottier(2001)]%
        {Pottier01}
\bibfield{author}{\bibinfo{person}{Fran{\c{c}}ois Pottier}.}
  \bibinfo{year}{2001}\natexlab{}.
\newblock \showarticletitle{Simplifying Subtyping Constraints: {A} Theory}.
\newblock \bibinfo{journal}{\emph{Inf. Comput.}} \bibinfo{volume}{170},
  \bibinfo{number}{2} (\bibinfo{year}{2001}), \bibinfo{pages}{153--183}.
\newblock
\urldef\tempurl%
\url{https://doi.org/10.1006/inco.2001.2963}
\showDOI{\tempurl}


\bibitem[Pottier and R{\'e}my(2004)]%
        {pottier04ML}
\bibfield{author}{\bibinfo{person}{Fran{\c{c}}ois Pottier} {and}
  \bibinfo{person}{Didier R{\'e}my}.} \bibinfo{year}{2004}\natexlab{}.
\newblock \showarticletitle{{The Essence of ML Type Inference}}.
\newblock In \bibinfo{booktitle}{\emph{{Advanced Topics in Types and
  Programming Languages}}}, \bibfield{editor}{\bibinfo{person}{Benjamin~C.
  Pierce}} (Ed.). \bibinfo{publisher}{The MIT Press}, Chapter~10,
  \bibinfo{pages}{460--489}.
\newblock
\showISBNx{9780262281591}
\urldef\tempurl%
\url{https://doi.org/10.7551/mitpress/1104.003.0016}
\showDOI{\tempurl}


\bibitem[R{\'{e}}my(1989)]%
        {Remy89}
\bibfield{author}{\bibinfo{person}{Didier R{\'{e}}my}.}
  \bibinfo{year}{1989}\natexlab{}.
\newblock \showarticletitle{Typechecking Records and Variants in a Natural
  Extension of {ML}}. In \bibinfo{booktitle}{\emph{Conference Record of the
  Sixteenth Annual {ACM} Symposium on Principles of Programming Languages,
  Austin, Texas, USA, January 11-13, 1989}}. \bibinfo{publisher}{{ACM} Press},
  \bibinfo{pages}{77--88}.
\newblock
\urldef\tempurl%
\url{https://doi.org/10.1145/75277.75284}
\showDOI{\tempurl}


\bibitem[R\'{e}my(1994)]%
        {Remy94}
\bibfield{author}{\bibinfo{person}{Didier R\'{e}my}.}
  \bibinfo{year}{1994}\natexlab{}.
\newblock \bibinfo{booktitle}{\emph{Type Inference for Records in Natural
  Extension of ML}}.
\newblock \bibinfo{publisher}{MIT Press}, \bibinfo{address}{Cambridge, MA,
  USA}, \bibinfo{pages}{67--95}.
\newblock


\bibitem[Reynolds(1974)]%
        {Reynolds74}
\bibfield{author}{\bibinfo{person}{John~C. Reynolds}.}
  \bibinfo{year}{1974}\natexlab{}.
\newblock \showarticletitle{Towards a theory of type structure}. In
  \bibinfo{booktitle}{\emph{Symposium on Programming}}
  \emph{(\bibinfo{series}{{LNCS}}, Vol.~\bibinfo{volume}{19})}.
  \bibinfo{publisher}{Springer}, \bibinfo{pages}{408--423}.
\newblock


\bibitem[Reynolds(1980)]%
        {Reynolds80}
\bibfield{author}{\bibinfo{person}{John~C. Reynolds}.}
  \bibinfo{year}{1980}\natexlab{}.
\newblock \showarticletitle{Using category theory to design implicit
  conversions and generic operators}. In
  \bibinfo{booktitle}{\emph{Semantics-Directed Compiler Generation}}
  \emph{(\bibinfo{series}{Lecture Notes in Computer Science},
  Vol.~\bibinfo{volume}{94})}. \bibinfo{publisher}{Springer},
  \bibinfo{pages}{211--258}.
\newblock


\bibitem[Trifonov and Smith(1996)]%
        {TrifonovS96}
\bibfield{author}{\bibinfo{person}{Valery Trifonov} {and}
  \bibinfo{person}{Scott~F. Smith}.} \bibinfo{year}{1996}\natexlab{}.
\newblock \showarticletitle{Subtyping Constrained Types}. In
  \bibinfo{booktitle}{\emph{Static Analysis, Third International Symposium,
  SAS'96, Aachen, Germany, September 24-26, 1996, Proceedings}}
  \emph{(\bibinfo{series}{Lecture Notes in Computer Science},
  Vol.~\bibinfo{volume}{1145})}, \bibfield{editor}{\bibinfo{person}{Radhia
  Cousot} {and} \bibinfo{person}{David~A. Schmidt}} (Eds.).
  \bibinfo{publisher}{Springer}, \bibinfo{pages}{349--365}.
\newblock
\urldef\tempurl%
\url{https://doi.org/10.1007/3-540-61739-6\_52}
\showDOI{\tempurl}


\bibitem[Wand(1987)]%
        {Wand87}
\bibfield{author}{\bibinfo{person}{Mitchell Wand}.}
  \bibinfo{year}{1987}\natexlab{}.
\newblock \showarticletitle{Complete Type Inference for Simple Objects}. In
  \bibinfo{booktitle}{\emph{{LICS}}}. \bibinfo{publisher}{{IEEE} Computer
  Society}, \bibinfo{pages}{37--44}.
\newblock


\bibitem[Xie et~al\mbox{.}(2020)]%
        {XieOBS19}
\bibfield{author}{\bibinfo{person}{Ningning Xie}, \bibinfo{person}{Bruno~C. d.
  S.~Oliveira}, \bibinfo{person}{Xuan Bi}, {and} \bibinfo{person}{Tom
  Schrijvers}.} \bibinfo{year}{2020}\natexlab{}.
\newblock \showarticletitle{Row and Bounded Polymorphism via Disjoint
  Polymorphism}. In \bibinfo{booktitle}{\emph{34th European Conference on
  Object-Oriented Programming, {ECOOP} 2020, November 15-17, 2020, Berlin,
  Germany (Virtual Conference)}} \emph{(\bibinfo{series}{LIPIcs},
  Vol.~\bibinfo{volume}{166})}, \bibfield{editor}{\bibinfo{person}{Robert
  Hirschfeld} {and} \bibinfo{person}{Tobias Pape}} (Eds.).
  \bibinfo{publisher}{Schloss Dagstuhl - Leibniz-Zentrum f{\"{u}}r Informatik},
  \bibinfo{pages}{27:1--27:30}.
\newblock
\urldef\tempurl%
\url{https://doi.org/10.4230/LIPIcs.ECOOP.2020.27}
\showDOI{\tempurl}


\bibitem[Xu et~al\mbox{.}(2023)]%
        {XuHO23}
\bibfield{author}{\bibinfo{person}{Han Xu}, \bibinfo{person}{Xuejing Huang},
  {and} \bibinfo{person}{Bruno~C. d. S.~Oliveira}.}
  \bibinfo{year}{2023}\natexlab{}.
\newblock \showarticletitle{Making a Type Difference: Subtraction on
  Intersection Types as Generalized Record Operations}.
\newblock \bibinfo{journal}{\emph{Proc. {ACM} Program. Lang.}}
  \bibinfo{volume}{7}, \bibinfo{number}{{POPL}} (\bibinfo{year}{2023}),
  \bibinfo{pages}{893--920}.
\newblock
\urldef\tempurl%
\url{https://doi.org/10.1145/3571224}
\showDOI{\tempurl}


\end{thebibliography}

\clearpage

\appendix

\section{More Calculi}
\label{app:more-calculi}

In this section, we elaborate on calculi that are not fully detailed
in the body of the paper.

\subsection{A Calculus with Row Polymorphic Records $\STLCRecRow$}
\label{sec:stlcrecrow}

The extensions to the syntax, static semantics, and dynamic semantics of
$\STLCRec$ for a calculus with row polymorphic records are shown in
\Cref{fig:stlcrecrow}. Actually, they are exactly the same as the extensions to
$\STLCVar$ for $\STLCVarRow$ in \Cref{fig:var-calc-row}.

\begin{figure}[h]
  \flushleft
  \textbf{Syntax}

  \vspace{-\the\baselineskip}
  \begin{minipage}{0.6\textwidth}
    \begin{syntax}
      \slab{Types} &\Belong{A}{\Types}     &::=& \ldots \mid \forall \rho^K.A \\
      \slab{Rows}  &\Belong{R}{\Rows} &::=& \ldots \mid \rho\\
    \end{syntax}
  \end{minipage}
  \begin{minipage}{0.4\textwidth}
    \begin{syntax}
      \slab{Terms} &\Belong{M}{\Terms}     &::=& \ldots \mid \Lambda \rho^K . M \mid M\,R\\
      \slab{Type~Environments}&\Belong{\Delta}{\TyEnvs} &::=& \ldots \mid \Delta, \rho : K \\
    \end{syntax}
  \end{minipage}

  \textbf{Static Semantics}\medskip\\
  \def \MathparLineskip {\lineskip=8pt}
  \begin{minipage}[t]{0.45\textwidth}
  \fbox{$\typ{\Delta\phantom{;\!}}{A : K}$}
  \begin{mathpar}
    \inferrule*[Lab=\klab{RowVar}]
      {~}
      {\typ{\Delta, \rho : \Row_\LS}{\rho : \Row_\LS}} \\

    \inferrule*[Lab=\klab{RowAll}]
      {\typ{\Delta,\rho : \Row_\LS}{A : \Type}}
      {\typ{\Delta}{\forall \rho^{\Row_\LS} .  A : \Type}}
  \end{mathpar}
  \end{minipage}
  \begin{minipage}[t]{0.5\textwidth}
  \fbox{$\typ{\Delta;\Gamma}{M : A}$}
  \begin{mathpar}
    \inferrule*[Lab=\tylab{RowLam}]
       {\typ{\Delta, \rho : K;\Gamma}{M : A} \\ \rho \notin \ftv{\Gamma}}
       {\typ{\Delta;\Gamma}{\Lambda \rho^K.M : \forall \rho^K.A}} \\

    \inferrule*[Lab=\tylab{RowApp}]
       {\typ{\Delta;\Gamma}{M : \forall \rho^K.B} \\ \typ{\Delta}{A : K}}
       {\typ{\Delta;\Gamma}{M\,A : B[A/\rho]}}
  \end{mathpar}
  \end{minipage}

  \textbf{Dynamic Semantics}
  \begin{reductions}
    \tlab{RowLam} & (\Lambda \rho^K.M)\,R &\treducesto& M[R/\rho] \\
  \end{reductions}
  \caption{Extensions of \STLCRec with row polymorphism \STLCRecRow}
  \label{fig:stlcrecrow}
\end{figure}

\subsection{A Calculus with Presence Polymorphic Variants \STLCVarPre}
\label{sec:stlcvarpre}

The extensions and modifications to the syntax, static semantics, and dynamic
semantics of $\STLCVar$ for a calculus with presence polymorphic variants
$\STLCVarPre$ are shown in \Cref{fig:stlcvarpre}.

One thing worth noting is that in \tylab{Case}, we do not require all
labels in the type of $M$ to be present, which is dual to the
\tylab{Record} rule in \Cref{fig:rec-calc-pre}. It does not loss any
generality as our equivalence relation between rows only considers
present labels.

\begin{figure}[h]
\flushleft
\noindent\textbf{Syntax}
\begin{minipage}{0.45\textwidth}
  \begin{syntax}
  \slab{Kinds} & \Belong{K}{\Kinds}   &::=& \ldots %
                                      \mid \Presence \\
  \slab{Types} &  \Belong{A}{\Types}     &::=& \ldots %
                                      \mid \forall \theta.A \\
  \slab{Rows}  &\Belong{R}{\Rows} &::=& \ldots \mid \hlmod{\ell^{P} : A; R} \\
  \end{syntax}
\end{minipage}
\begin{minipage}{0.55\textwidth}
  \begin{syntax}
  \slab{Presence} & \Belong{P}{\Presences} &::=& \Absent \mid \Present \mid \theta \\
  \slab{Terms}& \Belong{M}{\Terms} &::=& \ldots
        \mid \Lambda \theta . M \mid M\,P \\
  \slab{Type~Environments}& \Belong{\Delta}{\TyEnvs} &::=& \ldots \mid \Delta, \theta \\
  \end{syntax}
\end{minipage}

\textbf{Static Semantics} \medskip\\
\fbox{$\typ{\Delta\phantom{;\!}}{A : K}$}
\def \MathparLineskip {\lineskip=8pt}
\vspace{-2em}
\begin{mathpar}
 \inferrule*[Lab=\klab{Absent}]
           {~}
           {\typ{\Delta}{\Absent : \Presence}}

 \inferrule*[Lab=\klab{Present}]
           {~}
           {\typ{\Delta}{\Present : \Presence}}

 \inferrule*[Lab=\klab{PreVar}]
           {~}
           {\typ{\Delta, \theta}{\theta : \Presence}}

 \inferrule*[Lab=\klab{PreAll}]
           {\typ{\Delta, \theta}{A : \Type}}
           {\typ{\Delta}{\forall \theta.A : \Type}}

  \hlmod{
  \inferrule*[Lab=\klab{ExtendRow}]
           {
            \typ{\Delta}{P : \Presence} \\\\
            \typ{\Delta}{A : \Type} \\\\
            \typ{\Delta}{R : \Row_{\LS \uplus \{\ell\}}}
          }
           {\typ{\Delta}{\ell^P : A; R : \Row_{\LS}}}
  }
\end{mathpar}
\fbox{$\typ{\Delta;\Gamma}{M : A}$}
\begin{mathpar}
 \inferrule*[Lab=\tylab{PreLam}]
           {\typ{\Delta,\theta;\Gamma}{M : A} \\ \theta \notin \ftv{\Gamma}}
           {\typ{\Delta;\Gamma}{\Lambda \theta.M : \forall \theta.A}}

 \inferrule*[Lab=\tylab{PreApp}]
           {\typ{\Delta;\Gamma}{M : \forall \theta.A}\\
            \Delta\vdash P:\Presence}
           {\typ{\Delta;\Gamma}{M\,P : A[P/\theta]}}

  \hlmod{
  \inferrule*[Lab=\tylab{Inject}]
  {(\ell^\Present : A) \in R \\ \typ{\Delta;\Gamma}{M : A}}
  {\typ{\Delta;\Gamma}{(\ell\,M)^{\Variant{R}} : \Variant{R}}}

  \inferrule*[Lab=\tylab{Case}]
   {\typ{\Delta;\Gamma}{M : \Variant{\ell_i^{P_i}:A_i}_i} \\
    [\typ{\Delta;\Gamma,x_i : A_i}{N_i : B}]_i}
   {\typ{\Delta;\Gamma}{\Case~M~\{\ell_i~x_i \mapsto N_i\}_i : B}}
  }
\end{mathpar}
\textbf{Dynamic Semantics}
\begin{reductions}
  \tbetalab{PreLam} & (\Lambda \theta.M)\,P &\treducesto& M[P/\theta]\\
\end{reductions}

\caption{Extensions and modifications to \STLCVar with presence
polymorphism \STLCVarPre. Highlighted parts replace the old ones in
\STLCVar, rather than extensions.}
\label{fig:stlcvarpre}
\end{figure}

\subsection{A Calculus with Rank-1 Row Polymorphic Records \STLCRecPrenex}
\label{app:stlcrecprenex}

The extensions to the syntax, static semantics, and dynamic semantics for
\STLCRecPrenex, a calculus with records and rank-1 row polymorphism are shown in
\Cref{fig:rec-calc-prenex}.
For the type syntax, we introduce row variables and
type schemes. For the term syntax, we drop the type annotation on $\lambda$
abstractions, and add the $\Let$ syntax for polymorphism. We only give the
declarative typing rules, as the syntax-directed typing rules and type inference
are just standard \citep{DamasMilner}.
Notice that we do not introduce type variables for values in type schemes for
simplicity. The lack of principal types is fine here as we are working with
declarative typing rules. It is easy to regain principal types by adding
value type variables.

\begin{figure}[h]
  \flushleft
  \textbf{Syntax}
  \begin{syntax}
    \slab{Type~Schemes} & \Belong{\tau}{\TyScheme} &::=& A \mid \forall \rho^K . \tau \\
    \slab{Rows}  &\Belong{R}{\Rows} &::=& \ldots \mid \rho \\
    \slab{Terms} & \Belong{M,N}{\Terms} &::=& \ldots \mid \lambda x.M \mid \Let\,x = M \,\In\, N \\
    \slab{Type~Environments} &\Belong{\Delta}{\TyEnvs} &::=& \ldots \mid \Delta, \rho:K \\
    \slab{Term~Environments} & \Belong{\Gamma}{\Envs} &::=& \hlmod{\cdot \mid \Gamma, x : \tau}\\
  \end{syntax}

  \textbf{Static Semantics} \medskip\\
  \qquad\fbox{$\typ{\Delta\phantom{;\!}}{A : K}$}
  \begin{mathpar}
  \inferrule*[Lab=\klab{RowVar}]
  {~}
  {\typ{\Delta, \rho : \Row_\LS}{\rho : \Row_\LS}}

  \inferrule*[Lab=\klab{RowAll}]
  {\typ{\Delta,\rho : \Row_\LS}{A : \Type}}
  {\typ{\Delta}{\forall \rho^{\Row_\LS} .  A : \Type}}
  \end{mathpar}

  \qquad \fbox{$\typ{\Delta;\Gamma}{M : A}$}
  \begin{mathpar}
    \hlmod{
    \inferrule*[Lab=\tylab{Lam}]
      {\typ{\Delta;\Gamma, x : A}{M : B}}
      {\typ{\Delta;\Gamma}{\lambda x. M : A \to B}}
    }

    \inferrule*[Lab=\tylab{Let}]
      { \typ{\Delta;\Gamma}{M : \tau} \\
        \typ{\Delta;\Gamma, x:\tau}{N : A}
      }
      {\typ{\Delta;\Gamma}{\Let\, x = M \,\In\, N : A}}

    \inferrule*[Lab=\tylab{Inst}]
      { \typ{\Delta;\Gamma}{M : \forall \rho^{\Row_\LS} . \tau} \\
        \typ{\Delta}{R : \Row_\LS}
      }
      {\typ{\Delta;\Gamma}{M : \tau[R/\rho]}}

    \inferrule*[Lab=\tylab{Gen}]
      {\typ{\Delta, \rho : \Row_\LS;\Gamma}{M : \tau} \\ \rho \notin \ftv{\Gamma,\Delta}}
      {\typ{\Delta;\Gamma}{M : \forall \rho^{\Row_\LS} . \tau}}
  \end{mathpar}

  \textbf{Dynamic Semantics}
  \begin{reductions}
    \betalab{Let} & \Let\, x=M \,\In\, N &\breducesto& N[M/x]\\
  \end{reductions}
  \caption{Extensions and modifications to $\STLCRec$ for a calculus
   with rank-1 row polymorphism $\STLCRecPrenex$. Highlighted parts
   replace the old ones in \STLCRec, rather than extensions.}
  \label{fig:rec-calc-prenex}
\end{figure}

\section{Proofs of Encodings in Section~\ref{SEC:TRANS}}

In this section, we show the proofs of type preservation and operational
correspondence for all the four translations in \Cref{sec:translations}.

\subsection{Proof of the Encoding of \STLCVarSub in \STLCVar}
\label{app:proof-stlcvarsub-stlcvar}

\begin{lemma}[Translation commutes with substitution]
  \label{lemma:varsubvar-subst}
  If $\typ{\Delta;\Gamma,x:A}{M:B}$ and $\typ{\Delta;\Gamma}{N:A}$,
  then $\transl{M[N/x]} = \transl{M}[\transl{N}/x]$.
\end{lemma}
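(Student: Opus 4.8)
The plan is to prove the statement by induction on the typing derivation of $\typ{\Delta;\Gamma,x:A}{M:B}$ (equivalently, by structural induction on $M$, since the translation is defined over derivations but its action is determined by the term together with its type annotations). Since $\transl{-}$ is the homomorphic extension of a single non-trivial clause — the one for upcasts — every case except the upcast case reduces to a routine application of the induction hypothesis to immediate subterms, combined with the fact that capture-avoiding substitution commutes with each term constructor.

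First I would dispatch the variable base cases. If $M = x$, then $M[N/x] = N$ and $\transl{x}[\transl{N}/x] = \transl{N}$, so both sides equal $\transl{N}$; if $M = y \neq x$, both sides are $y$. The cases for $\lambda$-abstraction, application, injection, and case split then follow directly: in each, the translation distributes over the constructor, substitution distributes over the same constructor (choosing bound variables fresh for $x$ and for $\fv{\transl{N}}$ by the usual variable convention), and the induction hypotheses on the subterms line the two sides up.

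The one case carrying real content is the upcast $M = M'^{\Variant{\ell_i:A_i}_i}\Upcast\Variant{R}$, whose translation is $\Case~\transl{M'}~\{\ell_i~x_i \mapsto (\ell_i~x_i)^{\Variant{R}}\}_i$. Here I would compute both sides explicitly. On the left, substitution commutes with the upcast constructor, so $\transl{M[N/x]} = \Case~\transl{M'[N/x]}~\{\ell_i~x_i \mapsto (\ell_i~x_i)^{\Variant{R}}\}_i$, and the induction hypothesis rewrites $\transl{M'[N/x]}$ as $\transl{M'}[\transl{N}/x]$. On the right, $\transl{M}[\transl{N}/x]$ pushes the substitution into the scrutinee $\transl{M'}$ and into each branch body; the key observation is that each branch body $(\ell_i~x_i)^{\Variant{R}}$ mentions only the freshly bound $x_i$ and no free occurrence of $x$, so the substitution leaves the branches untouched. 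The two sides thus coincide.

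The main obstacle is exactly this upcast case, and within it the bookkeeping that the binders $x_i$ introduced by the translation are genuinely fresh — in particular $x_i \neq x$ and $x_i \notin \fv{\transl{N}}$ — so that no capture occurs and the substitution does not accidentally reach into the reconstructed branches; this is handled by the standard Barendregt convention, renaming the $x_i$ if necessary. A secondary subtlety, since $\transl{-}$ is formally defined on derivations rather than raw terms, is that the type annotation $\Variant{\ell_i:A_i}_i$ on $M'$ is unchanged by substituting the well-typed $N$ for $x$, so the upcast clause fires identically on $M$ and on $M[N/x]$; this is what justifies presenting the induction as one on term structure.
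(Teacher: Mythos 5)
Your proof is correct and follows essentially the same route as the paper's: structural induction on $M$, with all cases except the upcast dispatched homomorphically, and the upcast case resolved by the same equational computation showing the substitution leaves the reconstructed branches $(\ell_i~x_i)^{\Variant{R}}$ untouched. Your explicit treatment of the freshness of the $x_i$ and the invariance of the type annotation under substitution makes precise what the paper leaves implicit, but adds no new idea.
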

\begin{proof}
  By straightforward induction on $M$.
  \begin{description}
    \item[$x$] $\transl{x[N/x]} = \transl{N} = \transl{x}[\transl{N}/x]$.
    \item[$y(y\neq x)$] $\transl{y[N/x]} = y = \transl{y}[\transl{N}/x]$
    \item[$M_1\, M_2$] Our goal follows from IH and definition of substitution.
    \item[$(\ell\,M')^A$] Our goal follows from IH and definition of substitution.
    \item[$\Case\,M'\,\{\ell_i\,x_i\mapsto N_i\}_i$] Our goal follows from IH
    and definition of substitution.
    \item[$M'\Upcast A$] By IH and definition of substitution, we have
    $ \transl{(M^{\Variant{\ell_i:A_i}_i} \Upcast \Variant{R}) [N/x]}
    = \transl{M^{\Variant{\ell_i:A_i}_i}[N/x] \Upcast \Variant{R}}
    = \Case~\transl{M[N/x]}~\{\ell_i~x_i \mapsto (\ell_i~x_i)^{\Variant{R}}\}_{i}
    = \Case~\transl{M}[\transl{N}/x]~\{\ell_i~x_i \mapsto (\ell_i~x_i)^{\Variant{R}}\}_{i}
    = (\Case~\transl{M}~\{\ell_i~x_i \mapsto (\ell_i~x_i)^{\Variant{R}}\}_{i})[\transl{N}/x]
    = \transl{M^{\Variant{\ell_i:A_i}_i} \Upcast \Variant{R}}[\transl{N}/x]
    $.
  \end{description}
\end{proof}

\STLCVarSubSTLCVarTYPE*

\begin{proof}
  By straightforward induction on typing derivations.
  \begin{description}
    \item[\tylab{Var}] Our goal follows from $\transl{x} = x$ and \tylab{Var}.
    \item[\tylab{Lam}] Our goal follows from IH and \tylab{Lam}.
    \item[\tylab{App}] Our goal follows from IH and \tylab{App}.
    \item[\tylab{Inject}] Our goal follows from IH and \tylab{Inject}.
    \item[\tylab{Case}] Our goal follows from IH and \tylab{Case}.
    \item[\tylab{Upcast}] The only subtyping relation in \STLCVarSub is for
    variant types. Given $\typ{\Delta;\Gamma}{M^{\Variant{R}} \Upcast
    \Variant{R'} : \Variant{R'}}$, by $\typ{\Delta;\Gamma}{M:\Variant{R}}$ and
    IH we have $\typ{\transl{\Delta};\transl{\Gamma}}{\transl{M}:\Variant{R}}$.
    Then, supposing $R = (\ell_i:A_i)_i$, by definition of translation,
    $\Variant{R}\subtype\Variant{R'}$ and \tylab{Case} we have
    $\typ{\transl{\Delta};\transl{\Gamma}}{\Case~\transl{M}~\{\ell_i~x_i \mapsto
    (\ell_i~x_i)^{\Variant{R'}}\}_{i} :\Variant{R'}}$.
  \end{description}
\end{proof}

\STLCVarSubSTLCVar*

\begin{proof}~\\
  \noindent\textsc{Simulation}: First, we prove the base case that the whole
    term $M$ is reduced, i.e. $M \bureducesto N$ implies $\transl{M} \breducesto
    \transl{N}$. The proof proceeds by case analysis on the reduction relation:
    \begin{description}
      \item[\betalab{Lam}]  We have $(\lambda x^A . M_1)\, M_2 \breducesto
      M_1[M_2 / x]$. Then, \refa{$\transl{(\lambda x^A . M_1)\, M_2} = (\lambda
      x^A . \transl{M_1})\ \transl{M_2} \breducesto \transl{M_1}[\transl{M_2} /
      x] = \transl{M_1[M_2 / x]}$}, where the last equation follows from
      \Cref{lemma:varsubvar-subst}.
      \item[\betalab{Case}] We have $\Case~M'~\{\ell_i\,x_i \mapsto N_i\}_i
      \breducesto N_j[M_j / x_j]$. Similar to the \betalab{Lam} case.
      \item[\upcastlab{Upcast}] We have $(\ell\,M_1)^{\Variant{R}} \Upcast A
      \ureducesto (\ell\,M_1)^A$. Supposing $R = (\ell_i : A_i)_i$, we have
      \refb{$\transl{(\ell\,M_1)^{\Variant{R}} \Upcast A} = \Case\
      (\ell\,\transl{M_1})^{\Variant{R}}\ \{\ell_i\,x_i \mapsto
      (\ell_i\,x_i)^A\}_i \breducesto (\ell\,\transl{M_1})^A =
      \transl{(\ell\,M_1)^A}$}.
    \end{description}

    Then, we prove the full theorem by induction on $M$. We only need to prove
    the case where reduction happens in sub-terms of $M$.

    \begin{description}
      \item[$x$] No reduction.
      \item[$\lambda x^A . M'$] The reduction can only happen in $M'$. Supposing
      $\lambda x^A . M' \bureducestocl \lambda x^A . N'$, by IH on $M'$, we have
      $\transl{M'} \breducestocl \transl{N'}$, which then gives $\transl{\lambda
      x^A.M'} = \lambda x^A.\transl{M'} \breducestocl \lambda x^A.\transl{N'} =
      \transl{\lambda x^A . N'}$.
      \item[$M_1\, M_2$] Similar to the $\lambda x^A . M'$ case as reduction
      can only happen either in $M_1$ or $M_2$.
      \item[$(\ell\,M')^A$] Similar to the $\lambda x^A . M'$ case as reduction
      can only happen in $M'$.
      \item[$\Case~M'~\{\ell_i\,x_i \mapsto N_i\}_i$] Similar to the $\lambda x^A
      . M'$ case as reduction can only happen in $M'$ or one of $(N_i)_i$.
      \item[$M' \Upcast A$] Similar to the $\lambda x^A . M'$ case as reduction can only
      happen in $M'$.
    \end{description}

  \noindent\textsc{Reflection}: First, we prove the base case that the whole
  term $\transl{M}$ is reduced, i.e. $\transl{M} \breducesto \transl{N}$ implies
  $M \bureducesto N$. The proof proceeds by case analysis on the reduction relation:
    \begin{description}
      \item[\betalab{Lam}] By definition of translation, there exists $M_1$ and
      $M_2$ such that $M = (\lambda x^A . M_1)\ M_2$. Our goal follows from
      \refa{} and $M = (\lambda x^A . M_1)\ M_2 \breducesto M_1[M_2 / x]$.
      \item[\betalab{Case}] By definition of translation, the top-level syntax
      construct of $M$ can either be $\Case$ or upcast. Proceed by a case
      analysis:
      \begin{itemize}[topsep=0em]
        \item $M = \Case~(\ell_j\,M_j)^{\Variant{R}}~\{\ell_i\,x_i \mapsto
        N_i\}_i$ where $R = (\ell_i:A_i)_i$. Similar to the \betalab{Lam} case.
        \item $M = (\ell\,M_1)^{\Variant{R}}\Upcast A$ where $R = (\ell_i :
        A_i)_i$. Our goal follows from \refb{} and
        $(\ell\,M_1)^{\Variant{R}}\Upcast A \ureducesto (\ell\,M_1)^A$.
      \end{itemize}
    \end{description}

    Then, we prove the full theorem by induction on $M$. We only need to prove
    the case where reduction happens in sub-terms of $\transl{M}$.

    \begin{description}
      \item[$x$] No reduction.
      \item[$\lambda x^A . M'$] By definition of translation, there exists $N'$
      such that $N = \lambda x^A . N'$ and $\transl{M'} \breducestocl
      \transl{N'}$. By IH, we have $M' \bureducestocl N'$, which then implies
      $\lambda x^A . M' \bureducestocl \lambda x^A . N'$.
      \item[$M_1\, M_2$] Similar to the $\lambda x^A . M'$ case as reduction
      can only happen either in $\transl{M_1}$ or $\transl{M_2}$.
      \item[$(\ell\,M')^A$] Similar to the $\lambda x^A . M'$ case as reduction
      can only happen in $\transl{M'}$.
      \item[$\Case~M'~\{\ell_i\,x_i \mapsto N_i\}_i$] Similar to the $\lambda
      x^A . M'$ case as reduction can only happen in $\transl{M'}$ or one of
      $(\transl{N_i})_i$.
      \item[$M' \Upcast A$] Similar to the $\lambda x^A . M'$ case as reduction can only
      happen in $\transl{M'}$.
    \end{description}
\end{proof}

\subsection{Proof of the Encoding of \STLCVarSub in \STLCVarRow}
\label{app:proof-stlcvarsub-stlcvarrow}

\begin{lemma}[Translation commutes with substitution]
  \label{lemma:varsubvarrow-subst}
  If $\typ{\Delta;\Gamma,x:A}{M:B}$ and $\typ{\Delta;\Gamma}{N:A}$,
  then $\transl{M[N/x]} = \transl{M}[\transl{N}/x]$.
\end{lemma}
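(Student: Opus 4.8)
The plan is to proceed by structural induction on $M$ (formally, on the typing derivation of $\typ{\Delta;\Gamma,x:A}{M:B}$), mirroring the proof of the analogous statement \Cref{lemma:varsubvar-subst} for the encoding of $\STLCVarSub$ in $\STLCVar$. The two base cases are immediate: for $M = x$ we have $\transl{x[N/x]} = \transl{N} = \transl{x}[\transl{N}/x]$, and for a variable $y \neq x$ both sides reduce to $y$. The cases for $\lambda$-abstraction and application are routine, since the translation is homomorphic on these constructs, so the equation follows from the induction hypotheses together with the homomorphic definition of term substitution, using the usual convention that bound term variables are chosen distinct from $x$ and from the free variables of $N$.

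The interesting cases are the three constructs on which the translation acts non-trivially, namely injection, $\Case$, and upcast, since each now introduces fresh type abstractions $\Lambda\rho$ and/or type applications that were absent from the source term. For these I would first isolate a small auxiliary observation: in both $\STLCVarSub$ and $\STLCVarRow$ types never mention term variables, so the term substitution $[\transl{N}/x]$ leaves every type annotation and every row argument untouched; moreover, since each row variable $\rho$ bound by the translation is chosen fresh, we have $\rho \notin \ftv{\transl{N}}$, and hence term substitution commutes past $\Lambda\rho^{K}$ and past a type application $(-)\,R$. With this in hand the three cases are mechanical. For the upcast case, for instance,
\[
  \transl{(M^{\Variant{R}} \Upcast \Variant{R'})[N/x]}
  = \Lambda \rho^{\Row_{R'}}.\transl{M[N/x]}\,\hk{(\transl{R' \backslash R}; \rho)},
\]
and applying the induction hypothesis $\transl{M[N/x]} = \transl{M}[\transl{N}/x]$ together with the commutation observation rewrites the right-hand side to $(\Lambda \rho^{\Row_{R'}}.\transl{M}\,\hk{(\transl{R' \backslash R}; \rho)})[\transl{N}/x] = \transl{M^{\Variant{R}} \Upcast \Variant{R'}}[\transl{N}/x]$, as required. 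The injection and $\Case$ cases follow the same pattern, the latter additionally pushing the substitution through the empty-row application $\transl{M}\,\cdot$ and into each branch.

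The only point that demands care — and the main obstacle, such as it is — is the interaction between the term substitution and the freshly bound row variables introduced by the translation. I would discharge this once and for all via the freshness observation above, rather than re-deriving it in each case; this is precisely where the argument departs from \Cref{lemma:varsubvar-subst}, which had no type abstractions to traverse. Because types carry no term variables, no genuine variable-capture issue arises between terms and types, so the whole argument reduces to the standard homomorphic commutation of term substitution with binders, and the induction goes through uniformly.
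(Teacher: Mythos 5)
Your proposal is correct and follows essentially the same route as the paper's proof: induction on $M$, with the variable, abstraction, and application cases immediate, and the non-trivial injection, $\Case$, and upcast cases discharged by the same equational reasoning that pushes the term substitution under the freshly introduced $\Lambda\rho$ binders and type applications. The only difference is presentational: you isolate the freshness/no-capture observation (types contain no term variables, and the bound $\rho$ is fresh for $\transl{N}$) as an explicit auxiliary fact, whereas the paper leaves it implicit in its equation chains.
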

\begin{proof}
  By straightforward induction on $M$. Only consider cases that are
  different from the proof of \Cref{lemma:varsubvar-subst}.
  \begin{description}
    \item[$(\ell\,M')^{\Variant{R}}$] By IH and definition of substitution, we have
    \[\ba{ll}
     & \transl{(\ell\,M)^{\Variant{R}} [N/x]} \\
    =& \transl{(\ell\,M[N/x])^{\Variant{R}}} \\
    =& \Lambda \rho^{\Row_{R}}. (\ell\,\transl{M[N/x]})^{\Variant{\transl{R}; \rho}} \\
    =& \Lambda \rho^{\Row_{R}}. (\ell\,\transl{M}[\transl{N}/x])^{\Variant{\transl{R}; \rho}} \\
    =& (\Lambda \rho^{\Row_{R}}. (\ell\,\transl{M})^{\Variant{\transl{R}; \rho}})[\transl{N}/x] \\
    =& \transl{(\ell\,M)^{\Variant{R}}}[\transl{N}/x]
    \ea\]
    \item[$\Case\,M'\,\{\ell_i\,x_i\mapsto N_i\}_i$] By an equational reasoning similar to the
    case of $(\ell\,M')^{\Variant{R}}$.
    \item[$M'\Upcast A$] By an equational reasoning similar to the case of
    $(\ell\,M')^{\Variant{R}}$.
  \end{description}
\end{proof}

\STLCVarSubSTLCVarRowTYPE*

\begin{proof}
  By induction on typing derivations.
  \begin{description}
    \item[\tylab{Var}] Our goal follows from $\transl{x} = x$.
    \item[\tylab{Lam}] Our goal follows from IH and \tylab{Lam}.
    \item[\tylab{App}] Our goal follows from IH and \tylab{App}.
    \item[\tylab{Inject}] By definition we have $(l:A) \in R$ implies
    $(l:\transl{A})\in \transl{R}\rho$ for any $\rho$. Then our goal follows from
    IH, \tylab{Inject} and \tylab{RowLam}.
    \item[\tylab{Case}] Our goal follows from IH and \tylab{Case}.
    \item[\tylab{Upcast}] The only subtyping relation in \STLCVarSub is for
    variant types. Given $\typ{\Delta;\Gamma}{M^{\Variant{R}} \Upcast
    \Variant{R'} : \Variant{R'}}$, by $\typ{\Delta;\Gamma}{M:\Variant{R}}$ and
    IH we have
    $\typ{\transl{\Delta};\transl{\Gamma}}{\transl{M}:\transl{\Variant{R}}}$.
    Then, by definition of translation and \tylab{RowApp} we have
    $\typ{\transl{\Delta};\transl{\Gamma}}{\transl{M^{\Variant{R}} \Upcast
    \Variant{R'}}:\transl{\Variant{R'}}}$.
  \end{description}
\end{proof}

\STLCVarSubSTLCVarRow*

\begin{proof}~\\
  \noindent\textsc{Simulation}: First, we prove the base case where the whole
  term $M$ is reduced, i.e. $M \breducesto N$ implies $\transl{M}
  \treducestocl^\que\breducestocl \transl{N}$, and $M \ureducesto N$ implies
  $\transl{M} \vreducestocl \transl{N}$. The proof proceeds by case analysis on
  the reduction relation:
  \begin{description}
    \item[\betalab{Lam}] We have $(\lambda x^A . M_1)\, M_2 \breducesto M_1[M_2
    / x]$. Then, \refa{$\transl{(\lambda x^A . M_1)\, M_2} = (\lambda x^A .
    \transl{M_1})\ \transl{M_2} \breducesto \transl{M_1}[\transl{M_2} / x] =
    \transl{M_1[M_2 / x]}$}, where the last equation follows from
    \Cref{lemma:varsubvarrow-subst}.
    \item[\betalab{Case}] We have $\Case\ (\ell_j\,M_j)^{\Variant{R}}\ \{\ell_i\
    x_i \mapsto N_i\} \breducesto N_j[M_j/x_j]$. Supposing $R = (\ell_i :
    A_i)_i$, we have \refb{$\transl{\Case\ (\ell_j\,M_j)^{\Variant{R}}\ \{\ell_i\ x_i
    \mapsto N_i\}} = \Case\ (\transl{(\ell_j\,M_j)^{\Variant{R}}}\,{\cdot})\
    \{\ell_i\ x_i \mapsto \transl{N_i}\} \treducestocl \Case\
    ((\ell_j\,\transl{M_j})^{\Variant{\transl{R}}})\ \{\ell_i\ x_i \mapsto
    \transl{N_i}\} \breducesto \transl{N_j}[\transl{M_j} / x_j] =
    \transl{N_j[M_j / x_j]}$}, where the last equation follows from
    \Cref{lemma:varsubvarrow-subst}.
    \item[\upcastlab{Upcast}] We have $(\ell\,M_1)^{\Variant{R}} \Upcast
    \Variant{R'} \ureducesto (\ell\,M_1)^{\Variant{R'}}$. We have
    \refc{$\transl{(\ell\,M_1)^{\Variant{R}} \Upcast \Variant{R'}} =$\\ $\Lambda
    \rho^{\Row_{R'}}.\transl{(\ell\,M_1)^{\Variant{R}}}\hk{(\transl{R' \backslash
    R}; \rho)} \vreducestocl \Lambda \rho^{\Row_{R'}}.
    (\ell\,M_1)^{\Variant{\transl{R'}; \rho}} =
    \transl{(\ell\,M_1)^{\Variant{R'}}}$}.
  \end{description}

  Then, we prove the full theorem by induction on $M$. We only need to prove the
  case where reduction happens in sub-terms of $M$.
  \begin{description}
    \item[$x$] No reduction.
    \item[$\lambda x^A . M'$] The reduction can only happen in $M'$. Supposing
    $\lambda x^A . M' \breducestocl \lambda x^A . N'$, by IH on $M'$, we have
    $\transl{M'} \treducestocl^\que\breducestocl \transl{N'}$, which then gives
    $\transl{\lambda x^A.M'} = \lambda x^A.\transl{M'}
    \treducestocl^\que\breducestocl \lambda x^A.\transl{N'} = \transl{\lambda x^A
    . N'}$.
    The same applies to the second case of the theorem.
    \item[$(\ell\,M')^{\Variant{R}}$] Similar to the $\lambda x^A . M'$ case as
    reduction can only happen in $M'$.
    \item[$M_1\ M_2$] Similar to the $\lambda x^A . M'$ case as reduction can only
    happen either in $M_1$ or $M_2$.
    \item[$\Case~M'~\{\ell_i\,x_i \mapsto N_i\}_i$] Similar to the $\lambda x^A
    . M'$ case as reduction can only happen in $M'$ or one of $(N_i)_i$.
    \item[$M' \Upcast A$] Similar to the $\lambda x^A . M'$ case as reduction can only
    happen in $M'$.
  \end{description}

  \noindent\textsc{Reflection}:
  We proceed by induction on $M$.
  \begin{description}
    \item[$x$] No reduction.
    \item[$\lambda x^A . M'$] We have $\transl{M} = \lambda
    x^{\transl{A}}.\transl{M'}$. The reduction can only happen in $\transl{M'}$.
    By definition of translation, there exists $N'$ such that $N = \lambda x^A .
    N'$ and $\transl{M'} \treducestocl^\que\breducestocl \transl{N'}$. By IH, we have $M'
    \breducestocl N'$, which then implies $M \breducestocl N$.
    The same applies to the second case of the theorem.
    \item[$M_1\ M_2$] We have $\transl{M} = \transl{M_1}\ \transl{M_2}$.
    Proceed by case analysis where the first step of reduction happens.
    \begin{itemize}[topsep=0em]
      \item Reduction happens in either $\transl{M_1}$ or $\transl{M_2}$.
      Similar to the $\lambda x^A . M'$ case.
      \item The application is reduced by \betalab{Lam}. By
      definition of translation, we have $M_1 = \lambda x^A . M'$. By \refa{},
      we have $\transl{M} \breducesto \transl{M'[M_2 / x]}$, which then gives $N
      = M'[M_2 / x]$. Our goal follows from $M \breducesto N$.
    \end{itemize}
    \item[$(\ell\,M')^{\Variant{R}}$] We have $\transl{M} = \Lambda
    \rho^{\Row_{R}}. (\ell\,\transl{M'})^{\Variant{\transl{R}; \rho}}$. Similar
    to the $\lambda x^A . M'$ case as the reduction can only happen in
    $\transl{M'}$.
    \item[$\Case~M'~\{\ell_i\,x_i \mapsto N_i\}_i$] We have $\transl{M} =
    \Case~(\transl{M'}\,{\cdot})~\{\ell_i~x_i \mapsto \transl{N_i}\}_i$.
    Proceed by case analysis where the first step of reduction happens.
    \begin{itemize}[topsep=0em]
      \item Reduction happens in $\transl{M'}$ or one of $\transl{N_i}$.
      Similar to the $\lambda x^A . M'$ case.
      \item The row type application $\transl{M'}\,{\cdot}$ is reduced by
      \tlab{RowLam}. Supposing $\transl{M}\treducestocl N'$, by the
      definition of translation, because $\transl{N}$ must be in the codomain of
      the translation, we can only have $N' \breducesto \transl{N}$ by applying
      \betalab{Case}, which implies $M' = (\ell_j\, M_j)^{\Variant{R}}$. By
      \refb{}, we have $\transl{M} \treducestocl \breducesto \transl{N_j[M_j /
      x_j]}$, which then gives us $N = N_j[M_j/x_j]$. Our goal follows from $M
      \breducesto N$.
    \end{itemize}
    \item[$M'^{\Variant{R}} \Upcast \Variant{R'}$] We have $\transl{M} =
    \Lambda \rho^{\Row_{R'}}.\transl{M'}\, (\transl{R' \backslash R}; \rho)$.
    Proceed by case analysis where the first step of reduction happens.
    \begin{itemize}[topsep=0em]
      \item Reduction happens in $\transl{M'}$. Similar to the $\lambda x^A .
      M'$ case.
      \item The row type application $\transl{M'}\, (\transl{R' \backslash R};
      \rho)$ is reduced by \tbetalab{RowLam}. Because $\transl{M'}$ should be a
      type abstraction, there are only two cases. Proceed by case analysis on
      $M'$.
      \begin{itemize}
        \item $M' = (\ell\,M_1)^{\Variant{R}}$. By \refc{}, we have $\transl{M}
        \breducestocl \transl{(\ell\,M_1)^{\Variant{R'}}}$, which then gives us
        $N = {(\ell\,M_1)^{\Variant{R'}}}$. Our goal follows from $M\breducestocl
        N$.
        \item $M' = M_1^{\Variant{R_1}} \Upcast \Variant{R}$. We have
        \[\ba{ll}
        \transl{M} &=
        \Lambda \rho^{\Row_{R'}}.\transl{M_1^{\Variant{R_1}} \Upcast \Variant{R}}\, (\transl{R' \backslash R}; \rho) \\ &=
        \Lambda \rho^{\Row_{R'}}. (\Lambda \rho^{\Row_{R}}.\transl{M_1}\hk{(\transl{R \backslash R_1}; \rho)}) \hk{(\transl{R' \backslash R}; \rho)} \\ &\vreducestocl
        \Lambda \rho^{\Row_{R'}}. \transl{M_1}\hk{(\transl{R \backslash R_1}; \transl{R' \backslash R}; \rho)} \\ &=
        \Lambda \rho^{\Row_{R'}}. \transl{M_1}\hk{(\transl{R' \backslash R_1}; \rho)} \\ &=
        \transl{M_1^{\Variant{R_1}}\Upcast \Variant{R'}}
        \ea\]
        By the definition of translation, we know that $N = {M_1^{\Variant{R_1}}\Upcast \Variant{R'}}$.
        Our goal follows from $M\unewreducestocl N$.
      \end{itemize}
    \end{itemize}
  \end{description}
\end{proof}

\subsection{Proof of the Encoding \STLCRecSub in \STLCRec}
\label{app:proof-stlcrecsub-stlcrec}

\begin{lemma}[Translation commutes with substitution]
  \label{lemma:recsubrec-subst}
  If $\typ{\Delta;\Gamma,x:A}{M:B}$ and $\typ{\Delta;\Gamma}{N:A}$,
  then $\transl{M[N/x]} = \transl{M}[\transl{N}/x]$.
\end{lemma}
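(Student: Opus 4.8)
The plan is to prove the lemma by straightforward induction on the structure of the term $M$ (equivalently, on its typing derivation), following exactly the pattern of \Cref{lemma:varsubvar-subst} but with the record constructs in place of the variant ones. Since the translation $\transl{-}$ is homomorphic on every term former except the upcast, I expect almost every case to reduce to a routine appeal to the induction hypothesis together with the definition of capture-avoiding substitution.

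First I would dispatch the two variable cases: for $M = x$ we have $\transl{x[N/x]} = \transl{N} = \transl{x}[\transl{N}/x]$, and for $M = y$ with $y \neq x$ we have $\transl{y[N/x]} = y = \transl{y}[\transl{N}/x]$. The homomorphic cases, namely $\lambda y^A.M'$ (taking $y \neq x$ and $y$ fresh for $N$ by the usual variable convention), application $M_1\,M_2$, record construction $\Record{\ell_i = M_i}_i$, and projection $M'.\ell$, all follow immediately by applying the induction hypotheses to the immediate subterms and observing that substitution commutes with each of these term formers.

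The only case that carries any content is the upcast $M' \Upcast \Record{\ell_i:A_i}_i$, where I would chain the equalities $\transl{(M' \Upcast \Record{\ell_i:A_i}_i)[N/x]} = \transl{M'[N/x] \Upcast \Record{\ell_i:A_i}_i} = \Record{\ell_i = \transl{M'[N/x]}.\ell_i}_i = \Record{\ell_i = \transl{M'}[\transl{N}/x].\ell_i}_i = (\Record{\ell_i = \transl{M'}.\ell_i}_i)[\transl{N}/x] = \transl{M' \Upcast \Record{\ell_i:A_i}_i}[\transl{N}/x]$, using the definition of the translation, then the induction hypothesis on $M'$, and finally the definition of substitution.

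I expect the sole subtlety to be a bookkeeping point rather than a genuine obstacle: because $\transl{-}$ is defined on derivations and consults the type annotation $\Record{\ell_i:A_i}_i$ on the upcast, I must note that substituting a term $N$ for the term variable $x$ leaves this annotation, and hence the labels $(\ell_i)_i$ driving the translation, unchanged, since term substitution does not affect types. With that remark the same record of projections is produced on both sides, the upcast case closes, and the induction goes through.
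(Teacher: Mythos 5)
Your proposal is correct and matches the paper's proof essentially verbatim: a straightforward induction on $M$, with the variable and homomorphic cases dispatched by the induction hypothesis and the definition of substitution, and the upcast case closed by exactly the same equational chain $\transl{(M' \Upcast \Record{\ell_i:A_i}_i)[N/x]} = \Record{\ell_i = \transl{M'}[\transl{N}/x].\ell_i}_i = \transl{M' \Upcast \Record{\ell_i:A_i}_i}[\transl{N}/x]$. Your closing remark that term substitution leaves the type annotation (and hence the labels driving the translation) unchanged is a sound bookkeeping observation the paper leaves implicit, and you even cover the $\lambda$-abstraction case that the paper's case list omits.
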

\begin{proof}
  By straightforward induction on $M$.
  \begin{description}
    \item[$x$] $\transl{x[N/x]} = \transl{N} = \transl{x}[\transl{N}/x]$.
    \item[$y(y\neq x)$] $\transl{y[N/x]} = y = \transl{y}[\transl{N}/x]$
    \item[$M_1\, M_2$] Our goal follows from IH and definition of substitution.
    \item[$\Record{\ell_i=M_i}_i$] Our goal follows from IH and definition of substitution.
    \item[$M'.\ell$] Our goal follows from IH and definition of substitution.
    \item[$M'\Upcast A$] By IH and definition of substitution, we have
    \[\ba{ll} &\transl{(M' \Upcast \Record{\ell_i:A_i}_i) [N/x]} \\
    =& \transl{M'[N/x] \Upcast \Record{\ell_i:A_i}_i} \\
    =& \Record{\ell_i = \transl{M'[N/x]}.\ell_i}_i \\
    =& \Record{\ell_i = \transl{M'}[\transl{N}/x].\ell_i}_i \\
    =& (\Record{\ell_i = \transl{M'}.\ell_i}_i)[\transl{N}/x] \\
    =& \transl{M'\Upcast\Record{\ell_i:A_i}_i}[\transl{N}/x]
    \ea\]
  \end{description}
\end{proof}

\STLCRecSubSTLCRecTYPE*
\begin{proof}
  By straightforward induction on typing derivations.
  \begin{description}
    \item[\tylab{Var}] Our goal follows from $\transl{x} = x$ and \tylab{Var}.
    \item[\tylab{Lam}] Our goal follows from IH and \tylab{Lam}.
    \item[\tylab{App}] Our goal follows from IH and \tylab{App}.
    \item[\tylab{Record}] Our goal follows from IH and \tylab{Record}.
    \item[\tylab{Project}] Our goal follows from IH and \tylab{Project}.
    \item[\tylab{Upcast}] The only subtyping relation in \STLCRecSub is for
    record types. Given $\typ{\Delta;\Gamma}{M \Upcast \Record{R'} :
    \Record{R'}}$ and $\typ{\Delta;\Gamma}{M:\Record{R}}$, by IH we have
    $\typ{\transl{\Delta};\transl{\Gamma}}{\transl{M}:\Record{R}}$. Then,
    supposing $M=\Record{\ell_i = M_{\ell_i}}_i$ and $R' = (\ell_j':A_j)_j$, by
    definition of translation, $\Record{R}\subtype\Record{R'}$ and
    \tylab{Record} we have
    $\typ{\transl{\Delta};\transl{\Gamma}}{\Record{\ell_j'=M_{\ell_j'}}_j :
    \Record{R'}}$.
  \end{description}
\end{proof}

\STLCRecSubSTLCRec*

\begin{proof}~\\
  \noindent\textsc{Simulation}:

  First, we prove the base case that the whole term $M$ is reduced, i.e. $M
  \bureducesto N$ implies $\transl{M} \breducesto^\ast \transl{N}$. The proof
  proceeds by case analysis on the reduction relation.
  \begin{description}
    \item[\betalab{Lam}] We have $(\lambda x^A . M_1)\, M_2 \breducesto M_1[M_2
    / x]$. Then, \refa{$\transl{(\lambda x^A . M_1)\, M_2} = (\lambda x^A .
    \transl{M_1})\ \transl{M_2} \breducesto \transl{M_1}[\transl{M_2} / x] =
    \transl{M_1[M_2 / x]}$}, where the last equation follows from
    \Cref{lemma:recsubrec-subst}.
    \item[\betalab{Project}] We have $\Record{(\ell_i=M_i)_i}.\ell_j \breducesto
    M_j$. Our goal follows from \refb{$\transl{\Record{(\ell_i=M_i)_i}.\ell_j} =
    \Record{(\ell_i=\transl{M_i})_i}.\ell_j \breducesto \transl{M_j}$}.
    \item[\upcastlab{Upcast}] We have $\Record{\ell_i=M_{\ell_i}}_i \Upcast
    \Record{\ell_j':A_j}_j \ureducesto \Record{\ell'_j=M_{\ell'_j}}_j$. Our goal
    follows from $\transl{\Record{\ell_i=M_{\ell_i}}_i \Upcast
    \Record{\ell_j':A_j}_j} = \Record{\ell_j' =
    {\transl{\Record{\ell_i=M_{\ell_i}}_i}}.{\ell_j'} }_j = \Record{\ell_j'
    = {\Record{\ell_i=\transl{M_{\ell_i}}}_i}.{\ell_j'} }_j \breducestocl^\ast
    \Record{\ell_j' = \transl{M_{\ell_j'}}}_j$.
  \end{description}

  Then, we prove the full theorem by induction on $M$. We only need to prove the
  case where reduction happens in sub-terms of $M$.
  \begin{description}
    \item[$x$] No reduction.
    \item[$\lambda x^A . M'$] The reduction can only happen in $M'$. Supposing
    $\lambda x^A . M' \bureducestocl \lambda x^A . N'$, by IH on $M'$, we have
    $\transl{M'} \breducestocl^\ast \transl{N'}$, which then gives
    $\transl{\lambda x^A.M'} = \lambda x^A.\transl{M'}
    \breducestocl^\ast \lambda x^A.\transl{N'} = \transl{\lambda x^A
    . N'}$.
    \item[$M_1\ M_2$] Similar to the $\lambda x^A . M'$ case as reduction can only
    happen either in $M_1$ or $M_2$.
    \item[$\Record{\ell_i=M_i}_i$] Similar to the $\lambda x^A
    . M'$ case as reduction can only happen in one of $(M_i)_i$.
    \item[$M'.\ell$] Similar to the $\lambda x^A . M'$ case as reduction can only
    happen in $M'$.
    \item[$M' \Upcast A$] Similar to the $\lambda x^A . M'$ case as reduction can only
    happen in $M'$.
  \end{description}

  \noindent\textsc{Reflection}:
  We proceed by induction on $M$.
  \begin{description}
    \item[$x$] No reduction.
    \item[$\lambda x^A . M'$] We have $\transl{M} = \lambda
    x^{\transl{A}}.\transl{M'}$. The reduction can only happen in $\transl{M'}$.
    Suppose $\transl{M} \breducestocl \lambda x^{\transl{A}}.N_1$. By IH on
    $\transl{M'}$, there exists $N'$ such that $N_1 \breducestocl^\ast
    \transl{N'}$ and $M' \bureducestocl N'$. Our goal follows from setting
    $N$ to $\lambda x^A . N'$.
    \item[$M_1\ M_2$] We have $\transl{M} = \transl{M_1}\ \transl{M_2}$.
    Proceed by case analysis where the reduction happens.
    \begin{itemize}[topsep=0em]
      \item Reduction happens in either $\transl{M_1}$ or $\transl{M_2}$.
      Similar to the $\lambda x^A . M'$ case.
      \item The application is reduced by \betalab{Lam}. By
      definition of translation, we have $M_1 = \lambda x^A . M'$. By \refa{},
      we have $\transl{M} \breducesto \transl{M'[M_2 / x]}$. Our goal follows from
      setting setting $N$ to $M'[M_2 / x]$.
    \end{itemize}
    \item[$\Record{\ell_i=M_i}_i$] We have $\transl{M} =
    \Record{\ell_i=\transl{M_i}}_i$. Similar to the $\lambda x^A . M'$ case as
    the reduction can only happen in one of $\transl{M_i}$.
    \item[$M'.\ell_j$] We have $\transl{M} = \transl{M'}.\ell_j$.
    Proceed by case analysis where the reduction happens.
    \begin{itemize}[topsep=0em]
      \item Reduction happens in $\transl{M'}$. Similar to the $\lambda x^A .
      M'$ case.
      \item The projection is reduced by \betalab{Project}. By definition of
      translation, we have $M' = \Record{\ell_i=M_i}_i$. By \refb{}, we have
      $\transl{M} \breducesto \transl{M_j}$. Our goal follows from
      setting setting $N$ to $M_j$.
    \end{itemize}
    \item[$M'\Upcast\Record{\ell_i:A_i}_i$] We have $\transl{M' \Upcast
    \Record{\ell_i:A_i}_i} = \Record{\ell_i = \transl{M'}.\ell_i}_i$.
    Proceed by case analysis where the reduction happens.
    \begin{itemize}[topsep=0em]
      \item Reduction happens in one of $\transl{M'}$ in the result record.
      Supposing $\transl{M} \breducestocl M_1$, and in $M_1$ one of
      $\transl{M'}$ is reduced to $N_1$. By IH on $\transl{M'}$, there exists
      $N'$ such that $N_1\breducestocl^\ast \transl{N'}$ and $M'\bureducestocl
      N'$. Thus, we can apply the reduction $\transl{M'}\breducestocl N_1
      \breducestocl^\ast \transl{N'}$ to all $\transl{M'}$ in the result record,
      which gives us $\transl{M} \breducestocl M_1 \breducestocl^\ast
      \transl{N'\Upcast\Record{\ell_i:A_i}_i}$. Our goal follows from setting
      $N$ to $N'\Upcast\Record{\ell_i:A_i}_i$ and
      ${M'\Upcast\Record{\ell_i:A_i}_i} \bureducestocl
      {N'\Upcast\Record{\ell_i:A_i}_i}$.
      \item One of $\transl{M'}.\ell_i$ is reduced by \betalab{Project}. By the
      definition of translation, we know that $M' =
      \Record{\ell_j'=M_{\ell_j'}}_j$. Supposing $\transl{M} \breducestocl M_1$,
      we can reduce all projection in $\transl{M}$, which gives us $M_1
      \breducestocl^\ast \Record{\ell_i = \transl{M_{\ell_i}}}_i =
      \transl{\Record{\ell_i = M_{\ell_i}}_i}$. Our goal follows from setting
      $N$ to ${\Record{\ell_i = M_{\ell_i}}_i}$ and
      $M'\Upcast\Record{\ell_i:A_i}_i \ureducesto N$.
    \end{itemize}

  \end{description}
\end{proof}

\subsection{Proof of the Encoding \STLCRecSub in \STLCRecPre}
\label{app:proof-stlcrecsub-stlcrecpre}

\begin{lemma}[Translation commutes with substitution]
  \label{lemma:recsubrecpre-subst}
  If $\typ{\Delta;\Gamma,x:A}{M:B}$ and $\typ{\Delta;\Gamma}{N:A}$,
  then $\transl{M[N/x]} = \transl{M}[\transl{N}/x]$.
\end{lemma}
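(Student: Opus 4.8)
The plan is to prove the statement by structural induction on $M$---equivalently, on the typing derivation of $M$, since $\transl{-}$ is defined on derivations---following exactly the pattern of the three preceding substitution lemmas, and \Cref{lemma:recsubrec-subst} in particular. The two variable cases are immediate: $\transl{x[N/x]} = \transl{N} = \transl{x}[\transl{N}/x]$, while for $y \neq x$ we have $\transl{y[N/x]} = y = \transl{y}[\transl{N}/x]$. The cases for $\lambda$-abstraction and application are purely homomorphic and follow directly from the induction hypothesis together with the definition of capture-avoiding substitution, just as before.

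The cases carrying the actual content are the three in which the translation inserts presence-type abstractions and applications: record construction, projection, and upcast. For each I would push the substitution inward, invoke the induction hypothesis on the immediate subterms, and then pull the substitution back out. For record construction, for instance, writing $A = \Record{\ell_i : A_i}_i$ and $A' = \Record{\ell_i^{\theta_i} : \transl{A_i}}_i$, the computation runs
\[
  \transl{(\Record{\ell_i = M_i}_i^{A})[N/x]}
  = (\Lambda \theta_i)_i . \Record{\ell_i = \transl{M_i}[\transl{N}/x]}_i^{A'}
  = \transl{\Record{\ell_i = M_i}_i^{A}}\,[\transl{N}/x],
\]
where the first equality uses the induction hypothesis at each $M_i$. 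The projection case is analogous, with the translation emitting type applications $(P_i)_i$ (with $P_j = \Present$ and $P_i = \Absent$ for $i \neq j$), and the upcast case likewise, but with the $\hk{(\cdot)}$-annotated applications and the top-level $(\Lambda \theta_j)_j$ binders.

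The single point requiring care---and the only genuine obstacle---is justifying that the term substitution $[\transl{N}/x]$ commutes with the presence abstractions $\Lambda \theta_i$ and the presence applications $(P_i)_i$ that the translation introduces. This holds because $x$ is a term variable while the $\theta_i$ are presence (type) variables drawn from a disjoint namespace, so the substitution never touches a $P_i$; and because, under the standard Barendregt convention, the freshly bound $\theta_i$ do not occur free in $\transl{N}$, so moving $[\transl{N}/x]$ underneath $\Lambda \theta_i$ incurs no capture. Once this commutation is in hand, every inserted construct is inert with respect to the substitution and the three non-trivial cases collapse to the same bookkeeping as in \Cref{lemma:recsubrec-subst}. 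I therefore expect the lemma to go through by a straightforward induction, with no surprises beyond this disjointness-of-namespaces observation.
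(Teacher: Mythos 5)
Your proposal is correct and takes essentially the same route as the paper: a straightforward induction on $M$, deferring the variable, $\lambda$, and application cases to \Cref{lemma:recsubrec-subst} and handling record construction, projection, and upcast by exactly the equational chains you give (push the substitution inward, apply the IH at each subterm, pull it back out). Your explicit justification that $[\transl{N}/x]$ commutes with the inserted presence abstractions and applications---term and presence variables live in disjoint namespaces, and the freshly bound $\theta_i$ cannot capture anything in $\transl{N}$---is precisely what the paper leaves implicit under ``definition of substitution''.
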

\begin{proof}
  By straightforward induction on $M$. We only need to consider cases that are
  different from the proof of \Cref{lemma:recsubrec-subst}.
  \begin{description}
    \item[$\Record{\ell_i=M_i}_i$] By IH and definition of substitution, we have
    \[\ba{ll}
     & \transl{\Record{\ell_i=M_i}_i^{\Record{\ell_i:A_i}_i} [N/x]} \\
    =& \transl{\Record{\ell_i=M_i[N/x]}_i^{\Record{\ell_i:A_i}_i}} \\
    =& (\Lambda \theta_i)_i.\Record{\ell_i = \transl{M_i[N/x]}}_i^{
      \Record{\ell_i^{\theta_i} : \transl{A_i}}_i} \\
    =& (\Lambda \theta_i)_i.\Record{\ell_i = \transl{M_i}[\transl{N}/x]}_i^{
      \Record{\ell_i^{\theta_i} : \transl{A_i}}_i} \\
    =& ((\Lambda \theta_i)_i.\Record{\ell_i = \transl{M_i}}_i^{
      \Record{\ell_i^{\theta_i} : \transl{A_i}}_i}) [\transl{N}/x] \\
    =& \transl{\Record{\ell_i=M_i}_i^{\Record{\ell_i:A_i}_i}} [\transl{N}/x]
    \ea\]
    \item[$M'.\ell$] By an equational reasoning similar to the case of
    $\Record{\ell_i=M_i}_i$.
    \item[$M'\Upcast A$] By an equational reasoning similar to the case of
    $\Record{\ell_i=M_i}_i$.
  \end{description}
\end{proof}

\STLCRecSubSTLCRecPreTYPE*

\begin{proof}~
  By induction on typing derivations.
  \begin{description}
    \item[\tylab{Var}] Our goal follows from $\transl{x} = x$.
    \item[\tylab{Lam}] Our goal follows from IH and \tylab{Lam}.
    \item[\tylab{App}] Our goal follows from IH and \tylab{App}.
    \item[\tylab{Record}] Our goal follows from IH, \tylab{Record} and
    \tylab{PreLam}.
    \item[\tylab{Project}] Supposing $M = M'.\ell_j$ and
    $\typ{\Delta;\Gamma}{M':\Record{\ell_i:A_i}_i}$, by definition of
    translation we have $\transl{M'.\ell_j} = (\transl{M'}(P_i)_i).\ell_j$ where
    $P_j = \Present$. IH on $M'$ implies
    $\typ{\transl\Delta;\transl\Gamma}{\transl{M'} : (\forall
    \theta_i)_i.\Record{\ell_i^{\theta_i} : \transl{A_i}}_i}$. Our goal follows
    from \tylab{PreApp} and \tylab{Project}.
    \item[\tylab{Upcast}] The only subtyping relation in \STLCRecSub is for
    record types. Given $\typ{\Delta;\Gamma}{M^{\Record{R}} \Upcast
    \Variant{R'} : \Variant{R'}}$, by $\typ{\Delta;\Gamma}{M:\Record{R}}$ and
    IH we have
    $\typ{\transl{\Delta};\transl{\Gamma}}{\transl{M}:\transl{\Record{R}}}$.
    Then, by definition of translation and \tylab{RowApp} we have
    $\typ{\transl{\Delta};\transl{\Gamma}}{\transl{M^{\Record{R}} \Upcast
    \Record{R'}}:\transl{\Record{R'}}}$.
  \end{description}
\end{proof}

\STLCRecSubSTLCRecPre*

\begin{proof}~\\
  \noindent\textsc{Simulation}: First, we prove the base case that the whole
  term $M$ is reduced, i.e. $M \breducesto N$ implies
  $\transl{M} \treducestocl^\ast\breducesto \transl{N}$, and $M \ureducesto N$
  implies $\transl{M} \vreducestocl^\ast \transl{N}$. The proof proceeds by case
  analysis on the reduction relation:
  \begin{description}
    \item[\betalab{Lam}] We have $(\lambda x^A . M_1)\, M_2 \breducesto M_1[M_2
    / x]$. Then, \refa{$\transl{(\lambda x^A . M_1)\, M_2} = (\lambda x^A .
    \transl{M_1})\ \transl{M_2} \breducesto \transl{M_1}[\transl{M_2} / x] =
    \transl{M_1[M_2 / x]}$}, where the last equation follows from
    \Cref{lemma:recsubrecpre-subst}.
    \item[\betalab{Project}]
    We have $\Record{(\ell_i=M_i)_i}.\ell_j \breducesto M_j$. By definition of
    translation, we have $\transl{\Record{(\ell_i=M_i)_i}.\ell_j} =
    (\transl{\Record{\ell_i=M_i}_i} (P_i)_i).\ell_j = (((\Lambda \theta_i)_i .
    \Record{\ell_i^{\theta_i} = \transl{M_i}}_i) (P_i)_i).\ell_j $, where
    $P_j=\Present$ and $P_i=\Absent (i\neq j)$. Applying \betalab{PreLam}, we
    have \refb{$\transl{\Record{(\ell_i=M_i)_i}.\ell_j} \treducestocl^\ast
    (\Record{\ell_i^{P_i} = \transl{M_i}}_i).\ell_j \breducesto \transl{M_j}$}.
    \item[\upcastlab{Upcast}] We have $\Record{(\ell_i=M_{\ell_i})_i}^{\Record{R}}
    \Upcast \Record{R'} \ureducesto \Record{\ell_j' = M_{\ell_j'}}_j$, where $R
    = (\ell_i : A_{\ell_i})_i$ and $R' = (\ell_j' : A_{\ell_j'})_j$. By
    definition, \refc{$\transl{\Record{(\ell_i=M_{\ell_i})_i}^{\Record{R}} \Upcast \Record{R'}} =
    (\Lambda \theta_j')_j.\transl{\Record{(\ell_i=M_{\ell_i})_i}^{\Record{R}}}\,(\hk{P_i})_i =
    (\Lambda \theta_j')_j. ((\Lambda \theta_i)_i.\Record{\ell_i = \transl{M_{\ell_i}}}_i^{
          \Record{\ell_i^{\theta_i} : A_{\ell_i}}_i}) \,(\hk{P_i})_i \vreducestocl^\ast
    (\Lambda \theta_j')_j. \Record{\ell_i = \transl{M_{\ell_i}}}_i^{
          \Record{\ell_i^{P_i} : A_{\ell_i}}_i}$},
    where $P_i = \Absent$ when $\ell_i \notin (\ell_j')_j$,
    and $P_i = \theta_j'$ when $\ell_i = \ell_j'$.
    By the fact that we ignore absent labels when comparing records in \STLCRecPre,
    we have \refd{$(\Lambda \theta'_j)_j.
    \Record{\ell_i = \transl{M_{\ell_i}}}_i^{\Record{\ell_i^{P_i} : A_{\ell_i}}_i}
    = (\Lambda \theta_j')_j.\Record{\ell_j' = \transl{M_{\ell_j'}}}^{
          \Record{\ell_j'^{\theta_j'} : A_{\ell_j'}}_j}
    = \transl{\Record{\ell_j' = M_{\ell_j'}}_j}$}.
  \end{description}

  Then, we prove the full theorem by induction on $M$. We only need to prove the
  case where reduction happens in sub-terms of $M$.
  \begin{description}
    \item[$x$] No reduction.
    \item[$\lambda x^A . M'$] The reduction can only happen in $M'$. Supposing
    $\lambda x^A . M' \breducestocl \lambda x^A . N'$, by IH on $M'$, we have
    $\transl{M'} \treducestocl^\ast\breducestocl \transl{N'}$, which then gives
    $\transl{\lambda x^A.M'} = \lambda x^A.\transl{M'}
    \treducesto^\ast\breducestocl \lambda x^A.\transl{N'} = \transl{\lambda x^A
    . N'}$.
    The same applies to the second part of the theorem.
    \item[$M_1\ M_2$] Similar to the $\lambda x^A . M'$ case as reduction can only
    happen either in $M_1$ or $M_2$.
    \item[$\Record{\ell_i=M_i}_i$] Similar to the $\lambda x^A
    . M'$ case as reduction can only happen in one of $(M_i)_i$.
    \item[$M'.\ell$] Similar to the $\lambda x^A . M'$ case as reduction can only
    happen in $M'$.
    \item[$M' \Upcast A$] Similar to the $\lambda x^A . M'$ case as reduction can only
    happen in $M'$.
  \end{description}

  \noindent\textsc{Reflection}: We proceed by induction on $M$.
  \begin{description}
    \item[$x$] No reduction.
    \item[$\lambda x^A . M'$] We have $\transl{M} = \lambda
    x^{\transl{A}}.\transl{M'}$. The reduction can only happen in $\transl{M'}$.
    Suppose $\transl{M} \treducestocl^\ast\breducestocl \lambda x^{\transl{A}}.\transl{N'}$.
    By IH on $\transl{M'}$, $M' \treducestocl^\ast\breducestocl N'$. Our goal follows from
    $\lambda x^A.M' \treducestocl^\ast\breducestocl \lambda x^A.N'$.
    Suppose $\transl{M} \vreducestocl \lambda x^{\transl{A}}.N_1$. By IH on
    $\transl{M'}$, there exists $N'$ such that $N_1 \vreducestocl^\ast
    \transl{N'}$ and $M' \uunewreducestocl N'$. Our goal follows from setting
    $N$ to $\lambda x^A . N'$.

    \item[$M_1\ M_2$] We have $\transl{M} = \transl{M_1}\ \transl{M_2}$.
    Proceed by case analysis where the reduction happens.
    \begin{itemize}[topsep=0em]
      \item Reduction happens in either $\transl{M_1}$ or $\transl{M_2}$.
      Similar to the $\lambda x^A . M'$ case.
      \item The application is reduced by \betalab{Lam}. By
      definition of translation, we have $M_1 = \lambda x^A . M'$. By \refa{},
      we have $\transl{M} \breducesto \transl{M'[M_2 / x]}$. Our goal follows from
      setting setting $N$ to $M'[M_2 / x]$.
    \end{itemize}

    \item[$\Record{\ell_i=M_i}_i$] We have $\transl{M} = (\Lambda
    \theta_i)_i.\Record{\ell_i = \transl{M_i}}_i^{\Record{\ell_i^{\theta_i} :
    \transl{A_i}}_i}$. Similar to the $\lambda x^A . M'$ case as the reduction
    can only happen in one of $\transl{M_i}$.

    \item[$M'.\ell_j$] We have $\transl{M} = (\transl{M'}\,({P_i})_i).\ell_j$,
    where $P_i = \Absent$ for $i\neq j$ and $P_j = \Present$.
    Proceed by case analysis where the $\beta$-reduction happens.
    \begin{itemize}[topsep=0em]
      \item Reduction happens in $\transl{M'}$. Similar to the $\lambda x^A .
      M'$ case.
      \item The projection is reduced by \betalab{Project^\star}. Supposing $\transl{M}\treducestocl^\ast\breducestocl \transl{N}$,
      because $\transl{N}$ is in the codomain of the translation,
      the $\treducestocl^\ast$ can only be the type applications of $(P_i)_i$
      and $M' = \Record{\ell_i=M_i}_i$.
      By \refb{}, we have $\transl{M'.\ell_j} \treducestocl^\ast
      \breducesto \transl{M_j}$. Our goal follows from
      $M'.\ell_j \breducesto M_j$.
    \end{itemize}

    \item[${M'^{\Record{\ell_i : A_i}_i} \Upcast \Record{\ell'_j : A'_j}_j}$] We have
    $\transl{M} = (\Lambda \theta_j)_j.\transl{M'}\,(\hk{P_i})_i$, where $P_i = \Absent$
    for $\ell_i \notin (\ell_j')_j$, and $P_i = \theta_j$ for $\ell_i = \ell_j'$.
    Proceed by case analysis where the reduction happens.
    \begin{itemize}[topsep=0em]
      \item Reduction happens in $\transl{M'}$. Similar to the $\lambda x^A . M'$ case.
      \item The presence type application $\transl{M'}\,\hk{P_1}$ is reduced by
      \vlab{PreLam}. Because the top-level constructor of $\transl{M'}$ should be
      type abstraction, there are two cases. Proceed by case analysis on $M'$.
      \begin{itemize}[topsep=0em]
        \item $M' = \Record{\ell_i=M_{\ell_i}}_i$. We can reduce all presence
        type application of $P_i$. By \refc{} and \refd{}, we have $\transl{M}
        \vreducestocl^\ast \transl{\Record{\ell_j' = M_{\ell_j'}}_j}$. Our goal
        follows from setting $N$ to ${\Record{\ell_j' = M_{\ell_j'}}_j}$ and
        $M \ureducesto N$.
        \item $M' = M_1^{\Record{\ell_k'' : B_k}_k} \Upcast \Record{\ell_i : A_i}_i$.
        We can reduce all presence type application of $P_i$. We have $\transl{M} =
        (\Lambda \theta_j)_j.\transl{M_1 \Upcast \Record{\ell_i : A_i}_i}\,(\hk{P_i})_i =$\\
        $(\Lambda \theta_j)_j.((\Lambda \theta_i)_i.\transl{M_1}\,(\hk{P_k'})_k)\,(\hk{P_i})_i
        \vreducestocl^\ast
        (\Lambda \theta_j)_j.\transl{M_1}\,(\hk{Q_k})_k$, where $P_k' = \Absent$ for $\ell_k''
        \notin (\ell_i)_i$, and $P_k' = \theta_i$ for $\ell_k'' = \ell_i$.
        Thus, we have $Q_k = \Absent$ for $\ell_k''
        \notin (\ell_j')_j$, and $Q_k = \theta_j$ for $\ell_k'' = \ell_j'$, which implies
        $\transl{M_1 \Upcast \Record{\ell_j':A_j'}_j} =
        (\Lambda \theta_j)_j.\transl{M_1}\,(\hk{Q_k'})_k$. Our goal follows from setting
        $N$ to ${M_1 \Upcast \Record{\ell_j':A_j'}_j}$ and $M \unewreducesto N$.
      \end{itemize}
    \end{itemize}
  \end{description}
\end{proof}

\section{Encodings, Proofs and Definitions in Section~\ref{SEC:FULL}}

In this section, we provide the encodings, proofs, and definitions
missing from \Cref{sec:full-subtyping}.

\subsection{Local Term-Involved Encoding of \STLCVarRecSubFull in \STLCVarRec}
\label{app:coercion-functions}

The local term-involved encoding of \STLCVarRecSubFull in \STLCVarRec
\citep{tapl,BREAZUTANNEN1991172} is formalised as follows.

\begin{equations}
  \transl{-} &:& \Deriv \to \Terms\\
  \transl{M^A \Upcast B} &=& \transl{A \subtype B}\ \transl{M} \\[2ex]
  \transl{-} &:& \mathrm{Subtyping} \to \Terms\\
  \transl{\alpha \subtype \alpha}
  &=& \lambda x^\alpha . x\\
  \transl{A\to B \subtype A'\to B'}
  &=& \lambda f^{A\to B}. \lambda x^{A'}. \transl{B\subtype B'}\ (f\ (\transl{A'\subtype A}\ x))\\
  \Bigl\llbracket
    \inferrule
      {\dom{R}\subseteq \dom{R'} \quad
       [A_i \subtype A_i']_{(\ell_i:A_i) \in R, (\ell_i:A_i') \in R'}}
      {\Variant{R} \subtype \Variant{R'}}
  \Bigr\rrbracket
  &=& \lambda x^{\Variant{R}}. \Case~x~\{\ell_i\ y\mapsto
    (\ell_i\ (\transl{A_i\le A_i'}\ y))^{\Variant{R'}}\}\\
  \Bigl\llbracket
    \inferrule
      {\dom{R'}\subseteq \dom{R} \quad
       [A_i \subtype A_i']_{(\ell_i:A_i) \in R, (\ell_i:A_i') \in R'}}
      {\Record{R} \subtype \Record{R'}}
  \Bigr\rrbracket
  &=& \lambda x^{\Record{R}} . \Record{\ell_i = \transl{A_i \le A_i'}\ x.\ell_i}
\end{equations}

\subsection{Dynamic Semantics of \STLCVarRecSubFull}
\label{app:erasure-semantics}

In addition to the erasure semantics, the other style of dynamic
semantics of \STLCVarRecSubFull is given by extending the operational
semantics rules with the following four upcast rules.
\begin{reductions}
  \upcastlab{Var} & M \Upcast \alpha &\ureducesto& M \\
  \upcastlab{Lam} & (\lambda x^A . M) \Upcast A' \to B' &\ureducesto&
    \lambda y^{A'} . (M[(y \Upcast A) / x] \Upcast B') \\
  \upcastlab{Variant} & (\ell_j\,M)^A \Upcast \Variant{\ell_i : A_i}_i &\ureducesto&
    (\ell_j\,(M\Upcast A_j))^{\Variant{\ell_i : A_i}_i} \\
  \upcastlab{Record} & \Record{\ell_i=M_{\ell_i}}_i \Upcast \Record{\ell_j':A_j}_j &\ureducesto&
    \Record{\ell'_j = M_{\ell'_j} \Upcast A_j}_j
\end{reductions}

We show that there is a correspondence between these two styles of
dynamic semantics of \STLCVarRecSubFull.
We first give a preorder $M \equivsub N$ on terms of the untyped
\STLCVarRec which allows records in $M$ to contain more elements than
those in $N$, because the erasure semantics does not truly perform
upcasts. The full definition is shown in \Cref{fig:equivsub}.

\begin{figure}[h]
  \begin{mathpar}
    \inferrule
    { \{\ell_j'\}_j \subseteq \{\ell_i\}_i \\
      [M_{i} \equivsub N_{j}]_{\ell_i = \ell_j'}}
    {\Record{\ell_i = M_{i}}_i \equivsub \Record{\ell_j' = N_{j}}_j}

    \inferrule
    {}
    {x \equivsub x}

    \inferrule
    {M \equivsub M'}
    {\lambda x.M \equivsub \lambda x.M'}

    \inferrule
    {M \equivsub M'\\ N \equivsub N'}
    {M\ N \equivsub M'\ N'}

    \inferrule
    {M \equivsub M'}
    {\ell\,M \equivsub \ell\,M'}

    \inferrule
    {M\equivsub M'\\ [N_i\equivsub N_i']_i}
    {\Case\,M\,\{\ell_i\,x_i\mapsto N_i\}_i \equivsub \Case\,M'\,\{\ell_i\,x_i\mapsto N_i'\}_i}

    \inferrule
    {M \equivsub M'}
    {M.\ell \equivsub M'.\ell}
  \end{mathpar}
  \caption{The preorder $\equivsub$ of untyped \STLCVarRec.}
  \label{fig:equivsub}
  \end{figure}

The correspondence is given by the following theorem.
\begin{restatable}[Operational Correspondence]{theorem}{STLCVarRecSubFullOC}
  \label{thm:oc-erasure}
  Given a well-typed term $M$ in \STLCVarRecSubFull and a term $M'$ in untyped
  \STLCVarRec with $M' \equivsub \erase{M}$, we have:
  \begin{itemize}[align=left, topsep=0em]
    \item[\textsc{\emph{Simulation}}] If $M \breducestocl N$, then
    there exists $N'$ such that $N' \equivsub \erase{N}$ and
    $M' \breducestocl N'$;
    if $M \ureducestocl N$, then $M' \equivsub \erase{N}$.
    \item[\textsc{\emph{Reflection}}] If $M' \breducestocl N'$, then
    there exists $N$ such that $N' \equivsub \erase{N}$ and
    $M \ureducestocl^\ast\breducestocl N$.
  \end{itemize}
\end{restatable}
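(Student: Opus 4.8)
The plan is to prove both directions by matching the erasure semantics (in which upcasts vanish) against the concrete upcast semantics, modulo the field-forgetting slack recorded by $\equivsub$. First I would establish a handful of routine supporting lemmas by structural induction: that $\equivsub$ is reflexive and transitive (a preorder); that it is a congruence, i.e.\ closed under every term constructor of untyped \STLCVarRec; that it commutes with substitution, so $M_1 \equivsub M_2$ and $N_1 \equivsub N_2$ imply $M_1[N_1/x] \equivsub M_2[N_2/x]$; and that $\erase{-}$ commutes with substitution, $\erase{M[N/x]} = \erase{M}[\erase{N}/x]$.

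The crucial observation for the upcast half of \textsc{Simulation} is that $M \ureducestocl N$ implies $\erase{M} \equivsub \erase{N}$. I would prove this by induction on the compatible closure: the base cases are the four upcast rules, where \upcastlab{Var}, \upcastlab{Lam}, and \upcastlab{Variant} leave the erasure literally unchanged (using $\erase{y \Upcast A} = \erase{y}$ for the coercion inserted by \upcastlab{Lam}), while \upcastlab{Record} turns $\erase{\Record{\ell_i = M_{\ell_i}}_i}$ into the record obtained by dropping the fields outside $\{\ell'_j\}_j$, which is exactly an instance of the record rule of $\equivsub$; the congruence cases then follow from the congruence lemma, noting that an upcast context merely collapses under erasure. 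Given this, the upcast clause of \textsc{Simulation} is immediate: from $M' \equivsub \erase{M}$ and $\erase{M} \equivsub \erase{N}$, transitivity yields $M' \equivsub \erase{N}$ with no reduction of $M'$ needed. The $\beta$-reduction clause of \textsc{Simulation} I would handle by induction on $M$: a top-level $\beta$-step forces the shape of $M'$ via $M' \equivsub \erase{M}$ (e.g.\ if $\erase{M}$ is an application, so is $M'$), and the conclusion follows from the two substitution lemmas; the \betalab{Project} case additionally uses that $\equivsub$ on records only adds fields, so the projected label survives with an $\equivsub$-related payload.

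The main obstacle is \textsc{Reflection} in the cases where $M'$ fires a $\beta$-redex but the matching redex in $M$ is guarded by a stack of upcasts that erasure has silently discarded. Here I would prove an exposure lemma: if $\erase{M_0}$ is a $\lambda$-abstraction (respectively a record, an injection) then $M_0$ is such a term wrapped in finitely many upcasts, and repeatedly applying \upcastlab{Lam} (respectively \upcastlab{Record}, \upcastlab{Variant}) gives $M_0 \ureducestocl^\ast M_0^\dagger$ with $M_0^\dagger$ a bare abstraction, record, or injection, preserving the erasure in the function and injection cases, and narrowing fields while wrapping payloads in further upcasts in the record case. Proceeding by induction on $M$, in the application subcase I expose the head lambda of $M$, fire the exposed $\beta$-redex, and match $N' = M_0'[M_2'/x]$ against $\erase{N}$ via the substitution lemma; in the projection and case subcases I expose the record or injection, fire the narrowing upcasts (where well-typedness of $M$ guarantees that the projected label or matched tag survives the narrowing), then fire the $\beta$-redex and conclude by the substitution and transitivity lemmas. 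This is precisely why \textsc{Reflection} concludes with $M \ureducestocl^\ast \breducestocl N$ rather than a single step: the exposure lemma supplies the $\ureducestocl^\ast$ prefix and the exposed redex supplies the final $\breducestocl$ step.
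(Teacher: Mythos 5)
Your proposal is correct and takes essentially the same route as the paper's proof: both rest on the same two pillars — a lemma that upcast reduction shrinks erasures with respect to $\equivsub$ (giving the upcast half of \textsc{Simulation} by transitivity, with no step on the $M'$ side) and a substitution-compatibility lemma — and both handle \textsc{Reflection} by exposing the head constructor of $M$ from under its stack of upcasts via $\ureducestocl^\ast$ before firing the matching $\beta$-redex. Your explicit ``exposure lemma'' simply names a step the paper asserts inline (``there exists $\lambda x^A.M_1$ such that $M \ureducestocl^\ast \lambda x^A.M_1$''), and your splitting of the paper's combined lemma ($M'\equivsub\erase{M}$ and $N'\equivsub\erase{N}$ imply $M'[N'/x]\equivsub\erase{M[N/x]}$) into an erasure-substitution law plus $\equivsub$-congruence is a presentational difference only.
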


To prove it, we need two lemmas.

\begin{lemma}[Erasure commutes with substitution]
  \label{lemma:erasure-subst}
  If $\typ{\Delta;\Gamma,x:A}{M:B}$ and $\typ{\Delta;\Gamma}{N:A}$, then for
  $M'\equivsub \erase{M}$ and $N'\equivsub \erase{N}$, we have $M'[N'/x]
  \equivsub \erase{M[N/x]}$.
\end{lemma}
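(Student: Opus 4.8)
The plan is to proceed by structural induction on the typed term $M$, using inversion on the definition of $\equivsub$ (\Cref{fig:equivsub}) to pin down the shape of $M'$ in each case, together with the fact that $\erase{-}$ is defined homomorphically and hence commutes with substitution on every term former except upcasts, which it simply discards. The observation driving the whole argument is that $\equivsub$ is essentially a structural congruence: for each constructor of untyped \STLCVarRec there is a corresponding $\equivsub$-rule, the sole non-structural freedom being that a record on the left may carry strictly more fields than the one on the right. Consequently, once inversion exhibits the top-level former of $M'$ and relates its immediate subterms, the induction hypotheses on the substituted subterms can be reassembled by the very same $\equivsub$-rule.

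First I would dispatch the two cases that actually carry the content. If $M = x$, then $\erase{M} = x$, so inversion on $M' \equivsub x$ forces $M' = x$; then $M'[N'/x] = N'$ and $\erase{M[N/x]} = \erase{N}$, and the required $N' \equivsub \erase{N}$ is exactly a hypothesis (any other variable $M = y \neq x$ is immediate by reflexivity). If $M = M_0 \Upcast A$, then $\erase{M} = \erase{M_0}$, so $M' \equivsub \erase{M_0}$; since $(M_0 \Upcast A)[N/x] = M_0[N/x] \Upcast A$ erases to $\erase{M_0[N/x]}$, the goal $M'[N'/x] \equivsub \erase{M_0[N/x]}$ is precisely the induction hypothesis applied to $M_0$. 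These are the only places where the interaction between erasure and substitution is visible, and both are resolved directly.

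The remaining term formers ($\lambda$-abstraction, application, injection, $\Case$, and projection) are handled uniformly: $\erase{M}$ has a fixed top-level shape, inversion on $\equivsub$ produces an $M'$ with the matching former and $\equivsub$-smaller immediate subterms, the induction hypotheses relate the substituted subterms, and the congruence rule concludes (using the standard $\alpha$-convention that bound variables avoid $x$ and $\ftv{N'}$ in the $\lambda$ and $\Case$ cases). The one case requiring genuine care is record construction $M = \Record{\ell_i = M_i}_i$, where $\erase{M} = \Record{\ell_i = \erase{M_i}}_i$ and inversion yields $M' = \Record{\ell_k = M_k'}_k$ with $\{\ell_i\}_i \subseteq \{\ell_k\}_k$ and $M_k' \equivsub \erase{M_i}$ whenever $\ell_k = \ell_i$. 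After substitution the left record still has the larger label set, so the width side-condition of the record $\equivsub$-rule is preserved, while the induction hypothesis applied to each matching pair supplies $M_k'[N'/x] \equivsub \erase{M_i[N/x]}$, which is all the rule demands. This field-matching-with-width-preservation step is the main (and essentially only) obstacle; everything else is routine bookkeeping.
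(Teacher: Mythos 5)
Your proof is correct and takes essentially the same approach as the paper, whose entire proof reads ``By straightforward induction on $M$'': your induction on $M$ with inversion on $\equivsub$, the upcast case discharged directly by the induction hypothesis (since $M' \equivsub \erase{M_0 \Upcast A} = \erase{M_0}$), and the width-preservation check in the record case are exactly the details that one-liner elides. One trivial slip: the freshness side condition in the binder cases should refer to the free \emph{term} variables of $N'$ rather than $\ftv{N'}$ (which the paper uses for free type variables), but this does not affect the argument.
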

\begin{proof}
  By straightforward induction on $M$.
\end{proof}

\begin{lemma}[Upcasts shrink terms]
  \label{lemma:upcast-shrink}
  For any $M\Upcast A \ureducestocl N$ in \STLCVarRecSubFull, we have $\erase{M}
  \equivsub \erase{N}$.
\end{lemma}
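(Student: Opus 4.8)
The plan is to prove a slight generalisation of the statement and then read off the lemma as a special case. Observe first that erasure discards upcasts, so $\erase{M\Upcast A}=\erase{M}$ and the claim $\erase{M}\equivsub\erase{N}$ is really the assertion that a single upcast-reduction step makes the \emph{erased} term no larger in the $\equivsub$-order. Since $\ureducestocl$ is a \emph{compatible closure}, however, the redex contracted in $M\Upcast A$ need not be the outermost upcast: it may sit arbitrarily deep inside $M$. Handling such congruence steps forces us to know the result for terms that are not themselves upcasts, so I would strengthen the statement to: for every well-typed \STLCVarRecSubFull term $P$ with $P\ureducestocl Q$, we have $\erase{P}\equivsub\erase{Q}$. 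Before starting I would also record that $\equivsub$ is reflexive, by a one-line induction on terms using the rules of \Cref{fig:equivsub} (for records the inclusion becomes an equality and each field is related reflexively).

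The proof of the generalised statement proceeds by induction on the derivation of the compatible-closure step, i.e.\ on the position of the contracted redex. The base cases are the four reduction rules $\upcastlab{Var}$, $\upcastlab{Lam}$, $\upcastlab{Variant}$, $\upcastlab{Record}$, each of which contracts an outermost upcast. The congruence cases treat a reduction inside a subterm of each term constructor; in every such case I would apply the induction hypothesis to the reduced subterm and close with the matching congruence rule of $\equivsub$. Concretely, for $P=\lambda x^B.P_0$, $P=P_1\,P_2$, $P=(\ell\,P_0)^B$, $P=\Case~P_0~\{\ell_i~x_i\mapsto N_i\}_i$, $P=\Record{\ell_i=P_i}_i$, and $P=P_0.\ell$, erasure commutes homomorphically with the constructor, so the induction hypothesis plus the corresponding rule of \Cref{fig:equivsub} suffices (for records the field set is unchanged, so the inclusion is an equality and the unreduced fields are handled reflexively). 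The case $P=P_0\Upcast B$ with $P_0\ureducestocl P_0'$ is exactly the one that the unstrengthened lemma could not discharge: here $\erase{P}=\erase{P_0}$ and $\erase{Q}=\erase{P_0'}$, and the induction hypothesis on $P_0$ gives the result directly.

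Only two base cases carry real content. The case $\upcastlab{Record}$ is the heart of the lemma: $\Record{\ell_i=M_{\ell_i}}_i\Upcast\Record{\ell'_j:A_j}_j$ contracts to a record over the smaller label set $\{\ell'_j\}_j\subseteq\{\ell_i\}_i$, and since erasure deletes the inner upcasts, $\erase{M}$ is literally the full record while $\erase{N}$ is its restriction, which is precisely the shape licensed by the record-subsumption rule of $\equivsub$ (the remaining fields being related by reflexivity). The case $\upcastlab{Lam}$ is the delicate one: $\erase{N}=\lambda y.\erase{M_0[(y\Upcast A)/x]}$, and after an $\alpha$-renaming of $\erase{M}=\lambda x.\erase{M_0}$ it reduces, via the $\lambda$-congruence rule, to showing $\erase{M_0}[y/x]\equivsub\erase{M_0[(y\Upcast A)/x]}$. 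This follows from \Cref{lemma:erasure-subst} taking the substituted term to be $y\Upcast A$, whose erasure is just $y$, and using reflexivity for both premises $\erase{M_0}\equivsub\erase{M_0}$ and $y\equivsub y$. The cases $\upcastlab{Var}$ (where $N=M$) and $\upcastlab{Variant}$ (where the tag and payload are preserved and only an erasable inner upcast is introduced) are immediate from reflexivity. I expect the main obstacle to be purely bureaucratic rather than conceptual: getting the $\upcastlab{Lam}$ case to line up with the direction of \Cref{lemma:erasure-subst} under the $\alpha$-renaming, and recognising at the outset that the compatible closure compels the generalisation to arbitrary terms so that the congruence-through-an-upcast case goes through.
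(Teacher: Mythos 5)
Your proof is correct and is essentially the paper's argument spelled out: the paper dismisses this lemma with ``By definition of $\erase{-}$ and $\ureducestocl$'', and your induction on the compatible-closure step, with the four upcast rules as base cases ($\upcastlab{Record}$ via the record-subsumption rule of $\equivsub$, $\upcastlab{Lam}$ via \Cref{lemma:erasure-subst} with $\erase{y\Upcast A}=y$ and reflexivity, $\upcastlab{Var}$ and $\upcastlab{Variant}$ by reflexivity) is exactly the intended unpacking. Your strengthening to arbitrary terms $P\ureducestocl Q$ is also the right reading of the statement, since the paper itself applies the lemma to (multi-step) reductions of terms that are not syntactically upcasts in the proof of \Cref{thm:oc-erasure}.
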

\begin{proof}
  By definition of $\erase{-}$ and $\ureducestocl$.
\end{proof}

Then, we give the proof of \Cref{thm:oc-erasure}.

\begin{proof}~\\
  \noindent\textsc{Simulation}:
  We proceed by induction on $M$.
  \begin{description}
    \item[$x$] No reduction.
    \item[$\lambda x^A . M_1$]
    Supposing $M' = \lambda x . M_1'$, by $M'\equivsub\erase{M}$ we have $M_1'
    \equivsub \erase{M_1}$. The reduction must happen in $M_1$. Our goal follows
    from the IH on $M_1$.
    \item[$M_1\ M_2$]
    Supposing $M' = M_1'\ M_2'$, by $M'\equivsub\erase{M}$ we have $M_1'
    \equivsub \erase{M_1}$ and $M_2' \equivsub \erase{M_2}$. We proceed by case
    analysis where the reduction happens.
    \begin{itemize}[topsep=0em]
      \item The reduction happens in either $M_1$ or $M_2$. Our goal follows
      from the IH.
      \item The reduction reduces the top-level function application. Supposing
      $M_1 = \lambda x^A . M_3$ and $M_1'=\lambda x.M_3'$ with
      $M_3'\equivsub\erase{M_3}$, we have $(\lambda x^A . M_3)\ M_2
      \breducestocl M_3[M_2/x]$ and $(\lambda x^A . M_3')\ M_2' \breducestocl
      M_3'[M_2'/x]$. Our goal follows from \Cref{lemma:erasure-subst}.
    \end{itemize}
    \item[$N.\ell_k$]
    Supposing $M' = N'.\ell_k$, by $M'\equivsub\erase{M}$ we have $N'\equivsub \erase{N'}$.
    We proceed by case analysis where the reduction happens.
    \begin{itemize}[topsep=0em]
      \item The reduction happens in $N$. Our goal follows from the IH on $N$.
      \item The reduction reduces the top-level projection. Supposing $N =
      \Record{\ell_i=M_i}_i$ and $N' = \Record{\ell_j'=M_j'}_j$ with
      $\{\ell_j'\}_j \subseteq \{\ell_i\}_i$ and $(M_j'\equivsub \erase{M_i})_{\ell_i =
      \ell_j'}$, we have $N.\ell_k \breducestocl M_k$ and $N'.\ell_k \breducestocl M_n'$
      where $\ell_k = \ell_n'$.
      Our goal follows from $M_n' \equivsub \erase{M_k}$.
    \end{itemize}
    \item[$\Record{\ell_i=M_i}_i$] The reduction must happen in one of the
    $M_i$. Our goal follows from the IH.
    \item[$M_1\Upcast A$] For the $\beta$-reduction, it must happen in $M_1$. Our goal
    follows from the IH.
    For the upcast reduction, by $M' \equivsub \erase{M}$ we have $M' \equivsub \erase{M_1}$.
    By \Cref{lemma:upcast-shrink}, we have $M' \equivsub \erase{M_1} \equivsub \erase{N}$.
  \end{description}

  \noindent\textsc{Reflection}: We proceed by induction on $M'$.
  \begin{description}
    \item[$x$] No reduction.
    \item[$\lambda x.M_1'$]
    By $M' \equivsub \erase{M}$, we know that there exists $\lambda x^A . M_1$
    such that $M\ureducestocl^\ast \lambda x^A.M_1$. By
    \Cref{lemma:upcast-shrink}, $\erase{M} \equivsub \erase{\lambda x^A.M_1}$.
    Then, by $M' \equivsub \erase{M}$ and transitivity, we have $M_1'
    \equivsub\erase{M_1}$. The $\beta$-reduction must happen in $M_1'$. Our goal
    follows from the IH on $M_1'$.
    \item[$M_1'\ M_2'$]
    By $M' \equivsub \erase{M}$, we know that there exists $M_1\ M_2$ such that
    $M\ureducestocl^\ast M_1\ M_2$. By \Cref{lemma:upcast-shrink} and $M'
    \equivsub \erase{M}$, we have $M_1' \equivsub \erase{M_1}$ and $M_2'
    \equivsub \erase{M_2}$. We proceed by case analysis where the reduction
    happens.
    \begin{itemize}[topsep=0em]
      \item The reduction happens in either $M_1'$ or $M_2'$. Our goal follows
      from the IH.
      \item The reduction reduces the top-level function application. Supposing
      $M_1'=\lambda x.M_3'$, by $M_1' \equivsub \erase{M_1}$, we know that there
      exists $\lambda x^A . M_3$ such that $M_1 \ureducestocl^\ast \lambda x^A .
      M_3$. Thus, $M_1\ M_2 \ureducestocl^\ast\breducestocl M_3[M_2/x]$ and
      $M_1'\ M_2' \breducestocl M_3'[M_2'/x]$. By \Cref{lemma:upcast-shrink}, we
      have $M_1' \equivsub \erase{M_1} \equivsub \erase{\lambda x^A . M_3}$,
      which implies $M_3' \equivsub \erase{M_3}$. Our goal follows from
      \Cref{lemma:erasure-subst}.
    \end{itemize}
    \item[$N'.\ell_k$]
    By $M' \equivsub \erase{M}$, we know that there exists $N.\ell_k$ such that
    $M\ureducestocl^\ast N.\ell_k$. By \Cref{lemma:upcast-shrink} and
    $M'\equivsub \erase{M}$, we have $N' \equivsub \erase{N}$. We proceed by
    case analysis where the reduction happens.
    \begin{itemize}[topsep=0em]
      \item The reduction happens in $N'$. Our goal follows from the IH on $N$.
      \item The reduction reduces the top-level projection. Supposing $N' =
      \Record{\ell_j'=M_j'}_j$, by $N' \equivsub \erase{N}$, we know that there
      exists $\Record{\ell_i=M_i}_i$ such that $N\ureducestocl^\ast
      \Record{\ell_i=M_i}_i$. Thus, $N.\ell_k \ureducestocl^\ast\breducestocl
      M_k$ and $N'.\ell_k \breducestocl M_n'$ where $\ell_n' = \ell_k$. By
      \Cref{lemma:upcast-shrink}, we have $\erase{N} \equivsub
      \erase{\Record{\ell_i=M_i}_i}$. We can further conclude that $M_n'
      \equivsub \erase{M_k}$ from $N' \equivsub \erase{N}$.
    \end{itemize}
  \end{description}
\end{proof}

\subsection{Proof of the Encoding of \STLCRecSubCo in \STLCRecPre}
\label{app:proof-recsubco-recpre}

\begin{lemma}[Upcast Translation]
  \label{lemma:upcast-translation}
  If $A\subtype B$, then $\forall\ol\theta . \transl{A,\ol P} = \transl{B}$ for
  $(\ol\theta,\ol P) = \transpre{\theta}{A\subtype B}$.
\end{lemma}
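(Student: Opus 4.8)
The plan is to prove the identity by induction on the derivation of the strictly covariant subtyping judgement $A \subtype B$. For $\STLCRecSubCo$ there are exactly three rule shapes: $\alpha \subtype \alpha$, the arrow rule with equal domains ($A \to B \subtype A \to B'$ from $B \subtype B'$), and the record rule ($\Record{R} \subtype \Record{R'}$ with $\dom{R'} \subseteq \dom{R}$ and componentwise $A_i \subtype A_j'$ for $\ell_i = \ell_j'$); there is no variant case, since $\STLCRecSubCo$ has only records, functions, and base types. Before the induction I would establish two bookkeeping facts. The first is a length invariant: if $\transpre{\theta}{A\subtype B} = (\ol{\theta},\ol{P})$ then $\ol{\theta}$ has the length of $\transpre{\theta}{B}$ and $\ol{P}$ the length of $\transpre{\theta}{A}$, i.e. the numbers of presence quantifiers introduced by $\transl{B}$ and $\transl{A}$ respectively; this is exactly what makes both $\transl{A,\ol{P}}$ and the re-abstraction over $\ol{\theta}$ well formed. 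The second is an instantiate-then-reabstract identity that follows directly from $\transl{A,\ol{P}} = A'[\ol{P}/\ol{\theta}']$ (where $\forall \ol{\theta}'.A' = \transl{A}$): for fresh $\ol{\theta}$ of the right length, $\forall \ol{\theta}.\transl{A,\ol{\theta}} = \transl{A}$ and $\transl{A,\ol{\theta}}[\ol{P}/\ol{\theta}] = \transl{A,\ol{P}}$. Both follow by straightforward structural induction.

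The base case is immediate: $\transpre{\theta}{\alpha\subtype\alpha} = (\cdot,\cdot)$ and $\transl{\alpha} = \alpha$ carries no quantifiers, so $\forall\cdot.\transl{\alpha,\cdot} = \alpha = \transl{\alpha}$. For the arrow case the key observation is that $\transpre{\theta}{A\to B} = \transpre{\theta}{B}$, so $\transl{A\to B}$ hoists quantifiers only from the codomain and leaves the (equal) domain $\transl{A}$ untouched; unfolding $\transl{(A\to B),\ol{P}}$ with the reabstraction identity gives $\transl{A}\to\transl{B,\ol{P}}$, the fresh codomain-quantifiers being absent from $\transl{A}$. The goal then becomes $\forall\ol{\theta}.(\transl{A}\to\transl{B,\ol{P}}) = \forall\ol{\theta}.(\transl{A}\to\transl{B',\ol{\theta}})$, where $(\ol{\theta},\ol{P}) = \transpre{\theta}{B\subtype B'}$ and, by construction, $\ol{\theta}$ coincides with the variables $\transpre{\theta}{B'}$ bound by $\transl{B'}$; since the prefixes and arrow shapes coincide this reduces to $\transl{B,\ol{P}} = \transl{B',\ol{\theta}}$, which is precisely the induction hypothesis $\forall\ol{\theta}.\transl{B,\ol{P}} = \transl{B'} = \forall\ol{\theta}.\transl{B',\ol{\theta}}$ with its common outer quantifier stripped.

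The record case is where I expect to spend most of the effort, and it is the main obstacle. Writing $R = (\ell_i:A_i)_i$ and $R' = (\ell_j':A_j')_j$, instantiating $\transl{\Record{R}}$ with the $\ol{P}$ produced by $\transpre{\theta}{\Record{R}\subtype\Record{R'}}$ sets every dropped label $\ell_i \notin (\ell_j')_j$ to $\Absent$ (with an all-$\Absent$ component sequence) and every kept label $\ell_i = \ell_j'$ to a fresh $\theta_j$ (with the recursively computed component sequence $\ol{P}_j'$), yielding $\Record{\ell_i^{P_i}:\transl{A_i,\ol{P}_i}}_i$. The first crucial step is to invoke the $\STLCRecPre$ convention that record equivalence ignores absent labels, which discards exactly the dropped labels and leaves $\Record{{\ell_j'}^{\theta_j}:\transl{A_i,\ol{P}_j'}}_{\ell_i=\ell_j'}$. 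The second is to apply the induction hypothesis to each component subtyping $A_i\subtype A_j'$: from $\transpre{\theta_j}{A_i\subtype A_j'} = (\ol{\theta}_j,\ol{P}_j')$ the IH gives $\forall\ol{\theta}_j.\transl{A_i,\ol{P}_j'} = \transl{A_j'} = \forall\ol{\theta}_j.\transl{A_j',\ol{\theta}_j}$, so stripping quantifiers yields $\transl{A_i,\ol{P}_j'} = \transl{A_j',\ol{\theta}_j}$, exactly the field appearing in $\transl{\Record{R'}} = \forall(\theta_j)_j(\ol{\theta}_j)_j.\Record{{\ell_j'}^{\theta_j}:\transl{A_j',\ol{\theta}_j}}_j$. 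Re-abstracting over the retained quantifiers $(\theta_j)_j(\ol{\theta}_j)_j$ then matches $\transl{\Record{R'}}$ exactly. The delicate points I will have to be careful about are: checking the length invariant so that every instantiation is well formed; aligning the fresh variable names generated independently by $\transpre{-}{-\subtype-}$ and by $\transl{-}$ (resolved up to $\alpha$-renaming by choosing them to coincide); and making the appeal to absent-label equivalence precise, since it is the one place where the reasoning is semantic rather than purely syntactic.
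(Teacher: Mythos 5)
Your proof is correct and takes essentially the same route as the paper, whose entire proof reads ``by a straightforward induction on the definition of $\transpre{\theta}{A\subtype B}$'' --- i.e.\ the same induction over the subtyping derivation that you carry out case by case. The bookkeeping you make explicit (the length invariant relating $\ol{\theta},\ol{P}$ to the quantifier prefixes of $\transl{B}$ and $\transl{A}$, the instantiate-then-reabstract identities, the appeal to absent-label row equivalence in the record case, and the $\alpha$-alignment of fresh presence variables) is exactly the detail the paper leaves implicit as ``straightforward.''
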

\begin{proof}
  By a straightforward induction on the definition of $\transpre{\theta}{A\subtype B}$.
\end{proof}

\STLCVarRecSubCoSTLCVarRecPreTYPE*

\begin{proof}~
  By induction on typing derivations.
  \begin{description}
    \item[\tylab{Var}] Our goal follows from $\transl{x} = x$.
    \item[\tylab{Lam}] By the IH on $\typ{\Delta;\Gamma,x:A}{M:B}$, we have
    $$\typ{\Delta;\transl{\Gamma},x:\transl{A}}{\transl{M}:\transl{B}}$$
    Let $\ol\theta = \transpre{\theta}{B}$.
    By \tylab{PreApp} and context weakening, we have
    $$\typ{\Delta,\ol\theta;\transl{\Gamma},x:\transl{A}}{\transl{M}\,\ol\theta:\transl{B,\ol\theta}}$$
    Notice that we always assume variable names in the same context are unique,
    so we do not need to worry that $\ol\theta$ conflicts with $\Delta$.
    Then, by \tylab{Lam}, we have
    $$\typ{\Delta,\ol\theta;\transl{\Gamma}}{\lambda x^{\transl{A}}.\transl{M}\,\ol\theta:
      \transl{A} \to \transl{B,\ol\theta}}$$
    Finally, by \tylab{PreLam}, we have
    $$\typ{\Delta;\transl{\Gamma}}{\Lambda\ol\theta.\lambda x^{\transl{A}}.\transl{M}\,\ol\theta:
      \forall\ol\theta.\transl{A} \to \transl{B,\ol\theta}}$$
    Our goal follows from $\transl{A\to B} = \forall\ol\theta.\transl{A} \to \transl{B,\ol\theta}$.
    \item[\tylab{App}] Similar to the \tylab{Lam} case.
    Our goal follows from IH, \tylab{App}, \tylab{PreApp} and \tylab{PreLam}.
    \item[\tylab{Record}] Similar to the \tylab{Lam} case.
    Our goal follows from IH, \tylab{Record}, \tylab{PreApp} and \tylab{PreLam}.
    \item[\tylab{Project}] Given the derivation of
    $\typ{\Delta;\Gamma}{M.\ell_j}{A_j}$, by the IH on
    $\typ{\Delta;\Gamma}{M:\Record{\ell_i:A_i}_i}$, we have
    $$\typ{\Delta;\transl{\Gamma}}{\transl{M}:\transl{\Record{\ell_i:A_i}_i}}$$
    Let $ P_i = \Absent (i \neq j), P_j = \Present, \ol{\theta} =
    \transpre{\theta}{A_j}, \ol P_i = \transpre{\Absent}{A_i}$.
    By \tylab{PreApp} and context weakening, we have
    $$
    \typ{\Delta,\ol\theta;\transl{\Gamma}}{
      \transl{M}\ (P_i)_i\ (\ol P_i)_{i<j}\ \ol\theta\ (\ol P_i)_{j<i}
      :\Record{R}}
    $$
    where $\ell_j:\transl{A_j,\ol\theta} \in R$ by the definition of translations
    and the canonical order.
    Then, by \tylab{Proj}, we have
    $$
    \typ{\Delta,\ol\theta;\transl{\Gamma}}{
      (\transl{M}\ (P_i)_i\ (\ol P_i)_{i<j}\ \ol\theta\ (\ol P_i)_{j<i}).\ell_j
      :\transl{A_j,\ol\theta}}
    $$
    Finally, by \tylab{PreLam}, we have
    $$
    \typ{\Delta;\transl{\Gamma}}{
      (\transl{M}\ (P_i)_i\ (\ol P_i)_{i<j}\ \ol\theta\ (\ol P_i)_{j<i}).\ell_j
      :\forall \ol\theta.\transl{A_j,\ol\theta}}
    $$
    Our goal follows from $\transl{A_j} = \forall\ol\theta.\transl{A_j,\ol\theta}$
    where $\ol\theta = \transpre{\theta}{A_j}$.
    \item[\tylab{Upcast}] Given the derivation of $\typ{\Delta;\Gamma}{M\Upcast B:B}$,
    by the IH on $\typ{\Delta;\Gamma}{M:A}$, we have
    $$\typ{\Delta;\transl{\Gamma}}{\transl{M} : \transl{A}}$$
    Let $(\ol\theta, \ol P) = \transpre{\theta}{A\subtype B}$.
    By \tylab{PreApp} and context weakening, we have
    $$\typ{\Delta, \ol\theta;\transl{\Gamma}}{\transl{M}\ \ol P : \transl{A, \ol P}}$$
    Then, by \tylab{PreLam}, we have
    $$\typ{\Delta;\transl{\Gamma}}{\Lambda \ol\theta.\transl{M}\ \ol P : \forall \ol\theta.\transl{A, \ol P}}$$
    By \Cref{lemma:upcast-translation}, we have $\transl{B} = \forall \ol\theta.\transl{A, \ol P}$.
  \end{description}
\end{proof}

\section{The Proof in Section~\ref{SEC:PRENEX}}

In this section, we spell out the proofs that are missing from
\Cref{sec:prenex-polymorphism}.

\subsection{Proof of encoding \STLCRecSubFullCond{2} using \STLCRecPrenex}
\label{app:proof-stlcrecsubfull-stlcrecprenex}

\STLCRecRowPrenexType*

\begin{proof}

  As shown in \Cref{sec:prenex-polymorphism}, we only need to prove
  that $\typ{\Delta;\Gamma}{M:A}$ in \STLCRecSubFullA implies
  $\typ{\transl{\Delta;\Gamma}}{\transl{M}:\tau}$ for some $\tau
  \subtyperow \transl{A}$ in \STLCRecPrenex. We proceed by induction
  on the typing derivations in \STLCRecSubFullA.

  \begin{description}[leftmargin=5em]
    \item[\tylab{Var}] Our goal follows directly from the definition of
    translations.
    \item[\tylab{Lam}] Given the derivation of $\typ{\Delta;\Gamma}{\lambda a^A.M
    : A \to B}$, by the IH on $\typ{\Delta;\Gamma,a:A}{M:B}$, we have
    $$
    \typ{ \Delta, \ftv{\transl{\Gamma}}, \transrowb{\rho_{|\Gamma|}}{A};
          \Gamma, a:\translb{A, \transrowb{\rho_{|\Gamma|}}{A}}}
    {\transl{M} : \tau_B}
    $$
    for some $\tau_B \subtyperow \transl{B}$. Supposing $\tau_B = \forall \ol\rho_B .
    B'$, by \tylab{Inst} and environment weakening, we have
    \footnote{We always assume type variables in type environments have
    different names, and we omit kinds when they are easy to reconstruct from
    the context.}
    $$
    \typ{ \Delta, \ftv{\transl{\Gamma}}, \transrowb{\rho_{|\Gamma|}}{A}, \ol\rho_B;
          \Gamma, a:\translb{A, \transrowb{\rho_{|\Gamma|}}{A}}}
    {\transl{M} : B'}
    $$
    Then, by \tylab{Lam}, we have
    $$
    \typ{ \Delta, \ftv{\transl{\Gamma}}, \transrowb{\rho_{|\Gamma|}}{A}, \ol\rho_B;
          \Gamma}
    {\lambda a.\transl{M} : \translb{A, \transrowb{\rho_{|\Gamma|}}{A}} \to B'}
    $$
    Finally, by \tylab{Gen}, we have
    $$
    \typ{ \Delta, \ftv{\transl{\Gamma}};
          \Gamma}
    {\lambda a.\transl{M} :
      \forall \transrowb{\rho_{|\Gamma|}}{A}\, \ol\rho_B .
      \translb{A, \transrowb{\rho_{|\Gamma|}}{A}} \to B'}
    $$
    By definition, we have $\transla{A\to B} = \forall \ol\rho_1 \ol\rho_2 .
    \translb{A,\ol\rho_1} \to \transla{B, \ol\rho_2} $, where $\ol\rho_1 =
    \transrowb{\rho_1}{A},\ \ol\rho_2 = \transrowa{\rho_2}{B}$. It is easy to
    check that $\forall \transrowb{\rho_{|\Gamma|}}{A}\, \ol\rho_B . \translb{A,
    \transrowb{\rho_{|\Gamma|}}{A}} \to B' \subtyperow \transl{A\to B}$ under
    $\alpha$-renaming.

    \item[\tylab{AppSub}] Given the derivation of $\typ{\Delta;\Gamma}{M\,N:B}$, by
    the IH on $\typ{\Delta;\Gamma}{M:A\to B}$, we have
    $$
    \typ{\transl{\Delta;\Gamma}}{\transl{M} : \tau_1}
    $$
    for some $\tau_1 \subtyperow \transla{A\to B}$.
    By the IH on $\typ{\Delta;\Gamma}{B:A_2}$, we have
    $$
    \typ{\transl{\Delta;\Gamma}}{\transl{N} : \tau_2}
    $$
    for some $\tau_2 \subtyperow \transla{A_2}$.
    We have $\nocontratwice{A\to B}$, which implies
    $\nocontraonce{A}$. Then, $A_2 \subtype A$ gives us $\nocontraonce{A_2}$,
    which further implies that $\transla{A_2} = A_2$ and $\tau_2$ is not
    polymorphic.
    Thus, we have $\tau_2 \subtyperow \transla{A_2} = A_2 \subtype A$.
    Notice that given $A \subtyperow \_ \subtype B$ with $\nocontraonce{B}$, we
    can always construct $\ol R$ with $\translb{B,\ol R} = A$, by
    $\transrowsub{A\subtyperow\subtype B}$ defined as follows.
    \begin{equations}
      \transrowsub{-} &:& (\Type\subtyperow\subtype\Type) \to (\ol\Rows) \\
      \transrowsub{\alpha\subtyperow\subtype\alpha}
      &=& (\cdot, \cdot) \\
      \transrowsub{A\to B\subtyperow\subtype A\to B'}
      &=& \transrowsub{B\subtyperow\subtype B'}\\
      \transrowsub{\Record{(\ell_i:A_i)_i}\subtyperow\subtype \Record{(\ell_j':A_j')}}
      &=& (\ell_k:A_k)_{k\in \{\ell_i\}_i \backslash \{\ell_j'\}_j}\
          \transrowsub{A_i\subtyperow\subtype A_j'}_{\ell_i = \ell_j'}\\
      \transrowsub{\Record{(\ell_i:A_i)_i;\rho}\subtyperow\subtype \Record{(\ell_j':A_j')}}
      &=& ((\ell_k:A_k)_{k\in \{\ell_i\}_i \backslash \{\ell_j'\}_j};\rho)\
          \transrowsub{A_i\subtyperow\subtype A_j'}_{\ell_i = \ell_j'}\\
    \end{equations}
    Let $\ol R = \transrowsub{\tau_2 \subtyperow\subtype A}$. We have
    $\translb{A,\ol R} = \tau_2$.
    Suppose $\tau_1 = \forall\ol\rho.A'\to B'$. By definition, we have
    $\transla{A\to B} = \forall \ol\rho_1 \ol\rho_2 . \translb{A,\ol\rho_1} \to
    \transla{B, \ol\rho_2} $, where $\ol\rho_1 = \transrowb{\rho_1}{A},\
    \ol\rho_2 = \transrowa{\rho_2}{B}$. By $\tau_1\subtyperow\transla{A\to B}$,
    we have $A' = \translb{A,\ol\rho_1}$, $B'\subtyperow \transla{B, \ol\rho_2}$
    and $\ol\rho = \ol\rho_1 \ol\rho_2$ after $\alpha$-renaming.
    By \tylab{Inst} and environment weakening, we have
    $$
    \typ{\Delta,\ftv{\transl{\Gamma}},\ol\rho_2; \transl{\Gamma}}
    {\transl{M} : \translb{A,\ol R} \to B'}
    $$
    Notice that $\translb{A,\ol R} = \tau_2$. We can then apply \tylab{App} and
    environment weakening, which gives us
    $$
    \typ{\Delta,\ftv{\transl{\Gamma}},\ol\rho_2; \transl{\Gamma}}
    {\transl{M}\,\transl{N} : B'}
    $$
    Finally, by \tylab{Gen}, we have
    $$
    \typ{\Delta,\ftv{\transl{\Gamma}}; \transl{\Gamma}}
    {\transl{M}\,\transl{N} : \forall \ol\rho_2. B'}
    $$
    The condition $\forall \ol\rho_2. B' \subtyperow \transla{B}$ holds
    obviously.

    \item[\tylab{Record}] Our goal follows from the IH and a sequence of
    applications of \tylab{Inst}, \tylab{Record}, and \tylab{Gen} similar to the
    previous cases.
    \item[\tylab{Project}] Our goal follows from the IH and a sequence of
    applications of \tylab{Inst}, \tylab{Project}, and \tylab{Gen} similar to
    the previous cases.
    \item[\tylab{Let}]
    Given the derivation of $\typ{\Delta;\Gamma}{\Let\, x=M\,\In\, N}$, by the IH
    on $\typ{\Delta;\Gamma}{M:A}$, we have
    $$
    \typ{\Delta,\ftv{\transl{\Gamma}};\transl{\Gamma}}{\transl{M} : \tau_1}
    $$
    for some $\tau_1 \subtyperow \transla{A}$.
    By the IH on $\typ{\Delta;\Gamma,x:A}{N:B}$, we have
    $$
    \typ{\Delta,\ftv{\transl{\Gamma}};\transl{\Gamma},x:\transla{A}}{\transl{N} : \tau_2}
    $$
    for some $\tau_2 \subtyperow \transla{B}$.
    By another straightforward induction on the typing derivations, we can show
    that $\typ{\Delta;\Gamma,x:\tau_1}{M:\tau_2}$ implies
    $\typ{\Delta;\Gamma,x:\tau_1'}{M:\tau_2'}$ for $\tau_1'\subtyperow \tau_1$
    and $\tau_2'\subtyperow \tau_2$.
    Thus, we have
    $$
    \typ{\Delta,\ftv{\transl{\Gamma}};\transl{\Gamma},x:\tau_1}{\transl{N} : \tau_2'}
    $$
    for some $\tau_2' \subtyperow \tau_2 \subtyperow \transla{B}$.
    Then, by \tylab{Let}, we have
    $$
    \typ{\Delta,\ftv{\transl{\Gamma}};\transl{\Gamma}}{\Let\,x=\transl{M}\,\In\,\transl{N} : \tau_2'}
    $$
    with $\tau_2' \subtyperow \transla{B}$.
  \end{description}
\end{proof}

\section{Proofs of Non-existence Results}

In this section, we give the proofs of the non-existence results of
\Cref{sec:translations} and \Cref{sec:full-subtyping}.

\subsection{Non-Existence of Type-Only Encodings of \STLCRecSub in \STLCRecRow
            and \STLCVarSub in \STLCVarPre}
\label{app:non-existence-swapping}
\globalRecSubRecRow*

\begin{proof}
  We provide three proofs of this theorem, the first one is based on
  the type preservation property, the second one is based on the
  compositionality of translations, and the third one carefully avoids
  using the type preservation and compositionality. The point of
  multiple proofs is to show that the non-existence of the encoding of
  \STLCRecSub in \STLCRecRow is still true even if we relax the
  condition of type preservation and compositionality, which emphasises
  the necessity of the restrictions in \Cref{sec:prenex-polymorphism}.

\noindent\textsc{Proof 1:}

  We assume that $\Delta = \alpha_0$ and $\Gamma = y : \alpha_0$ when environments are omitted.

  Consider $\Record{}$ and $\Record{\ell = y} \Upcast \Record{}$. By the fact that
  $\transl{-}$ is type-only, we have $\transl{\Record{}} = \Lambda \ol\alpha .
  \Record{}$ and $\transl{\Record{\ell=y} \Upcast \Record{}} = \Lambda \ol\beta .
  \transl{\Record{\ell=y}}\ \ol B = \Lambda \ol\beta . (\Lambda \ol\gamma .
  \Record{\ell=\Lambda \ol\gamma' . y})\ \ol B$. Thus, $\transl{\Record{\ell=y}
  \Upcast \Record{}}$ has type $\forall \ol\alpha' . \Record{\ell : \forall
  \ol\gamma'.\alpha_0}$ for some $\ol\alpha'$.

  By type preservation, the translated results should have the same
  type, which implies $\forall \ol\alpha . \Record{} = \forall
  \ol\alpha' . \Record{\ell : \forall \ol\gamma' . \alpha_0}$. Thus,
  we have the equation $\Record{} = \Record{\ell : \forall \ol\gamma'
    . \alpha_0}$, which leads to a contradiction as the right-hand
  side has an extra label $\ell$ and we do not have presence types to
  remove labels.

  Similarly, we can prove the theorem for variants by considering
  $(\ell_1\,y)^{\Variant{\ell_1:\alpha_0; \ell_2:\alpha_0}}$ and
  $(\ell_1\,y)^{\Variant{\ell_1:\alpha_0}}\Upcast \Variant{\ell_1:\alpha_0;
  \ell_2:\alpha_0}$. The key point is that $\ell_2$ is arbitrarily chosen, so
  for the translation of $(\ell_1\,y)^{\Variant{\ell_1:\alpha_0}}$ we cannot
  guarantee that $\ell_2$ appears in its type, and presence polymorphism does not
  give us the ability to add new labels to row types.

\noindent\textsc{Proof 2:}

  We assume that $\Delta = \alpha_0$ and $\Gamma = y : \alpha_0$ when environments are omitted.

  Consider the function application $M\ N$ where $M = \lambda
  x^{\Record{}}.\Record{}$ and $N = \Record{\ell=y}\Upcast\Record{}$. By the
  type-only property, we have
  $$
  \transl{\lambda x^{\Record{}}.\Record{}} =
    \Lambda \ol\alpha_1. \lambda x^{A_1}. \Lambda \ol\beta_1. \Record{}\ B_1
  $$
  for some $\ol\alpha_1, \ol\beta_1, A_1$ and $B_1$.
  By \textsc{Proof 1}, we have
  $$
  \transl{\Record{\ell=y}\Upcast\Record{}} =
    \Lambda \ol\alpha_2. \Record{\ell=\Lambda \ol\beta_2. y}
  $$
  for some $\ol\alpha_2$ and $\ol\beta_2$.
  Then, by the type-only property, we have
  $$
  \transl{(\lambda x^{\Record{}}.\Record{})\ (\Record{\ell=y}\Upcast\Record{})} =
  \Lambda \ol\alpha. (\transl{\lambda x^{\Record{}}.\Record{}}\ \ol A)\
                     (\Lambda \ol\beta. \transl{\Record{\ell=y}\Upcast\Record{}}\ \ol B)\ \ol C
  $$
  for some $\ol\alpha, \ol\beta, \ol A, \ol B$ and $\ol C$.
  As we only have row polymorphism, the type application of $\ol B$ cannot
  remove the label $\ell$ from the type of $\transl{N}$. Since $\ell$ is
  arbitrarily chosen, it can neither be already in the type of $\transl{M}$. By
  definition, a compositional translation can only use the type information of
  $M$ and $N$, which contains nothing about the label $\ell$. Thus, the label
  $\ell$ can neither be in $\ol A$, which further implies that the $\transl{M\ N}$
  is not well-typed as the \tylab{App} must fail. Contradiction.

\noindent\textsc{Proof 3:}

  Consider three functions $f_1 = \lambda x^{\Record{}} . x$, $f_2 = \lambda
  x^{\Record{}} . \Record{}$, and $g = \lambda
  f^{\Record{}\to\Record{}}.\Record{}$. By the type-only property, we have
  \[\ba{lll}
  \transl{f_1} &= \Lambda \ol\alpha_1 . \lambda x^{A_1} . \Lambda \ol\beta_1 . x\ \ol B_1
  &: \forall\ol\alpha_1 . A_1 \to \forall\ol\beta_1 . A_1'\\
  \transl{f_2} &=
  \Lambda \ol\alpha_2 . \lambda x^{A_2} . \Lambda \ol\beta_2 . \Record{}
  &: \forall\ol\alpha_2 . A_2 \to \forall\ol\beta_2 . \Record{} \\
  \transl{g} &= \Lambda \ol\alpha_3 . \lambda f^{A_3} . \Lambda \ol\beta_3 . \Record{}
  &: \forall\ol\alpha_3 . A_3 \to \forall\ol\beta_3 . \Record{}
  \ea\]
  where $A_1' = A_1''[\ol B_1/\ol\alpha_1']$ and $A_1 = \forall\ol\alpha_1'.A_1''$.

  If there is some variable $\alpha_1' \in \ol\alpha_1$ appears in $A_1$, then
  it must also appear in $A_1'$ as we have no way to remove it by the
  substitution $[\ol B_1/\ol\alpha']$. Thus, $A_3$ should be of shape $\forall
  \ol\alpha . A \to \forall\ol\beta. A'$ where $A'$ contains some variable
  $\alpha'\in\ol\alpha$. However, this contradicts with the fact that $g$ can be
  applied to $f_2$, because the type $\Record{}$ in the type of $\transl{f_2}$
  cannot contain any variable in $\ol\alpha_2$. Hence, we can conclude that
  $A_1$ cannot contain any variable in $\ol\alpha_1$, which will lead to
  contradiction when we consider the translation of $f_1\
  (\Record{\ell=1}\Upcast\Record{})$ because we can neither add the label $\ell$
  in the type $A_1$, nor remove it in the type of
  $\transl{\Record{\ell=1}\Upcast\Record{}}$.

\end{proof}

\subsection{Non-Existence of Type-Only Encodings of \STLCVarSubCo in \STLCVarRowPre}
\label{app:non-existence-stlcvarsubco}
\globalVarSubCo*

\begin{proof}
We assume that $\Delta = \alpha_0$ and $\Gamma = y : \alpha_0$ when environments are omitted.
For simplicity, we omit the type of labels in variant types if it is $\alpha_0$.

By the fact that $\transl{-}$ is type-only, we have:
\begin{itemize}
\item
  $(\ell\ y)^{\Variant{\ell}}$ is translated to
  $\Lambda \ol\alpha. (\ell\ (\Lambda \ol\beta. y))^{\Variant{R}}$
  where $(\ell:\forall \ol\beta . \alpha_0) \in R$.
  By type preservation, we have $\transl{\Variant{\ell}} = \forall \ol\alpha . \Variant{R}$.
\item
  $(\ell\ y)^{\Variant{\ell}} \Upcast \Variant{\ell; \ell'}$ is translated to
  $\Lambda \ol\tau. \transl{(\ell\ y)^{\Variant{\ell}}}\ \ol T
  =\Lambda \ol\tau. (\Lambda \ol\alpha. (\ell\ (\Lambda \ol\beta. y))^{\Variant{R}})\ \ol T$
  where $(\ell:\forall\beta . \alpha_0) \in R$.
  By type preservation, we have $\transl{\Variant{\ell;\ell'}}
    = \refa{\forall \ol\tau\ \ol\alpha_2' . \Variant{R}[\ol T/\ol\alpha_1']}$
    where $\ol\alpha = \ol\alpha_1'\ \ol\alpha_2'$.
\item
  $(\ell'\ y)^{\Variant{\ell; \ell'}}$ is translated to
  $\Lambda \ol\alpha''. (\ell'\ (\Lambda \ol\beta''. y))^{\Variant{R''}}$
  where $\ell' \in R''$.
  By symmetry, we also have $\ell \in R''$.
  By type preservation, we have $\transl{\Variant{\ell; \ell'}}
    = \refb{\forall \ol\alpha'' . \Variant{R''}}$.
\end{itemize}

By the fact that $\refa{} = \refb{}$ and $\ell'$ can be an arbitrary label, we
can conclude that $R$ has a row variable $\rho_R$ bound in $\ol\alpha_1'$ which is
instantiated to the $\ell'$ label in $R'$ by the substitution $[\ol A/\ol
\alpha_1']$. Thus, we have \refc{$R = (\ell:\forall\ol\beta .
\alpha_0);\dots;\rho_R$ where $\rho_R \in \ol\alpha$}.

Then, consider a nested variant $M =
(\ell\,(\ell\,y)^{\Variant{\ell}})^{\Variant{\ell: \Variant{\ell}}}$.
Because $\transl{-}$ is type-only, we have
$$
  \transl{M} = \Lambda \ol\alpha' . (\ell\
  (\Lambda \ol\beta'.  (\Lambda \ol\alpha. (\ell\ (\Lambda \ol\beta. y))^{\Variant{R}})
  \ \ol A))^{\Variant{R'}}
$$
By \refc{}, $\transl{M}$ has type $\forall \ol\alpha'. \Variant{R'}
= \forall\ol\alpha'. \Variant{(\ell: \forall \ol\beta'\ \ol\alpha_2. \Variant{R}[\ol A/\ol\alpha_1]);
                              \dots}$,
where $\ol\alpha = \ol\alpha_1\ \ol\alpha_2$ and $\rho_R \in \ol\alpha$.

We proceed by showing the contradiction that $\rho_R$ can neither be in
$\alpha_1$ nor $\alpha_2$.

\begin{itemize}
  \item $\rho_R \in \ol\alpha_2$.
  Consider $M' = (\ell\,(\ell\,y)^{\Variant{\ell;\ell'}})^{\Variant{\ell:
  \Variant{\ell;\ell'}}}$ of type $\Variant{\ell:\Variant{\ell;\ell'}}$. By an
  analysis similar to $M$, it is easy to show that $\transl{M'}$ has type $
  \forall\ol\mu. \Variant{(\ell: \forall \ol\nu. \Variant{R_1});\dots}$
  where $\ell \in R_1$ and $\ell' \in R_1$.

  Then, consider $M\Upcast \Variant{\ell:\Variant{\ell;\ell'}}$ of the same type
  $\Variant{\ell:\Variant{\ell;\ell'}}$ as $M'$ which is translated to $\Lambda
  \ol\gamma. \transl{M}\ \ol B$. By type preservation, the translation of $M'$ and
  $M\Upcast \Variant{\ell:\Variant{\ell;\ell'}}$ should have the same type,
  which means $R$ should contain label $\ell'$ after the type application of
  $B$. However, because $\rho_R \in \ol\alpha_2$, we cannot instantiate $\rho_R$
  to contain $\ell'$. Besides, because $\ell'$ is arbitrarily chosen, it cannot
  already exist in $R$. Hence, $\rho_R \not\in \ol\alpha_2$.

  \item $\rho_R \in \ol\alpha_1$.
  Consider $\Case\ M\ \{\ell\ x \mapsto x\Upcast \Variant{\ell;\ell'}\}$ of type
  $\Variant{\ell;\ell'}$. By the type-only condition, it is translated to
  \refd{$\Lambda \ol\gamma. \Case\ (\transl{M}\ \ol C) \{\ell\ x\mapsto \Lambda
  \ol\delta. x\ \ol D\}$}. By \refb{} we have $\transl{\Variant{\ell;\ell'}} =
  \forall \ol\alpha''. \Variant{R''}$ where $\ell \in R''$ and $\ell' \in R''$.
  However, for \refd{}, by the fact that $\rho_R \in \ol\alpha_1$ and
  $\ol\alpha_1$ are substituted by $\ol A$, the new row variable of the inner
  variant of $M$ can only be bound in $\ol\alpha'$. Thus, in the case clause of
  $\ell$, we cannot extend the variant type to contain $\ell'$ by type
  application of $\ol D$.
  Besides, because $\ell'$ is arbitrarily chosen and the translation
  is compositional, it can neither be already in the variant type or
  be introduced by the type application of $\ol C$.
  Hence, $\rho_R \not\in \ol\alpha_1$.
\end{itemize}

Finally, by contradiction, the translation $\transl{-}$ does not exist.

\end{proof}

\subsection{Non-Existence of Type-Only Encodings of Full Subtyping}
\label{app:non-existence-full}

\globalSubFull*

\begin{proof}

Consider two functions $f_1 = \lambda x^{\Record{}} . x$ and $f_2 = \lambda
x^{\Record{}} . \Record{}$ of the same type $\Record{} \to \Record{}$. By the
type-only property, we have
\begin{align*}
\transl{f_1} &= \Lambda \ol\alpha_1 . \lambda
x^{A_1} . \Lambda \ol\beta_1 . x\ \ol B_1 \\
\transl{f_2} &= \Lambda
\ol\alpha_2 . \lambda x^{A_2} . \Lambda \ol\beta_2 . \transl{\Record{}}\ \ol B_2
= \Lambda \ol\alpha_2 . \lambda x^{A_2} . \Lambda \ol\beta_2 . (\Lambda
\ol{\gamma} . \Record{})\ \ol B_2
\end{align*}
By type preservation, they have the same type, which implies $x\ \ol
B_1$ and $(\Lambda \ol{\gamma} . \Record{})\ \ol B_2$ have the same
type.
We can further conclude that $A_1$ must be able to be instantiated to
the empty record type $\Record{}$.
Thus, the only way to have type variables bound by
$\Lambda\ol\alpha_1$ in $A_1$ is to put them in the types of labels
which are instantiated to be absent by the type application $x\ \ol
B_1$.

Then, consider another two functions $g_1 = f_1 \Upcast
(\Record{\ell:\Record{}} \to \Record{})$ and $g_2 = \lambda
x^{\Record{\ell:\Record{}}} . (x.\ell)$ of the same type
$\Record{\ell:\Record{}} \to \Record{}$. By the type-only property, we
have
\begin{align*}
\transl{g_1} &= \Lambda \ol\alpha . \transl{f_1}\ \ol A = \Lambda \ol\alpha .
  (\Lambda \ol\alpha_1 . \lambda x^{A_1} . \Lambda \ol\beta_1 . x\ \ol B_1)\ \ol A \\
\transl{g_2} &= \Lambda \ol\alpha' . \lambda x^{A'}.\Lambda \ol\beta' . \transl{x.\ell}\ \ol B'
= \Lambda \ol\alpha' . \lambda x^{A'}.\Lambda \ol\beta' . (\Lambda \ol\gamma'. (x\ \ol C).\ell\ \ol D)\ \ol B'
\end{align*}
By type preservation, $\transl{g_1}$ and $\transl{g_2}$ have the same
type. The $(x\ \ol C).l$ in $\transl{g_2}$ implies that $x$ has a
polymorphic record type with label $\ell$.
Because $\ell$ is arbitrarily chosen, the only way to introduce $\ell$
in the parameter type of $\transl{g_1}$ is by the type application of
$\ol A$.
However, we also have that type variables in $\ol \alpha_1$ can only
appear in the types of labels in $A_1$, which means we cannot
instantiate $A_1$ to be a polymorphic record type with the label
$\ell$ by the type application of $\ol A$.
Contradiction.

\end{proof}

\end{document}